\def\BibTeX{{\rm B\kern-.05em{\sc i\kern-.025em b}\kern-.08em
    T\kern-.1667em\lower.7ex\hbox{E}\kern-.125emX}}
\pgfplotsset{compat=1.18}
\newtheorem{corollary}{Corollary}[section]
\newtheorem{lemma}{Lemma}[section]
\newtheorem{lemmaappendix}{Lemma}[section]
\newtheorem{definitionappendix}{Definition}[section]
\newtheorem{proposition}{Proposition}[section]
\newtheorem{observation}{Key Observation}[section]
\newtheorem{definition}{Definition}[section]
\begin{document}
\receiveddate{XX Month, XXXX}
\reviseddate{XX Month, XXXX}
\accepteddate{XX Month, XXXX}
\publisheddate{XX Month, XXXX}
\currentdate{XX Month, XXXX}
\doiinfo{OJCOMS.2022.1234567}

\title{Optimal Transmit Antenna Deployment and Power Allocation for Wireless Power Supply in an Indoor Space}

\author{KENNETH M. MAYER\orcidlink{0000-0003-3292-6038} (Graduate Student Member, IEEE),\\ LAURA COTTATELLUCCI\orcidlink{0000-0002-6641-8579} (Member, IEEE), AND ROBERT SCHOBER\orcidlink{0000-0002-6420-4884}
(Fellow, IEEE)}
\affil{Institute for Digital Communications, Department of Electrical Engineering, Friedrich-Alexander University Erlangen-Nuremberg, Germany.}
\corresp{CORRESPONDING AUTHOR: Kenneth M. Mayer (e-mail: kenneth.m.mayer@fau.de).}
\authornote{This work was (partly) funded by the Deutsche Forschungsgemeinschaft (DFG, German Research Foundation) – SFB 1483 – Project-ID 442419336, EmpkinS.}
\markboth{Optimal Transmit Antenna Deployment and Power Allocation for Wireless Power Supply in an Indoor Space}{Mayer \textit{et al.}}

\begin{abstract}
As Internet of Things (IoT) devices proliferate, sustainable methods for powering them are becoming indispensable. 
The wireless provision of power enables battery-free operation and is crucial for complying with weight and size restrictions. 
For the energy harvesting {\color{black}(EH)} components of these devices to be small, a high operating frequency is necessary. 
In conjunction with a large {\color{black}transmit} antenna, the receivers may be located in the radiating near-field (Fresnel) region, e.g., in indoor scenarios.
In this paper, we propose a wireless power transfer {\color{black} (WPT)} system ensuring reliable supply of power to an arbitrary number of mobile, low-power, and single-antenna receivers, {\color{black}whose locations} in a three-dimensional cuboid room {\color{black}are unknown}.
{\color{black}A max-min optimisation problem is formulated to determine the optimal transmit power distribution.}
{\color{black}
We rigorously prove} that the optimal transmit power distribution’s support has a lower dimensionality than its domain and thus, the employment of a continuous aperture antenna, utilised in Holographic MIMO (HMIMO), is unnecessary in the context of the considered WPT problem.
Indeed, deploying a discrete transmit antenna architecture, i.e., a transmit antenna array, is sufficient and our proposed solution provides the optimal transmit antenna deployment and power allocation.
Moreover, for a one-dimensional transmit antenna {\color{black}architecture, a finite number of transmit antennas is proven to be optimal.}
The proposed optimal solution is validated through computer simulations. 
Our simulation results indicate that the optimal {\color{black}transmit antenna architecture} requires a finite number of transmit antennas and depends on the geometry of the environment and the dimensionality of the transmit antenna array. 
The robustness of the solution, which is obtained under the assumption that a line-of-sight (LoS) link exists between the transmitter and receiver, is assessed in an isotropic scattering environment containing a strong LoS component.
\end{abstract}

\begin{IEEEkeywords}
Wireless power transfer, optimisation, radiating near-field region.
\end{IEEEkeywords}


\maketitle

\section{INTRODUCTION}\label{Section: Introduction}
\IEEEPARstart{T}{he} number of Internet of Things (IoT) devices continues to grow rapidly and is expected to exceed 43 billion devices globally by 2023 \cite{Rahmani23}. 
Among those are wearable IoT devices, which are primarily intended for collecting data, e.g., for patient monitoring, treatment or rehabilitation \cite{Dian20,VERMA2022100153}.
Wearable IoT devices are limited in terms of their weight and size and have to satisfy aesthetic requirements \cite{Metcalf16}. 
For example, these smart devices are not only designed as wearable accessories, but also as implants or tattoos and they may be embedded into textiles \cite{Dian20,VERMA2022100153}. 

Sustainability is a central theme for designing next-generation IoT technology and includes powering IoT devices, such as wearable systems, health monitoring systems, and implants, sustainably \cite{Rahmani23}.
Currently, batteries are considered a hindrance in IoT technology due to the restrictions they impose on devices in terms of, e.g., size, weight, cost, and the need for replacement \cite{Wagih23,Wagih20,Rahmani23}.
Alternative technologies, such as wireless power transfer (WPT) and energy harvesting (EH), facilitate the operation of battery-free IoT devices \cite{Rahmani23,Wagih23,Wagih20}. 
A reliable supply of power to IoT devices {\color{black}can be} ensured through the combination of available ambient power sources, such as ambient light, and dedicated power sources \cite{Rahmani23,Wagih23}. 
Ambient sources alone can not ensure {\color{black}reliable powering} as they are often uncontrollable. 
In contrast, dedicated power sources, such as dedicated electromagnetic (EM) radiation, offer controllability and thus, can {\color{black}ensure that} IoT devices  {\color{black}are powered} adequately \cite{Wagih23}.

{\color{black}
The acquisition and use of channel state information (CSI) plays a fundamental role in wireless information and power transfer.
When the acquisition of CSI is feasible, beamforming-based approaches for directing energy beams towards a multitude of possibly mobile receiver devices are possible.
However, the resulting WPT system may be highly complex \cite{Rahmani23}, for example, the WPT system proposed in \cite{Lopez22} utilises CSI acquired through uplink training to power devices in the environment.}
{\color{black} Often, IoT devices are mobile which makes the acquisition of reliable CSI at the transmitter challenging. 
Moreover, certain types of IoT devices, such as wirelessly powered sensor nodes, may lack the communication capabilities to broadcast pilot symbols, which makes serving EH nodes with power a challenge since focusing energy beams towards them is not possible in a reliable manner due to the missing CSI at the energy transmitter.}
{\color{black}In this situation,} there is a significant risk that the energy beams are misdirected and fail to reach the receivers, leading to an insufficient supply of power to these devices.
To ensure the reliable provision of power to an arbitrary number of devices in an environment, the WPT system must be capable of supplying the served environment with a minimum guaranteed level of power at any location without relying on CSI \cite{Lopez21,Lopez21-2}. 


{\color{black} 
An essential component for providing a ubiquitous supply of power in an indoor environment is the transmit antenna architecture.
Novel antenna architecture designs are being discussed as key enabling technologies for future wireless systems.
Advancements beyond massive multiple-input multiple-output (mMIMO) systems include increasing the physical size of antenna arrays and increasing the number of transmit antennas while decreasing the inter-element spacing \cite{Wang24}.
In fact, through the incorporation of novel meta-materials, an approximately infinitesimally small inter-element spacing is achieved, allowing to pack almost infinitely many radiating elements into an antenna array, thereby forming a spatially continuous antenna aperture \cite{Wang24}.
The technology revolving around these continuous aperture antennas has been coined holographic MIMO (HMIMO) and allows for, e.g., holographic beamforming \cite{Bjornson19, Huang20}.
HMIMO is being considered as an important technology, e.g., for future 6G wireless communication systems \cite{Bjornson19}.
Moreover, the HMIMO concept is envisioned to offer a seamless and visually appealing integration of antenna arrays into the surfaces of an environment including walls, the ceiling, and even windows \cite{Bjornson19, Huang20}.
While the benefits of HMIMO have been demonstrated in the context of communications, the importance of HMIMO for WPT remains unclear.}

{\color{black} 
    Indoor wireless systems have been ubiquitous for decades and thus, methods for accurately modelling the wireless channel have been studied extensively.
    In particular, the frequency range of 2.4 GHz to 5 GHz has received significant attention and thus, deterministic, statistical, and site-specific channel models have been proposed, which differ in accuracy and computational efficiency 
    \cite{Anusuya08}.}
{\color{black} In \cite{Abdulwahid19}, the propagation characteristics of indoor wireless systems operating in the C- and mmWave frequency bands have been investigated for line-of-sight (LoS) and non-line-of-sight (NLoS) scenarios via a software-based ray tracing model. 
Hereby, the increase in attenuation (path loss) upon increasing the frequency was found to be significantly more severe for the NLoS scenario.
Consequently, in the mmWave frequency band, a strong LoS link is required for reliable operation of indoor wireless systems, and due to the severely attenuated scattered components, a LoS channel model is typically adopted \cite{Liu16,Perovic20}.}
{\color{black}We note that} high-frequency systems are attractive for short-range applications, such as indoor WPT \cite{Zhanga}. The short wavelength allows the EH antenna of a device to be small \cite{Wagih20}. For example, when utilising mmWave frequency bands, mm-scale antennas can be employed \cite{Wagih20}. Consequently, the overall size of the devices can be reduced, allowing them to be less obtrusive and easier to integrate, e.g., into fabrics and textiles \cite{Wagih23,Rahmani23}.
Beyond that, a high carrier frequency in conjunction with large antenna {\color{black}architectures, such as, the continuous aperture antennas used in HMIMO,} leads to a significant increase in the radiating near-field (Fresnel) region. 
Although the Fresnel region could typically be neglected in conventional wireless systems, it has to be considered in, e.g., indoor scenarios, for high-frequency systems. Within the Fresnel region, the spherical nature of the EM wavefronts is non-negligible. Adequately capturing the EM wave characteristics in the Fresnel region demands modelling the wireless channel according to the spherical wavefront model (SWM), as the approximation by the planar wavefront model (PWM), typically utilised in the far-field region, does not hold \cite{Zhang2022, Hu2018, Dardari2021}.



Considering the rapidly increasing number of IoT devices \cite{Dian20,VERMA2022100153,Rahmani23}, the development of a WPT system, capable of reliably supplying them with power is an important and timely challenge. {\color{black} Emerging technologies, such as HMIMO systems operating in the mmWave frequency band, may play a key role in addressing this problem.}
{\color{black} Related works have studied the optimisation of the deployment of transmit antennas with the goal of supplying devices with power, e.g., in \cite{Rosabal21,Jia21,Zhou23,Tabassum15,Amirhossein23}.}
{\color{black}
{\color{black}In \cite{Rosabal21,Jia21,Zhou23,Tabassum15}, operation in the far-field regime is assumed, statistical channel models are adopted, and a coverage area, significantly larger than an indoor space, is considered.}
For these systems, statistical metrics, such as, outage probability, are utilised for optimising or characterising the positions of power beacons.
On the other hand, as discussed previously, a high operating frequency, such as, in the mmWave frequency band, is affordable and attractive for indoor WPT systems, where a deterministic LoS channel model appropriately characterises the propagation of EM waves. 
As a consequence of the deterministic channel model, the techniques proposed in \cite{Rosabal21,Jia21,Zhou23,Tabassum15} are not applicable and an alternative approach for optimising the deployment of the transmit antennas is required due to the absence of statistical effects.
Furthermore, the authors of \cite{Amirhossein23} optimise the location of a finite number of radio stripe elements through geometric programming to maximise the minimum power received by users in indoor energy hotspots in the radiating near-field region.
{\color{black}A continuous transmit antenna architecture, as in HMIMO, is precluded as a potential deployment solution when restricting the analysis to a pre-defined number of transmit antennas}.
Consequently, this leaves the question unanswered whether the novel continuous transmit architecture is beneficial in the context of offering a ubiquitous supply of power in an indoor environment.
We aim to address this open point in this work.}

In this paper, we propose a multi-user multiple-input single-output (MU-MISO) WPT system, which supplies an entire indoor space with a guaranteed minimum level of power. 
The {\color{black} transmit antenna architecture} of the WPT system {\color{black}is realised via} a {\color{black} continuous aperture antenna}, which is mounted on the ceiling of a three-dimensional cuboid room.
{\color{black} The proposed WPT system can serve an arbitrary number of single-antenna receivers.}
Assuming free space LoS propagation of the spherical EM wavefronts, we formulate an optimisation problem based on the distance information contained in the amplitude variations of the wireless links between every possible {\color{black}infinitesimally small radiating element and every possible receiver position to determine the optimal distribution of power across the continuous aperture antenna of the transmitter.} 
The objective is to maximise the received power at the worst possible receiver positions in the indoor space. Thus, the proposed solution ensures a reliable supply of power to the receivers, which is a relevant criterion when there is no prior knowledge on the receivers' locations in the room.
{\color{black} We establish a connection between the optimal distribution of transmit power and the optimal transmit antenna architecture to study the relevance of emerging HMIMO technology to WPT in an indoor environment.
Specifically, employing a continuous aperture antenna would be beneficial if the optimal transmit power distribution’s support was found to have the same dimensionality as the distribution’s domain.
However, if the optimal transmit power distribution’s support has a lower dimensionality than its domain, HMIMO technology was not beneficial.
}
{\color{black} The} adopted continuous {\color{black} aperture model} allows us to capture the full range of {\color{black}transmit antenna architectures} ranging from a single transmit antenna to a continuous {\color{black} surface} of radiating elements as in HMIMO, thereby enabling a complete analysis of the problem.
{\color{black} In contrast, the relevance of HMIMO technology to the considered problem cannot be adequately studied based on a discrete antenna model.}
The main contributions of this paper can be summarised as follows: 
\begin{itemize}
\item {\color{black} We establish a method for investigating the relevance of continuous aperture antennas in the context of offering a ubiquitous supply of power in an indoor environment.
{\color{black}To this end, we formulate} an optimisation problem for the distribution of a finite amount of transmit power among an uncountable number of radiating elements with the objective of maximising the received signal power at the worst possible receiver position in a given environment. }
\item We prove that a continuous {\color{black}aperture} antenna is unnecessary {\color{black}for the considered problem} since our analysis reveals that the support of the optimal {\color{black} distribution} of transmit power is a set whose closure has empty interior and Lebesgue measure zero{\color{black}, i.e., the support is shown to have a lower dimensionality than the domain of the transmit power distribution.} 
The support of the optimal {\color{black} distribution} corresponds to the optimal positions where transmit antennas should be deployed and {\color{black} we conjecture that the optimal transmit antenna architecture can be realised through a discrete antenna array}.
In case the transmit antenna {\color{black} architecture} is constrained to be one-dimensional, a stronger statement is possible and the optimal number of radiating elements is shown to be finite analytically.
\item Through simulation we verify that the optimal {\color{black}transmit power distribution's support is characterised by a finite number of points, and thus, the optimal deployment consists of a finite} number of transmit antennas for {\color{black}both} one-dimensional and two-dimensional transmit antenna {\color{black} architectures}. 
In the investigated environments, the optimal number {\color{black} of transmit antennas} is found to be surprisingly low.
\item To validate our findings, the performance of the proposed optimal solution is evaluated against other power allocation schemes. 
{\color{black} Specifically, we investigate the achievable performance gain of modelling the system in the Fresnel region 
and we observe that the performance gap compared to the optimal distribution for the far-field operating regime decreases when the worst possible receiver positions in the environment are almost in the far-field. 
We also show that the loss in performance is low when removing those transmit antennas from the optimal transmit antenna array that are allocated a small amount of transmit power.}
\item {\color{black}The robustness of the proposed solution is investigated by adding small-scale fading to the LoS link, which leads to a performance loss of less than $2\%$ in the considered environments.}
\end{itemize}
   
The paper is organised in the following manner. 
In Section \ref{Section: System Model}, the system model is introduced.
In Section \ref{section: Methods}, the proposed optimisation problem is formulated and the structure of the optimal {\color{black}distribution} of power is determined. 
As a result, {\color{black}we prove that a finite number of transmit antennas is optimal for the one-dimensional problem and} we conjecture that {\color{black}a finite number of transmit antennas is also optimal regarding the two-dimensional problem.}
{\color{black}In Section \ref{Section: System Discretisation}, we discretise the problem for numerical simulation and, in support of our conjecture, we verify that the optimal number of transmit antennas remains unchanged while increasing the granularity of the sampling gird, i.e., for finer levels of discretisation.}
In Section \ref{Section: Simulation Results}, we provide a performance and sensitivity analysis of our approach, and Section \ref{Section: Conclusion} concludes this paper.

\section{SYSTEM MODEL}\label{Section: System Model}
The {\color{black}WPT} system considered in this paper consists of a transmit antenna {\color{black}architecture employing a continuous aperture antenna} and an arbitrary{\color{black}\footnote{{\color{black} In beamforming-based approaches, such as \cite{Lopez22}, the number of supported receivers is limited by the number of transmit antennas. In our work, we do not face this restriction since we do not rely on beamforming.}}} number of single-antenna receivers. 
Throughout this paper, the terms "transmit antenna" and "radiating element" are used synonymously.
An illustration of the considered system is given in Fig. \ref{fig: System Model Overview}, where the transmit antenna {\color{black}architecture} is incorporated in the ceiling of a room.

\begin{figure}[t]
    \centering
    \hspace*{-4mm}\scalebox{0.8}{

\definecolor{FAU_BLAU}{RGB}{0,32,96}
\definecolor{ORANGE}{RGB}{255,165,0}
\tdplotsetmaincoords{70}{130}
\begin{tikzpicture}[
tdplot_main_coords,
cube/.style={very thick,black},
cube_continued/.style={thick,black},
indicator/.style={<->,thick,arrows = {Stealth[length=10pt, inset=2pt]-Stealth[length=10pt, inset=2pt]},thick,black},
grid/.style={very thin,gray},
axis_outside/.style={->,black,thick},
axis_inside/.style={very thin,opacity=0.33}
]
\coordinate (wall_label) at (-4,2,0);
\coordinate (room_label) at (-3,1,1);
\coordinate (ceiling_label) at (-1,2,4);
\coordinate (obstructions) at (2,-1.5,4);
\coordinate (origin) at (2.5,2.5,4);
\coordinate (RX) at (2.75,2,1);

\draw[axis_inside] (0,2.5,4) -- (5,2.5,4) ;
\draw[axis_outside] (5,2.5,4) -- (6,2.5,4) node[anchor=north]{$x$};
\draw[axis_inside] (2.5,0,4) -- (2.5,5,4) ;
\draw[axis_outside] (2.5,5,4) -- (2.5,6,4) node[anchor=west]{$z$};
\draw[axis_inside] (2.5,2.5,4) -- (2.5,2.5,0) ;
\draw[axis_outside] (2.5,2.5,0) -- (2.5,2.5,-1) node[anchor=south east]{$y$};

\draw[indicator] (5.6,0,0) -- (5.6,5,0) node[anchor=east,midway,yshift=-1mm]{$L_z$} ;
\draw[cube_continued] (5,0,0) -- (6,0,0) ;
\draw[cube_continued] (5,5,0) -- (6,5,0) ;

\draw[indicator] (0,5.6,0) -- (5,5.6,0) node[anchor=west,midway,yshift=-1mm]{$L_x$} ;
\draw[cube_continued] (5,5,0) -- (5,6,0) ;
\draw[cube_continued] (0,5,0) -- (0,6,0) ;

\draw[indicator] (5.6,0,0) -- (5.6,0,4) node[anchor=east,midway]{$L_y$} ;
\draw[cube_continued] (5,0,4) -- (6,0,4) ;

\draw[cube] (0,0,0) -- (0,5,0) -- (5,5,0) -- (5,0,0) -- cycle;
\fill[brown,cube,opacity=0.08] (0,0,0) -- (0,5,0) -- (5,5,0) -- (5,0,0) -- cycle;
\draw[cube] (0,0,4) -- (0,5,4) -- (5,5,4) -- (5,0,4) -- cycle;
\draw[cube] (1,1,4) -- (1,4,4) -- (4,4,4) -- (4,1,4) -- cycle;
\fill[orange,opacity=0.2] (1,1,4) -- (1,4,4) -- (4,4,4) -- (4,1,4) -- cycle;

\fill[orange,opacity=0.08] (1,1,4) -- (RX) -- (4,1,4) -- cycle;
\fill[orange,opacity=0.08] (4,1,4) -- (RX) -- (4,4,4) -- cycle;
\fill[orange,opacity=0.08] (4,4,4) -- (RX) -- (1,4,4) -- cycle;
\fill[orange,opacity=0.08] (1,4,4) -- (RX) -- (1,1,4) -- cycle;

\draw[cube] (0,0,0) -- (0,0,4);
\draw[cube] (0,5,0) -- (0,5,4);
\draw[cube] (5,0,0) -- (5,0,4);
\draw[cube] (5,5,0) -- (5,5,4);

\draw[cube] (0,2.5,0) -- (0,3.5,0) -- (0,3.5,2) -- (0,2.5,2) -- cycle;
\fill[brown,cube,opacity=0.6] (0,2.5,0) -- (0,3.5,0) -- (0,3.5,2) -- (0,2.5,2) -- cycle;
\fill[yellow] (0,2.7,1) circle (1pt);

\draw[cube] (3.8,0,0) -- (4.8,0,0) -- (4.8,0,2) -- (3.8,0,2) -- cycle;
\fill[brown,cube,opacity=0.6] (3.8,0,0) -- (4.8,0,0) -- (4.8,0,2) -- (3.8,0,2) -- cycle;
\fill[yellow] (4.6,0,1) circle (1pt);

\draw[cube] (1.5,0,2.7) -- (1.5,0,3.5) -- (3.5,0,3.5) -- (3.5,0,2.7) -- cycle;
\fill[blue,opacity=0.2] (1.5,0,2.7) -- (1.5,0,3.5) -- (3.5,0,3.5) -- (3.5,0,2.7) -- cycle;
\draw[cube] (2.5,0,2.7) -- (2.5,0,3.5);

\draw[cube] (0,1,0) -- (0,1.5,0) -- (0,1.5,3) -- (0,1,3) -- cycle;
\fill[blue,opacity=0.2] (0,1,0) -- (0,1.5,0) -- (0,1.5,3) -- (0,1,3) -- cycle;

\fill (origin) circle (1pt);
\fill[orange,opacity=0.9] (RX) circle (3pt) node[below left] {RX};

\draw[o-] (-1,3.5,1) -- (wall_label) node[right] {walls};
\draw[o-] (1.5,6,4) -- (wall_label) ;
\draw[o-] (-1.25,2.5,-1) -- (wall_label) ;
\draw[o-] (-0.5,1.2,3) -- (ceiling_label) node[right] { continuous aperture antenna};
\draw[o-] (0.35,1.7,2) -- (obstructions) node[above] {obstructions};
\draw[o-] (4.25,0,1.7) -- (obstructions);
\draw[o-] (2.7,0,3) -- (obstructions) ;

\end{tikzpicture}}
    \caption{Illustration of a three-dimensional environment with a two-dimensional transmit antenna {\color{black}architecture} consisting of a continuum of infinitesimally small {\color{black}radiating} elements located in a subsection of the $x$-$z$ plane (orange). The LoS connections of the transmit antenna elements to one receiver (RX) are indicated in orange.}
    \label{fig: System Model Overview}
\end{figure}
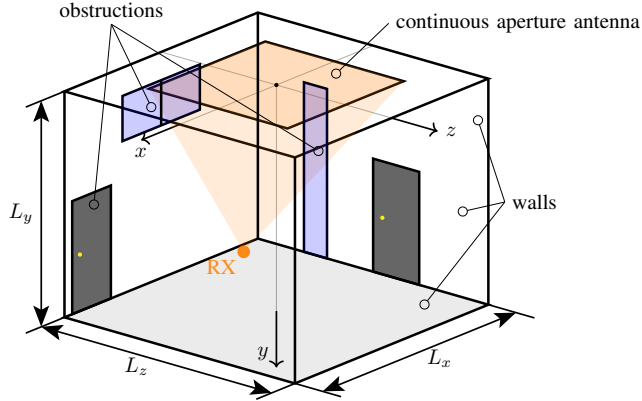

\subsection{ENVIRONMENT}\label{Subsection: Environment Model}
The environment considered in this paper is a three-dimensional cuboid room, where $L_x$, $L_y$, and $L_z$ define the size of the environment (see Fig. \ref{fig: System Model Overview}). 
The origin of the coordinate system is located in the middle of the ceiling of the room.
{\color{black} The receivers, e.g., sensors or wearable IoT devices, which may be subject to stringent design requirements, have to be supplied with power wirelessly.
}
{\color{black} To this end, the receivers are considered to be equipped with an antenna for harvesting energy.
}
In this paper, we assume the transmit antenna {\color{black}architecture, which is realised via a continuous aperture antenna,} is located on the ceiling. 
The $y$-component of the room (see Fig. \ref{fig: System Model Overview}) is defined as the following closed interval in the set of real numbers $\mathbb{R}$
\begin{align}\label{eq: Receiver position in Y}
    {\color{black}\hat{\mathcal{Y}}} = \left\{ y \, \big\vert \, 0 \leq y \leq L_y \, , \; L_y > 0 \right\}.
\end{align}
The transmit antenna {\color{black}architecture} is located at $y=0$ and is defined in Section II-\ref{Subsection: Transmit Antenna Model}.
{\color{black}
At any other height level of the room, i.e., $y>0$, there exists a plane ${\color{black}\hat{\mathcal{X}} \times \hat{\mathcal{Z}}}$.
}
A receiver location is defined as the triplet $(x_i,y,z_i) \in {\color{black}\hat{\mathcal{X}} \times \hat{\mathcal{Y}}\times \hat{\mathcal{Z}}}$ which corresponds to the $i$-th surface element of ${\color{black}\hat{\mathcal{X}} \times \hat{\mathcal{Z}}}$ at height $y$.
{\color{black}The receiver's antenna, which has the finite aperture $S_{RX}$, is assumed to be centred at $(x_i,y,z_i)$ and to ensure analytical tractability, we assume that the antennas are centred over a lattice defined in $\hat{\mathcal{X}} \times \hat{\mathcal{Z}}$. 
Therefore, ${\color{black}\hat{\mathcal{X}} \times \hat{\mathcal{Z}}}$ is partitioned into disjoint areas of size $S_{RX}$ from which a receiver is capable of harvesting energy.} 
For better legibility, the index $i$ is dropped and a receiver location is written as $(x,y,z) \in {\color{black}\hat{\mathcal{X}} \times \hat{\mathcal{Y}}\times \hat{\mathcal{Z}}}$.
{\color{black}
As a consequence of the position of the origin, ${\color{black}\hat{\mathcal{X}}}$ and ${\color{black}\hat{\mathcal{Z}}}$ are bounded by $\pm L_x/2$ and $\pm L_z/2$, respectively.}

{\color{black}\emph{Remark 1:} We note that for the problem considered in Section \ref{section: Methods}, the receivers are assumed to lie beyond the reactive near-field of the transmitter, see also Section II-\ref{Subsection: Channel Model}.
}

\subsection{TRANSMIT ANTENNA MODEL}\label{Subsection: Transmit Antenna Model}
{\color{black}In Section \ref{Section: Introduction}, we have established that high-frequency systems are attractive and affordable for short-range applications, such as indoor WPT \cite{Zhanga}.
To leverage this, we consider the WPT system to operate in the mmWave frequency band in this paper.}
Primarily, we consider a two-dimensional transmit antenna {\color{black}architecture}.
Additionally, we investigate the problem for a one-dimensional transmit antenna {\color{black}architecture} which may be preferable in case of obstructions in the area of deployment.

The two-dimensional transmit antenna {\color{black}architecture} is deployed on the ceiling, i.e., located in the plane $y=0$, and is defined as the continuous surface $\mathcal{X}_a \times \mathcal{Z}_a$ comprising {\color{black}an uncountable infinity of} infinitesimally small {\color{black}radiating elements} with
\begin{align}
    \mathcal{X}_a = \left\{ a_x \, \big\vert \, -L_{a_x}/2 \leq a_x \leq L_{a_x}/2 \, , \; 0 < L_{a_x} \leq L_x \right\},
\end{align}
and
\begin{align}
    \mathcal{Z}_a = \left\{ a_z \, \big\vert \, -L_{a_z}/2 \leq a_z \leq L_{a_z}/2 \, , \; 0 < L_{a_z} \leq L_z \right\}.
\end{align}
By assuming a continuous transmit antenna {\color{black}architecture}, no constraint on the minimum size of a radiating element is imposed. The location of an infinitesimally small {\color{black}radiating element} is defined by the pair {\color{black}$(a_x,a_z) \in \mathcal{X}_a \times \mathcal{Z}_a$}.
    
Moreover, we assume that the surface, where the antenna {\color{black}architecture} is located, i.e., $\mathcal{X}_a \times \mathcal{Z}_a$, is free of obstructions. 
{\color{black} We note that from a practical point of view, obstructions in the ceiling, such as lights, vents, or sound dampening structures, will prohibit the deployment of excessively large continuous surfaces in typical indoor environments, such as, an office or a room in a residential building.}
The radiating elements of the transmit antenna {\color{black}architecture} are modelled as identical, omnidirectional antennas.
The transmit antenna {\color{black}architecture} model $\mathcal{X}_a \times \mathcal{Z}_a$ captures any arrangement of radiating elements ranging from the deployment of a single transmit antenna up to the distribution of radiating elements over a continuous planar surface\footnote{Note that {\color{black}adopting} a discrete transmit antenna model would exclude a continuous planar surface as a {\color{black}possible} solution to Problem \eqref{Problem: General Problem} discussed in Section \ref{section: Methods}. This would preclude the investigation of whether continuous transmit antenna architectures are optimal for maximising the received signal power at the worst possible receiver positions in the environment or not.}.

As a special case of this transmit antenna {\color{black}architecture} model, we also consider a one-dimensional transmit antenna {\color{black}architecture} $\mathcal{L}_a$ with an arbitrary direction in the plane $y=0$, where the location of a {\color{black}radiating elements} is defined as {\color{black}$a_l \in \mathcal{L}_a$,}
and $\mathcal{L}_a$ is assumed to be continuous.

{\color{black}We note that we neglect the impact of mutual coupling among the transmit {\color{black}radiating} elements to ensure analytical tractability of the problem\footnote{{\color{black}An investigation quantifying the impact of mutual coupling and other reactive near-field effects on system design is an interesting topic for future work. We point out that, in Section IV-\ref{Subsection: Fundamental Analysis --> Sparsity of optimal distribution}, we show that the optimal deployment only requires a finite number of transmit antennas as illustrated in Fig. 4.
We also note that mutual coupling effects are typically neglected for discrete antenna arrays \cite[Footnote 4]{bjornson20}.}}.}

\emph{Remark 2:} In this paper, we limit our study to transmit antenna {\color{black}architectures} deployed on the ceiling of the room, i.e., at $y=0$.
Nevertheless, our results can be extended to the case where the transmit antenna {\color{black}architecture} is integrated as part of a wall by exchanging the role of the $y$-component with the $x$-component or the $z$-component.

\subsection{CHANNEL MODEL}\label{Subsection: Channel Model}
{\color{black}In Section \ref{Section: Introduction}, we discussed that the wireless channel is predominantly LoS in the mmWave frequency band since multipath fading components are weak at such a high frequency.}
The problem and the proposed solution presented in Section \ref{section: Methods} are developed under LoS conditions.
{\color{black}Assuming an LoS channel model} allows establishing an intuitive connection between the maximisation of the received power and the geometry underlying the problem.
The robustness of the optimal solution is investigated through simulation in Section \ref{Section: Simulation Results}, by assuming isotropic scattering in the half-space in front of the transmit antenna {\color{black}architecture} in addition to a strong LoS component, based on the Rician fading channel model. 

{\color{black} A suitable channel model must capture the propagation characteristics of the EM wavefronts from the transmitter to a receiver accurately.
    The propagation environment of EM wavefronts can be divided into the near-field and the far-field region.
    In conventional wireless systems, the near-field region is negligibly small, however, when utilising the mmWave frequency band, which allows the EH antenna of a device to be small, the size of the near-field region becomes significant. 
    The relationship between the wavelength of the EM waves, the size of the transmit antenna architecture, and the distance between the transmitter and the receiver determines the size of the operating regions \cite{Liu23}.
    Specifically, a receiver is located within the near-field region if the distance between the receiver and the transmitter is below the Fraunhofer distance $d_\mathrm{Fraunhofer} = 2 L^2 / \lambda$, where $L$ and $\lambda$ are the antenna's aperture and the wavelength, respectively \cite{Liu23}. 
    The near-field region is further subdivided into the reactive near-field and the radiating near-field region, where the latter is commonly referred to as the Fresnel region.
    In this paper, we only consider the region beyond the reactive near-field for deployment, i.e., we restrict the receivers not to be located in the reactive near-field of the transmit antenna architecture.
    The motivation behind this is twofold. 
    Firstly, the size of the reactive near-field region is inherently limited since the deployment of overly large surfaces seems unrealistic (see Section II-\ref{Subsection: Transmit Antenna Model}).
    Secondly, in practice, receivers are unlikely to be located close to the ceiling\footnote{{\color{black} 
    We point out that, in Section IV-\ref{Subsection: Fundamental Analysis --> Sparsity of optimal distribution}, we show that the optimal deployment only requires a finite number of transmit antennas, which is realised with a discrete antenna array, and we illustrate this in Fig. \ref{fig: MRT heatmaps}.
    }}.
    The set of receiver positions beyond the reactive near-field is denoted by 
    \begin{align}
        \mathcal{X} \times \mathcal{Y} \times \mathcal{Z}\subset\hat{\mathcal{X}} \times \hat{\mathcal{Y}} \times \hat{\mathcal{Z}},
    \end{align}
    and any receiver position $(x,y,z)$ considered in the following satisfies 
    \begin{align}
        (x,y,z) \in \mathcal{X} \times \mathcal{Y} \times \mathcal{Z}.
    \end{align}
    {Therefore, receiver position $(x,y,z)$ lies in the radiating near-field or the far-field region of the transmitter.}
    

For high carrier frequencies, such as in the mmWave band, the reflection and scattering losses are severe, causing the LoS component to be dominant in the wireless channel \cite{Zhang2022, Do21}.}
Thus, an LoS channel is typically assumed for wireless systems operating in the near-field region \cite{Li2022, Zhanga, Zhang2021}.
We note that interruptions of the LoS due to obstructions cause blockages \cite{Zhanga} and we reduce the risk of channel blockage by placing the transmit antenna {\color{black}architecture} on the ceiling of the room instead of on one of the walls (see Section II-\ref{Subsection: Transmit Antenna Model}).
For the analytical tractability of the system design, we make the assumption that an LoS connection between the {\color{black}radiating elements} and the receiver exists and do not consider the impact of channel blockages in this paper.
{\color{black} 
    Under free-space LoS conditions, the equivalent complex baseband channel between the radiating element at $(a_x,a_z)$ and a receiver at $(x,y,z) \in \mathcal{X} \times \mathcal{Y} \times \mathcal{Z}$ is given by \cite{Liu23}\footnote{{\color{black} Including effective aperture losses and polarisation losses and studying their impact on the system performance is an interesting topic for future work. Here, we aim at determining the maximum achievable system performance and focus on providing analytical insights, and thus, we neglect these effects.}}
    \begin{align}\label{eq: LOS Channel}
        g_{xyz}(a_x,a_z) = \frac{\sqrt{c}}{D_{xyz}(a_x,a_z)} \, \mathrm{e}^{-\mathrm{j}\frac{2\pi}{\lambda} D_{xyz}(a_x,a_z)},
    \end{align}
    where $D_{xyz}(a_x,a_z)$ is the distance between the radiating element at $(a_x,a_z)$ and a receiver at $(x,y,z)$, i.e., $D_{xyz}(a_x,a_z)=\sqrt{(x-a_{x})^2 + y^2 + (z-a_{z})^2}$ and $c$ is the channel power gain at the reference distance of 1 meter.}
By virtue of the spherical nature of the EM wavefronts, both the amplitude and the phase of $g_{xyz}(a_x,a_z)$ depend on distance $D_{xyz}(a_x,a_z)$. In contrast, when the EM wavefronts are modelled as planar waves (far-field approximation), the \emph{path loss}{\color{black}, which is given by the amplitude in \eqref{eq: LOS Channel},} between all transmit {\color{black}radiating} elements and a receiver is assumed to be constant.


\subsection{TRANSMIT SIGNAL MODEL}\label{subsection: Transmit Signal Model} 
{\color{black}We denote the transmit power distribution as $P$, which is related to the transmit power density $p$ as follows
\begin{align}\label{eq: distribution}
     \mathrm{d} P(a_x,a_z) = p(a_x,a_z){\mathrm{d}a_x \mathrm{d}a_z}.
\end{align}
The total transmit power is constrained to be finite, i.e.,
\begin{align}\label{eq: Finite Power}
    \iint_{\mathcal{X}_a \times \mathcal{Z}_a} p(a_x,a_z) \,\mathrm{d}a_x\,\mathrm{d}a_z =
    \iint_{\mathcal{X}_a \times \mathcal{Z}_a}  \,\mathrm{d}P(a_x,a_z) =
    1,
\end{align}
and, without loss of generality, the transmit power is normalised to unity.

The transmit signal $t(a_x,a_z)$ emitted by the transmit antenna located at $(a_x,a_z)$ is given by
\begin{align}\label{eq: TS1 TX Signal}
     t(a_x,a_z) = \sqrt{p(a_x,a_z)} \, \mathrm{e}^{\mathrm{j}\theta(a_x,a_z)},
\end{align}
where $p(a_x,a_z) \geq 0$ is the non-negative transmit power density at $(a_x,a_z)$ and $\theta(a_x,a_z)$ is the phase.
Since there is no information on the location of the receivers, the transmit signal should illuminate the entire environment. Therefore, the phases are assumed to be independent and identically distributed (i.i.d.) among the radiating elements.

The received signal in position $(x,y,z)$ is given by
\begin{align}\label{eq: TS1 RX Signal}
      r_{xyz} =\iint_{\mathcal{X}_a \times \mathcal{Z}_a}
      t(a_x,a_z)
      g_{xyz}(a_x,a_z)
    \,\mathrm{d}a_x\,\mathrm{d}a_z,
\end{align}
and the power of the signal collected by the receiver's antenna with aperture $S_{RX}$, which is centred at position $(x,y,z)$, is given by
\begin{align}\label{eq: TS1 RX Power}
     \gamma_{xyz}
     &= S_{RX}\, \mathrm{E}\left\{\left\vert r_{xyz} \right\vert^2 \right\} \nonumber,\\
     &\overset{\text{i.i.d.}}{=} S_{RX}\, \iint_{\mathcal{X}_a \times \mathcal{Z}_a}
     \left\vert t(a_x,a_z)g(a_x,a_z) \right\vert^2
     \,\mathrm{d}a_x\,\mathrm{d}a_z  \nonumber,\\
     &= c\,S_{RX}\, \iint_{\mathcal{X}_a \times \mathcal{Z}_a}
     \frac{p(a_x,a_z)}{[D_{xyz}(a_x,a_z)]^2}
     \,\mathrm{d}a_x\,\mathrm{d}a_z,
\end{align}
where $\mathrm{E}\{ \cdot \}$ denotes the expectation operator and we exploit that the phases $\theta(a_x,a_z)$ are i.i.d.
Therefore, the power received by the receiver in position $(x,y,z)$ is directly proportional to 
\begin{align}\label{eq: TS1 RX Power Integral}
     \Phi_P\{ f_{xyz}(a_x,a_z) \} = \iint_{\mathcal{X}_a \times \mathcal{Z}_a} f_{xyz}(a_x,a_z) \,\mathrm{d}P(a_x,a_z), 
\end{align}
with
\begin{align}\label{eq: f_xyz}
     f_{xyz}(a_x,a_z) = \left( \frac{1}{D_{xyz}(a_x,a_z)} \right)^2.
\end{align}
Operator $\Phi_P\{ \cdot \}$ represents the integration of function $f_{xyz}(a_x,a_z)$ with respect to (w.r.t.) the transmit power distribution $P$.
Mathematically, this is equivalent to the expectation of a function $f_{xyz}(a_x,a_z)$ w.r.t. a probability distribution.
This equivalence is crucial and it will be exploited in Section III-\ref{Subsection: Optimal Structure of Power Distribution}, where we will leverage the mathematical equivalence between determining the optimal transmit power distribution across a continuous aperture antenna, which we study in this paper, and the problem of finding the capacity-achieving amplitude-constrained input distribution of a channel, which has been studied in, e.g., \cite{smith1971information,Morsi2020,Dytso2018}.
Note that the impact of additive noise on the received power is neglected in \eqref{eq: TS1 RX Power}, \eqref{eq: TS1 RX Power Integral}.
Furthermore, we note that the amount of power harvested by a device is a non-decreasing function of the received signal power \cite{Clerckx20}. }

\section{OPTIMAL TRANSMIT POWER DISTRIBUTION}\label{section: Methods}
{\color{black}Our goal is to supply power to an arbitrary number of receivers in the environment without any knowledge of their location.}
This objective can be achieved by maximising the received signal power for the worst possible receiver position(s) in the environment, which leads to a max-min optimisation problem.
{\color{black}In this section, we formulate this optimisation problem to optimise the transmit power distribution $P$.
Hereby, the optimal transmit power distribution $P^*$ characterises the optimal transmit antenna architecture.
Deploying an HMIMO system would be necessary if we found that the optimal transmit power distribution's support had the same dimensionality as the distribution’s domain.
However, we will prove in Section III-\ref{Subsection: Optimal Structure of Power Distribution} that the optimal transmit power distribution's support has a lower dimensionality than the distribution's domain and thus, a HMIMO system is unnecessary in the context of the considered WPT problem.} 

\subsection{PROBLEM FORMULATION}\label{Subsection: Problem Formulation}
{\color{black} Maximising the power at the receiver's antenna centred at point $(x,y,z) \in \mathcal{X} \times \mathcal{Y} \times \mathcal{Z}$ in the environment is achieved by maximising $\Phi_{P}\{ f_{xyz} \}$, defined in \eqref{eq: TS1 RX Power Integral}, via an optimal distribution of transmit power among the radiating elements, where $P$ is a member of the set of distributions $\Omega$, i.e., $P(a_x,a_z) \geq 0, \forall (a_x,a_z) \in \mathcal{X}_a \times \mathcal{Z}_a$, and \eqref{eq: Finite Power} is satisfied.}
Hereby, the non-negativity and finite power constraints, which have been introduced in Section II-\ref{Subsection: Transmit Antenna Model}, apply to every element of $\Omega$.
Moreover, the function $P(a_x,a_z)$ is restricted to the set $\mathcal{X}_a \times \mathcal{Z}_a$, and as a consequence of \eqref{eq: Finite Power}, $P(a_x,a_z)$ is bounded by one and thus, {\color{black}is} square-integrable. 
Therefore, the set of considered distributions $\Omega$ is a subset of the square-integrable (finite energy) functions defined over the transmit antenna domain and denoted as $\mathcal{L}^2(\mathcal{X}_a \times \mathcal{Z}_a)$, i.e., $P(a_x,a_z) \in \Omega \subset \mathcal{L}^2(\mathcal{X}_a \times \mathcal{Z}_a)$. 
The set $\Omega$ is further characterised in the following lemma.
\begin{lemma}\label{lemma: Omega set}
    The set of distributions $\Omega \subset \mathcal{L}^2(\mathcal{X}_a \times \mathcal{Z}_a)$ is a weakly compact space in the weak topology and convex.
\end{lemma}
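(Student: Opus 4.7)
The plan is to decompose the claim into its two parts: convexity and weak compactness. Both will follow once one observes that the defining requirements for $\Omega$ are (i) almost-everywhere non-negativity and (ii) a single linear equality constraint \eqref{eq: Finite Power}, and that the underlying ambient space $\mathcal{L}^2(\mathcal{X}_a\times\mathcal{Z}_a)$ is a Hilbert space, hence reflexive.

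For convexity, I take $P_1,P_2\in\Omega$ and $\lambda\in[0,1]$ and verify directly that $\lambda P_1+(1-\lambda)P_2$ is non-negative as a convex combination of non-negative elements, and by linearity of integration integrates to $\lambda\cdot 1+(1-\lambda)\cdot 1=1$, so it still lies in $\Omega$. This is essentially one line of calculation and requires no further machinery.

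For weak compactness, I invoke the fact that bounded, weakly closed subsets of a reflexive Banach space are weakly compact (Banach--Alaoglu, in the reflexive form due to Kakutani), and check the two ingredients separately. Boundedness: the domain $\mathcal{X}_a\times\mathcal{Z}_a$ has finite Lebesgue measure $L_{a_x}L_{a_z}$, and every $P\in\Omega$ satisfies $0\le P\le 1$ pointwise (as noted between \eqref{eq: Finite Power} and the lemma), so $\|P\|_{\mathcal{L}^2}^2\le L_{a_x}L_{a_z}$, and $\Omega$ sits inside a single weakly compact ball. Weak closedness: since $\mathbf{1}\in\mathcal{L}^2(\mathcal{X}_a\times\mathcal{Z}_a)$ by the finite measure of the domain, the unit-mass constraint reads $\langle P,\mathbf{1}\rangle_{\mathcal{L}^2}=1$, which is the preimage of a point under a weakly continuous linear functional and is therefore weakly closed; the non-negativity constraint defines a closed convex cone in $\mathcal{L}^2$, and any norm-closed convex set is weakly closed by Mazur's lemma. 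The intersection of two weakly closed sets is weakly closed, and intersecting with the weakly compact ball finishes the argument.

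The main obstacle is conceptual rather than technical: the notation $\mathrm{d}P$ in \eqref{eq: distribution} suggests a measure-theoretic reading in which singular distributions (e.g.\ Dirac masses supported on a finite set, which is exactly the structure later conjectured to be optimal) could in principle belong to $\Omega$, yet such objects do not lie in $\mathcal{L}^2$. The cleanest way to resolve this is to interpret $P$ as the cumulative distribution with density $p$ (so that $P$ is a function bounded by $1$, automatically in $\mathcal{L}^2$, and convex combinations/weak limits of CDFs are again CDFs). Should the intended setting instead be the full space of finite Borel measures on the compact set $\mathcal{X}_a\times\mathcal{Z}_a$, the same conclusion holds by replacing Banach--Alaoglu with Prokhorov's theorem and the weak topology of $\mathcal{L}^2$ with the weak-$\ast$ topology on measures; either way, convexity is trivial and compactness is driven by the boundedness of the domain together with the linear nature of the mass constraint.
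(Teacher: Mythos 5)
Your proof is correct and, in one important respect, more complete than the one the paper gives in its appendix. The convexity argument is identical to the paper's. For compactness, the paper takes the route: $\mathcal{L}^2(\mathcal{X}_a\times\mathcal{Z}_a)$ is a separable Hilbert space, every bounded sequence has a weakly convergent subsequence, hence $\Omega$ is weakly sequentially compact, and sequential compactness and compactness coincide in this setting (Eberlein--\v{S}mulian). You instead use the Kakutani/Banach--Alaoglu characterisation --- bounded plus weakly closed in a reflexive space implies weakly compact --- and then explicitly verify weak closedness: the mass constraint \eqref{eq: Finite Power} is the level set of a weakly continuous linear functional, and the non-negativity cone is norm-closed and convex, hence weakly closed by Mazur. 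That closedness step is precisely what the paper omits: extracting a weakly convergent subsequence from $\Omega$ proves nothing unless its limit is again in $\Omega$, so your version repairs a gap in the published argument rather than merely restating it. Your remark on the ambiguity between $P$ as an $\mathcal{L}^2$ function and $\mathrm{d}P$ as a measure is also well taken and is the real subtlety here: the set of non-negative densities $p$ integrating to one is unbounded in $\mathcal{L}^2$ (and the finitely supported optima the paper later conjectures are not in $\mathcal{L}^2$ at all), so the pointwise bound $P\le 1$ that both you and the paper invoke only makes sense under the cumulative-distribution reading of $P$, or, more naturally, by working in the space of probability measures on the compact set $\mathcal{X}_a\times\mathcal{Z}_a$ with the weak-$\ast$ topology, where Prokhorov's theorem yields compactness directly --- exactly as you suggest in your closing paragraph.
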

\hspace*{-3mm}\textit{Proof:} The proof is found in Appendix \ref{app: omega set}.

{\color{black} Ensuring that receivers are reliably supplied with power is achieved by maximising the received signal power at the worst possible receiver position(s). }
This can be formulated as the following max-min optimisation problem
\begin{align}\label{Problem: General Problem}
    \underset{P \in \Omega}{\text{maximise}}\quad \!\min_{\substack{(x,y,z)\\ \in \mathcal{X} \times \mathcal{Y}\times \mathcal{Z}}} \,  \Phi_{P}\{ f_{xyz} \}.
\end{align} 
The best worst-case performance, i.e., the optimal objective value of Problem \eqref{Problem: General Problem}, is denoted by $m$ and is defined as 
\begin{align}\label{Problem: Optimiser P}
    m = \min_{\substack{(x,y,z)\\ \in \mathcal{X} \times \mathcal{Y}\times \mathcal{Z}}} \, \Phi_{P^*}\{ f_{xyz} \},
\end{align}
where $P^*$ is the optimal distribution of power.
Note that Problem \eqref{Problem: General Problem} is linear in $P$. The feasibility of Problem \eqref{Problem: General Problem} is ensured through the compactness of $\Omega$ shown in Lemma \ref{lemma: Omega set}, i.e., $\Omega$ is closed and bounded.

\subsection{CRITICAL RECEIVER POSITIONS}\label{Subsection: Critical Receiver Positions}
By identifying the receiver locations where the received signal power is the lowest, the complexity of solving Problem \eqref{Problem: General Problem} can be reduced. 
These critical locations are defined w.r.t. the optimal distribution of power $P^*$.
To this end, the set containing the critical receiver positions when power is allocated optimally is denoted by $\mathcal{R_{\mathrm{crit}}}$.
The set of non-critical positions is the complement of $\mathcal{R_{\mathrm{crit}}}$, i.e., $\mathcal{R_{\mathrm{crit}}^\text{c}}$.
{\color{black}After the power has been optimally allocated across the continuous transmit antenna surface, the signal power received by an antenna centred at a critical receiver position $(x,y,z) \in \mathcal{R_{\mathrm{crit}}}$ will be below the signal power received by an antenna centered at any non-critical location $(x,y,z) \in \mathcal{R_{\mathrm{crit}}^\text{c}}$.}
Next, the set of critical receiver positions is characterised more precisely.
{\color{black}\begin{observation}\label{lemma: Restriction to necessary Locations}
    For any $(x,z) \in \mathcal{X} \times \mathcal{Z}$, the worst performance is attained at $y=L_y$. Consequently, all critical positions are located at $y=L_y$, i.e., $\mathcal{Y_{\mathrm{crit}}}=\{L_y \}$.
\end{observation}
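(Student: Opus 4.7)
\textit{Proof proposal:} The plan is to exploit the monotonicity of the integrand $f_{xyz}(a_x,a_z)$ in the variable $y$ and then transfer this monotonicity to the integrated quantity $\Phi_{P^*}\{f_{xyz}\}$ via the non-negativity of $P^*$. First, I would note that from \eqref{eq: f_xyz} together with the definition of $D_{xyz}(a_x,a_z)$, for every fixed $(x,z)\in\mathcal{X}\times\mathcal{Z}$ and every fixed $(a_x,a_z)\in\mathcal{X}_a\times\mathcal{Z}_a$,
\begin{equation*}
f_{xyz}(a_x,a_z) \;=\; \frac{1}{(x-a_x)^2 + y^2 + (z-a_z)^2}
\end{equation*}
is strictly decreasing in $y$ on the half-line $y>0$, since the denominator is strictly increasing in $y$ there.

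Next, since $P^*\in\Omega$ is a non-negative distribution on $\mathcal{X}_a\times\mathcal{Z}_a$ satisfying \eqref{eq: Finite Power}, the operator $\Phi_{P^*}\{\cdot\}$ in \eqref{eq: TS1 RX Power Integral} is monotonicity-preserving: if $h_1(a_x,a_z)\leq h_2(a_x,a_z)$ pointwise, then $\Phi_{P^*}\{h_1\}\leq \Phi_{P^*}\{h_2\}$. Applying this with $h_1 = f_{xyz_2}$ and $h_2 = f_{xyz_1}$ for any $y_2 > y_1 > 0$ yields $\Phi_{P^*}\{f_{xyz_2}\}\leq \Phi_{P^*}\{f_{xyz_1}\}$, so that $y\mapsto \Phi_{P^*}\{f_{xyz}\}$ is non-increasing (in fact strictly decreasing provided $P^*$ is not a zero distribution, which is ruled out by \eqref{eq: Finite Power}).

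Finally, I would fix $(x,z)\in\mathcal{X}\times\mathcal{Z}$ and minimise $\Phi_{P^*}\{f_{xyz}\}$ over $y$. By the strict monotonicity just established, the minimum is attained at the largest admissible value of $y$ in $\mathcal{Y}$. Since $\mathcal{Y}\subset\hat{\mathcal{Y}}=[0,L_y]$ and the reactive near-field exclusion removes points only near $y=0$, the supremum $L_y$ remains in $\mathcal{Y}$, and the minimum is therefore attained at $y=L_y$. Hence every critical position lies at height $y=L_y$, i.e.\ $\mathcal{Y_{\mathrm{crit}}}=\{L_y\}$. The only subtle point, and the one I would be most careful about, is to confirm that $L_y\in\mathcal{Y}$ (i.e.\ the floor of the room is not itself inside the reactive near-field of the ceiling-mounted aperture); this is guaranteed by the assumption in Remark~1 and the geometry set up in Section~II-\ref{Subsection: Channel Model}.
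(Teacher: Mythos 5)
Your proposal is correct and follows essentially the same route as the paper, which justifies the observation by noting that $f_{xyz}$ is monotonically decreasing in $y$ so the lowest objective value is obtained in the plane $y=L_y$. You merely make explicit two steps the paper leaves implicit — that integration against the non-negative distribution $P^*$ preserves the pointwise ordering, and that $L_y$ is indeed an admissible receiver height — both of which are consistent with the paper's setup.
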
}
{\color{black}Key Observation \ref{lemma: Restriction to necessary Locations} holds since $f_{xyz}$ is a monotonically decreasing function of $y$.
Thus, the lowest objective value is obtained in the plane with $y=L_y$.
As a consequence of Key Observation \ref{lemma: Restriction to necessary Locations}\footnote{\color{black} Note that the result in Key Observation \ref{lemma: Restriction to necessary Locations}, i.e., $\mathcal{Y_{\mathrm{crit}}}=\{L_y \}$, holds since the radiating elements are assumed to be omnidirectional antennas and dropping this assumption may change the result.}, the function $f_{xyz}$ in the objective function of Problem \eqref{Problem: General Problem} can be equivalently substituted by $f_{xz}(a_x,a_z) := f_{xyz}(a_x,a_z)|_{y=L_y}$. 
}
Problem \eqref{Problem: General Problem} is equivalently reformulated in the following lemma.  
\begin{lemma}\label{lemma: Lagrangian General Problem}
    Assume that the set of critical positions $\mathcal{R_{\mathrm{crit}}} \subseteq \mathcal{X}\times\mathcal{Y}_{crit}\times\mathcal{Z}$ is known. Then, solving Problem \eqref{Problem: General Problem} is equivalent to solving the following problem
    \begin{align}\label{Problem: Lagrangian General Problem}
        \underset{o \in \mathbb{R}, P \in \Omega, \Lambda}{\text{maximise}}\quad \!
        \underbrace{o - \sum_{\mathcal{R_{\mathrm{crit}}}} \left( o - \Phi_{P}\{ f_{xz} \} \right) \lambda_{xz}}_{J(P)},
    \end{align}
    where $\Lambda$ is the set of positive variables $\lambda_{xz} > 0, \forall (x,y,z) \in \mathcal{R_{\mathrm{crit}}}$.
    When allocating power optimally through $P^*(a_x,a_z)$, the optimal objective value is $o^* = m$ and the following conditions should also hold 
    \begin{align}\label{eq: Complementary Slackness}
       [&(\lambda_{xz} > 0 \implies m = \Phi_{P^*}\{ f_{xz} \})  \,\,\, \mathrm{or} \,\,\, \nonumber \\ 
       &(m < \Phi_{P^*}\{ f_{xz} \} \implies \lambda_{xz} = 0)], \nonumber \\  &\forall (x,y,z) \in \mathcal{X}\times\mathcal{Y_{\mathrm{crit}}}\times\mathcal{Z}.
    \end{align}
\end{lemma}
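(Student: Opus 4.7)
The plan is to recognise Problem \eqref{Problem: General Problem} as a convex-concave saddle-point problem and to prove its equivalence with the Lagrangian formulation \eqref{Problem: Lagrangian General Problem} via strong duality together with the KKT optimality conditions. First, I would introduce the epigraph variable $o$ and rewrite the inner minimisation as a family of linear inequality constraints. By Key Observation \ref{lemma: Restriction to necessary Locations} it suffices to consider positions at $y=L_y$, and by definition of $\mathcal{R}_{\mathrm{crit}}$ the inner minimum is attained on this set, so Problem \eqref{Problem: General Problem} becomes
\begin{equation*}
\max_{P\in\Omega,\,o\in\mathbb{R}} o \quad \text{s.t.} \quad o\le \Phi_{P}\{f_{xz}\},\ \forall (x,y,z)\in\mathcal{R}_{\mathrm{crit}}.
\end{equation*}

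Next, I would attach a non-negative multiplier $\lambda_{xz}$ to each of these constraints, obtaining precisely the Lagrangian $J(P)$ appearing in \eqref{Problem: Lagrangian General Problem}. The key step is to establish strong duality, which would guarantee that the optimal value of the saddle-point problem equals $m$ and hence $o^{*}=m$. For this I would invoke Sion's minimax theorem: $J$ is affine, hence concave and upper semicontinuous, in $(P,o)$ and affine, hence convex, in the multipliers; by Lemma \ref{lemma: Omega set}, $\Omega$ is convex and weakly compact, while the multiplier domain is convex; and a Slater-type condition holds, for instance the uniform density on $\mathcal{X}_a\times\mathcal{Z}_a$ already yields $\Phi_{P}\{f_{xz}\}>0$ strictly at every critical position, providing strict feasibility of the constraints.

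The complementary slackness condition \eqref{eq: Complementary Slackness} then follows from the standard KKT argument. Rewriting $J(P)=o\bigl(1-\sum_{\mathcal{R}_{\mathrm{crit}}}\lambda_{xz}\bigr)+\sum_{\mathcal{R}_{\mathrm{crit}}}\lambda_{xz}\Phi_{P}\{f_{xz}\}$ and noting that $o$ is unconstrained, stationarity in $o$ forces $\sum_{\mathcal{R}_{\mathrm{crit}}}\lambda_{xz}=1$, so that the multipliers form a probability-like weighting of the active constraints. At the saddle point, the product $(o^{*}-\Phi_{P^{*}}\{f_{xz}\})\lambda_{xz}$ must vanish at every critical position, which is exactly the dichotomy in \eqref{eq: Complementary Slackness}: either the constraint is tight, $m=\Phi_{P^{*}}\{f_{xz}\}$, or the multiplier is zero.

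The main obstacle will be the infinite-dimensional and, in general, uncountably-constrained character of the problem: $\Omega\subset\mathcal{L}^{2}(\mathcal{X}_a\times\mathcal{Z}_a)$ and $\mathcal{R}_{\mathrm{crit}}$ need not be finite. Weak upper semicontinuity of $P\mapsto\Phi_{P}\{f_{xz}\}$ must be verified, but this follows since $f_{xz}$ is continuous and bounded on the compact domain (as the receiver lies beyond the reactive near-field), so the functional is an $\mathcal{L}^{2}$ inner product and therefore weakly continuous in $P$. If $\mathcal{R}_{\mathrm{crit}}$ is uncountable, the sum in $J(P)$ should be reinterpreted as integration against a non-negative Radon measure on $\mathcal{R}_{\mathrm{crit}}$, whose existence and support properties can be supplied by a Hahn-Banach separation argument in the dual of $C(\overline{\mathcal{R}_{\mathrm{crit}}})$. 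With these refinements in place, the minimax/KKT argument above proves Lemma \ref{lemma: Lagrangian General Problem}.
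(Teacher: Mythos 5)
Your proposal follows essentially the same route as the paper's proof in Appendix B: epigraph reformulation, attachment of non-negative multipliers to the per-position constraints, strong duality, and complementary slackness yielding the dichotomy in \eqref{eq: Complementary Slackness}. The only substantive differences are in justification rather than structure: the paper obtains strong duality directly from the affine (linear-programming) character of the epigraph problem instead of invoking Sion's minimax theorem with a Slater point, and it keeps the constraints at \emph{all} positions in $\mathcal{X}\times\mathcal{Y}_{\mathrm{crit}}\times\mathcal{Z}$ and lets complementary slackness force $\lambda_{xz}=0$ off $\mathcal{R}_{\mathrm{crit}}$ --- which is cleaner than dropping the non-critical constraints from the primal at the outset, since $\mathcal{R}_{\mathrm{crit}}$ is defined relative to $P^{*}$ and restricting the constraint set a priori is a relaxation whose tightness would itself need an argument.
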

\hspace*{-3mm}\textit{Proof:}
    The proof is presented in Appendix \ref{app: Lagrangian}.
    
The objective function in Problem \eqref{Problem: Lagrangian General Problem} is referred to as $J(P)$ in the following.
Note that Problem {\color{black}\eqref{Problem: General Problem}} is a convex optimisation problem, which is readily apparent when the problem is stated in its epigraph form \cite{boyd2004convex} (see Appendix \ref{app: Lagrangian}).

\subsection{OPTIMAL STRUCTURE OF POWER DISTRIBUTION}\label{Subsection: Optimal Structure of Power Distribution}
In this section, we establish a relation between Problem \eqref{Problem: Lagrangian General Problem} and the problem of finding the optimal amplitude-constrained input distribution for reaching the information capacity of Gaussian channels studied in \cite{smith1971information,Morsi2020,Dytso2018}. 
In \cite{smith1971information} and \cite{Morsi2020}, the authors determine the optimal input distribution of transmit symbols, which are bounded in amplitude, for achieving information capacity, whereas our problem lies in identifying the optimal {\color{black} distribution of transmit power}, under geometrical constraints regarding the {\color{black} continuous transmit antenna architecture}. 
Beyond communication problems \cite{smith1971information}, similar results on the optimal input distribution in WPT applications have been established in \cite{Morsi2020}. 
First, we consider the optimisation of a two-dimensional distribution $P(a_x,a_z)$ which is significantly different from the problems involving one-dimensional distributions such as \cite{smith1971information} and \cite{Morsi2020}. Specifically, the properties of the optimal {\color{black}two-dimensional} distribution's structure are less stringent.
The analysis of optimisation problems involving input distributions with a dimensionality greater than one has been studied in \cite{Dytso2018}. 
Subsequently, we consider the optimisation problem with a one-dimensional distribution, which allows characterising the optimal distribution as in \cite{smith1971information} and \cite{Morsi2020}. 

In the following, we investigate properties, which apply to the optimal distribution of power and allow the characterisation of the optimal {\color{black}transmit antenna architecture}. The main results are stated in Propositions \ref{proposition: Nowhere dense set} and \ref{proposition: Measure zero} for the optimal distribution of power over a two-dimensional surface and in Proposition \ref{proposition: 1D result} for the optimal one-dimensional distribution.
First, a necessary and sufficient condition for the optimal distribution of power $P^*$ is established in Proposition \ref{proposition: Necessary and Sufficient Constraints} and Corollary \ref{corollary: Optimality of distribution iff two equations hold}.

\begin{proposition}\label{proposition: Necessary and Sufficient Constraints}
    Consider the optimisation problem in \eqref{Problem: Lagrangian General Problem}.
    There exists an optimal distribution of power $P^*(a_x,a_z)$ which attains the maximum $m$ of Problem \eqref{Problem: Lagrangian General Problem} if and only if the following condition holds 
    \begin{align}\label{eq: nec und suf}
        \iint_{\mathcal{X}_a \times \mathcal{Z}_a} \, 
        \Bar{f}(a_x,a_z) \, \,\mathrm{d}P(a_x,a_z) \leq m, \quad \forall P \in \Omega,
    \end{align}
    where $\Bar{f}(a_x,a_z)$ is given by
    \begin{align}\label{eq: bar f}
        \Bar{f}(a_x,a_z) = \frac{
            \sum_{\mathcal{R}^{\mathrm{crit}}} \, 
            f_{xz}(a_x,a_z) \lambda_{xz} 
        }
        {
            \sum_{\mathcal{R}^{\mathrm{crit}}} \, 
            \lambda_{xz} 
        },
    \end{align}
    and \eqref{eq: nec und suf} is satisfied with equality for the optimal distribution of power $P^*(a_x,a_z)$. 
\end{proposition}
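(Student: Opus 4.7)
The plan is to adopt the Lagrangian/saddle-point technique used by Smith and later by Morsi/Dytso in the amplitude-constrained capacity setting. As a preparatory simplification, I first expand
\begin{equation*}
J(P) \;=\; o\Bigl(1 - \sum_{\mathcal{R_{\mathrm{crit}}}} \lambda_{xz}\Bigr) \;+\; \sum_{\mathcal{R_{\mathrm{crit}}}} \lambda_{xz}\,\Phi_{P}\{f_{xz}\}.
\end{equation*}
Because $o$ is an unconstrained real variable in Problem \eqref{Problem: Lagrangian General Problem}, $J(P)$ is bounded above only if $\sum_{\mathcal{R_{\mathrm{crit}}}} \lambda_{xz} = 1$. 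Under this normalisation, $J(P) = \iint_{\mathcal{X}_a \times \mathcal{Z}_a} \bar{f}(a_x,a_z)\,\mathrm{d}P(a_x,a_z)$, so the proposition reduces to the statement that the linear functional $P \mapsto \iint \bar{f}\,\mathrm{d}P$ attains its maximum $m$ over $\Omega$ at $P^*$.

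For the necessity direction, I would invoke Lagrangian duality for Problem \eqref{Problem: Lagrangian General Problem}. Because $\Omega$ is convex and weakly compact (Lemma \ref{lemma: Omega set}) and $P \mapsto \Phi_P\{f_{xz}\}$ is linear and weakly continuous, strong duality holds, and there exist optimal multipliers $\Lambda^*$ such that $P^*$ maximises $J(\cdot,\Lambda^*)$ on $\Omega$ while the complementary slackness conditions \eqref{eq: Complementary Slackness} hold. Combining this with the reduction above yields $\iint \bar{f}\,\mathrm{d}P \leq J(P^*) = m$ for every $P \in \Omega$ with equality at $P^*$, which is exactly \eqref{eq: nec und suf}.

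For the sufficiency direction, I would argue by contradiction. Assume \eqref{eq: nec und suf} holds and is tight at $P^*$ for some multipliers $\{\lambda_{xz}\}$. If some $P' \in \Omega$ satisfied $\min_{(x,z)\in\mathcal{R_{\mathrm{crit}}}} \Phi_{P'}\{f_{xz}\} > m$, then averaging against the positive weights $\lambda_{xz}$ that sum to one would give $\iint \bar{f}\,\mathrm{d}P' > m$, contradicting \eqref{eq: nec und suf}. Therefore $\min_{(x,z)} \Phi_{P'}\{f_{xz}\} \leq m$ for every $P' \in \Omega$, while equality at $P^*$ combined with \eqref{eq: Complementary Slackness} forces $\min_{(x,z)} \Phi_{P^*}\{f_{xz}\} = m$, so $P^*$ attains the optimum of Problem \eqref{Problem: General Problem}.

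The main obstacle I expect is justifying strong duality and the existence of $\Lambda^*$ in this infinite-dimensional setting, especially because $\mathcal{R_{\mathrm{crit}}}$ need not be countable, in which case the sums over $\lambda_{xz}$ should be interpreted as integrals against a positive Borel measure on $\mathcal{R_{\mathrm{crit}}}$. I would handle this by combining the weak compactness from Lemma \ref{lemma: Omega set}, the convex epigraph form of Problem \eqref{Problem: General Problem} noted after Lemma \ref{lemma: Lagrangian General Problem}, a Sion-type minimax exchange, and a Hahn-Banach/Riesz-representation argument to produce the required measure.
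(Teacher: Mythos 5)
Your proposal is correct in substance and lands on the same mathematical core as the paper's proof, but it gets there by a somewhat different route. The paper proves weak continuity of $\Phi_P\{f_{xz}\}$ and then invokes the G\^{a}teaux-derivative optimality criterion for concave functionals on convex compact sets (Theorem 9 of Dytso et al.): $P^*$ is a maximiser of $J$ iff $\Delta_P J(P^*)\leq 0$ for all $P\in\Omega$, which for the linear $J$ collapses to $J(P)-J(P^*)\leq 0$; enforcing the complementary slackness conditions \eqref{eq: Complementary Slackness} and dividing by $\sum_{\mathcal{R}_{\mathrm{crit}}}\lambda_{xz}$ then yields \eqref{eq: nec und suf}. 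You instead exploit linearity from the outset: your observation that boundedness of $J$ in the free variable $o$ forces $\sum_{\mathcal{R}_{\mathrm{crit}}}\lambda_{xz}=1$ (equivalently, the normalisation built into $\bar{f}$) reduces the claim to the tautology that $P^*$ maximises the weakly continuous linear functional $P\mapsto\iint\bar{f}\,\mathrm{d}P$ iff its value at every $P$ is at most its value at $P^*$, with existence of a maximiser from Weierstrass on the weakly compact $\Omega$. Your necessity argument leans on strong duality to produce the multipliers, which is really the content of Lemma \ref{lemma: Lagrangian General Problem} rather than of this proposition (the $\lambda_{xz}$ are already given data here), and your sufficiency argument routes through the original max-min Problem \eqref{Problem: General Problem} via an averaging contradiction rather than staying inside Problem \eqref{Problem: Lagrangian General Problem}; both are harmless given Lemma \ref{lemma: Lagrangian General Problem}. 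What your version buys is a more elementary derivation that avoids the G\^{a}teaux formalism, plus an explicit acknowledgement of the measure-theoretic issue when $\mathcal{R}_{\mathrm{crit}}$ is uncountable (where the sums over $\lambda_{xz}$ must become integrals against a positive measure), which the paper silently glosses over; what the paper's version buys is a direct parallel with the Smith/Morsi/Dytso capacity literature that it reuses again in Corollary \ref{corollary: Optimality of distribution iff two equations hold} and Propositions \ref{proposition: Nowhere dense set}--\ref{proposition: 1D result}.
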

\hspace*{-3mm}\textit{Proof:}    This proposition is proven in Appendix \ref{app: Necessary and Sufficient Constraints}.

Intuitively, function $\Bar{f}$ can be interpreted as the weighted average of $f_{xz}$ over the receiver positions $\mathcal{X}\times\mathcal{Z}$ at $y=L_y$. The weights are given by the variables $\lambda_{xz} / \sum_{\mathcal{R}^{\mathrm{crit}}} \, \lambda_{xz}$ and thus, only critical receiver positions contribute to $\Bar{f}$.
Note that the optimal distribution $P^*$ is not necessarily unique since $J(P)$ is linear in $P$ and not strictly concave as in the problems considered in \cite{smith1971information, Morsi2020}.
Additionally, Proposition \ref{proposition: Necessary and Sufficient Constraints} provides a condition for the existence of $P^*$. Obtaining $P^*$ requires determining the variables $\lambda_{xz}$ satisfying \eqref{eq: nec und suf} and \eqref{eq: bar f}.

Next, the optimality condition for $P^*$, given in Proposition \ref{proposition: Necessary and Sufficient Constraints}, is restated in a more intuitive manner. 
This requires defining the support of $P^*$, which is analogous to the definition in \cite{Dytso2018}.
A {\color{black}radiating element's position} $(a_x,a_z)$ is called a \emph{point of increase} of the optimal distribution $P^*$ if for any open subset $O \subset \mathbb{R}^2$ containing $(a_x,a_z)$, $P^*(O)>0$\footnote{Here, the argument of the optimal transmit power distribution is a set instead of a tuple of points, i.e., $P^*(O)$ corresponds to  $\iint_{O}\mathrm{d}P^*(a_x, a_z)$.} holds \cite{Dytso2018}.
The set of points of increase of $P^*$ is defined as $\mathcal{E}(P^*) \subseteq \mathbb{R}^2$. 
Note that $P^*(\mathcal{E}(P^*)) = 1$ \cite{Dytso2018}. 
Moreover, $\mathcal{E}(P^*)$ is the smallest closed subset of $\mathbb{R}^2$ such that $P^*(\mathcal{E}(P^*)) = 1$ \cite{Dytso2018}.
Note that $P \in \Omega$ is restricted to $\mathcal{X}_a \times \mathcal{Z}_a \subset \mathbb{R}^2$ and thus, $\mathcal{E}(P^*) \subseteq \mathcal{X}_a \times \mathcal{Z}_a$.
We define $\mathcal{E}(P^*)$ as the support of $P^*$.

\begin{corollary}\label{corollary: Optimality of distribution iff two equations hold}
    Consider the optimisation problem in \eqref{Problem: Lagrangian General Problem}. There exists an optimal distribution of power $P^*$ for Problem \eqref{Problem: Lagrangian General Problem} if and only if $\Bar{f}(a_x,a_z)$ satisfies the following equations 
    \begin{align}\label{eq: optimality condition 1}
         \Bar{f}(a_x,a_z) &\leq m, \quad \forall (a_x,a_z) \in \mathcal{X}_a \times \mathcal{Z}_a, \\
         \Bar{f}(a_x,a_z) &= m, \quad \forall (a_x,a_z) \in \mathcal{E}(P^*). \label{eq: optimality condition 2}
    \end{align}
\end{corollary}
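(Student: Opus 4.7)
The plan is to derive Corollary \ref{corollary: Optimality of distribution iff two equations hold} as a pointwise reformulation of the integral condition \eqref{eq: nec und suf} established in Proposition \ref{proposition: Necessary and Sufficient Constraints}. Since Proposition \ref{proposition: Necessary and Sufficient Constraints} already provides a necessary and sufficient condition, the task reduces to showing that \eqref{eq: nec und suf} (with equality at $P^*$) is equivalent to the pair of pointwise conditions \eqref{eq: optimality condition 1}--\eqref{eq: optimality condition 2}. I would first note that $\bar{f}$ is continuous on $\mathcal{X}_a \times \mathcal{Z}_a$, since each $f_{xz}$ is continuous (the distances $D_{xz}$ are bounded away from zero because $y = L_y > 0$) and $\bar{f}$ is a positively weighted average of finitely many such functions over $\mathcal{R}_{\mathrm{crit}}$.

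For the sufficiency direction (\eqref{eq: optimality condition 1}--\eqref{eq: optimality condition 2} $\Rightarrow$ optimality), I would integrate \eqref{eq: optimality condition 1} against an arbitrary $P \in \Omega$ and invoke the unit-mass constraint \eqref{eq: Finite Power} to obtain $\iint \bar{f}\,\mathrm{d}P \leq m$, which is exactly \eqref{eq: nec und suf}. Then, using \eqref{eq: optimality condition 2} together with the fact that $P^*(\mathcal{E}(P^*)) = 1$, the integral $\iint \bar{f}\,\mathrm{d}P^*$ equals $m$, yielding equality in \eqref{eq: nec und suf}. By Proposition \ref{proposition: Necessary and Sufficient Constraints}, $P^*$ is therefore optimal.

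For the necessity direction (optimality $\Rightarrow$ \eqref{eq: optimality condition 1}--\eqref{eq: optimality condition 2}), I would argue by contradiction. To establish \eqref{eq: optimality condition 1}, suppose $\bar{f}(a_x^0, a_z^0) > m$ at some point $(a_x^0, a_z^0) \in \mathcal{X}_a \times \mathcal{Z}_a$. By continuity of $\bar{f}$, there exists an open neighbourhood $O$ of $(a_x^0, a_z^0)$ on which $\bar{f} > m$. Constructing a test distribution $P_O \in \Omega$ supported in $O$ (for instance, a point mass at $(a_x^0, a_z^0)$, which is a valid element of $\Omega$ as shown in Lemma \ref{lemma: Omega set}) gives $\iint \bar{f}\,\mathrm{d}P_O > m$, contradicting \eqref{eq: nec und suf}. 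To establish \eqref{eq: optimality condition 2}, I would combine \eqref{eq: optimality condition 1} with the equality $\iint \bar{f}\,\mathrm{d}P^* = m$ to write $\iint (m - \bar{f})\,\mathrm{d}P^* = 0$ with the integrand non-negative, hence $\bar{f} = m$ holds $P^*$-almost everywhere. If there existed $(a_x^0, a_z^0) \in \mathcal{E}(P^*)$ with $\bar{f}(a_x^0, a_z^0) < m$, continuity would provide an open neighbourhood $O$ of $(a_x^0, a_z^0)$ on which $\bar{f} < m$, and by the point-of-increase definition of $\mathcal{E}(P^*)$ we would have $P^*(O) > 0$, forcing $\iint (m - \bar{f})\,\mathrm{d}P^* > 0$ and yielding a contradiction.

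The main obstacle is the necessity direction, specifically the step that upgrades the $P^*$-almost-everywhere equality $\bar{f} = m$ to pointwise equality on the entire support $\mathcal{E}(P^*)$. This is where the precise characterisation of $\mathcal{E}(P^*)$ as the set of points of increase (and the continuity of $\bar{f}$) is indispensable; without these, one would only obtain equality up to a $P^*$-null set and not on the closed support itself. The remaining manipulations are standard once the continuity of $\bar{f}$ and the admissibility of point masses in $\Omega$ are in place.
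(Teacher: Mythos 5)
Your proof is correct and its skeleton matches the paper's: sufficiency by integrating the pointwise bound \eqref{eq: optimality condition 1} against an arbitrary $P\in\Omega$ and using $P^*(\mathcal{E}(P^*))=1$ together with \eqref{eq: optimality condition 2} to get equality; and necessity of \eqref{eq: optimality condition 1} by concentrating all power at a hypothetical point with $\bar{f}>m$ to contradict \eqref{eq: nec und suf} (the paper uses exactly this point-mass test distribution, so your appeal to its admissibility is no looser than the paper's own; your fallback of a density supported on a small open neighbourhood is in fact the safer choice given that $\Omega\subset\mathcal{L}^2$). Where you genuinely diverge is the necessity of \eqref{eq: optimality condition 2}. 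The paper argues by contradiction at a point of increase where $\bar{f}>m$, decomposes $\iint\bar{f}\,\mathrm{d}P^*$ over an open ball around that point, its complement, and an auxiliary $P^*$-null subset $\mathcal{E}_0(P^*)$, and then needs a separate closure-plus-continuity argument to extend the equality from $\mathcal{E}(P^*)\setminus\mathcal{E}_0(P^*)$ to all of $\mathcal{E}(P^*)$. You instead observe that, once \eqref{eq: optimality condition 1} is established, $\iint(m-\bar{f})\,\mathrm{d}P^*=0$ with a non-negative continuous integrand, so a point of increase with $\bar{f}<m$ would have a neighbourhood of positive $P^*$-measure on which $m-\bar{f}$ is bounded away from zero, which is impossible. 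This is the tighter route: it targets the only violation that \eqref{eq: optimality condition 1} still permits on the support (namely $\bar{f}<m$, whereas the paper's displayed contradiction addresses $\bar{f}>m$, which is already excluded), and it reaches every point of $\mathcal{E}(P^*)$ in one step without introducing $\mathcal{E}_0(P^*)$. Both arguments rest on the same two ingredients, continuity of $\bar{f}$ and the point-of-increase characterisation of the support, so nothing essential is lost; your version simply buys economy and avoids the slightly awkward intermediate decomposition.
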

\hspace*{-3mm}\textit{Proof:}
    The proof is provided in Appendix \ref{app: Optimality of distribution iff two equations hold}.

Thus, Corollary \ref{corollary: Optimality of distribution iff two equations hold} describes the relationship between the optimal distribution of transmit power $P^*$ for attaining the optimal objective value $m$ of Problem \eqref{Problem: Lagrangian General Problem} and the optimal set of {\color{black}positions regarding the radiating elements} $\mathcal{E}(P^*)$, i.e., the support of $P^*$.

Based on the optimality conditions on $P^*$ defined in Proposition \ref{proposition: Necessary and Sufficient Constraints} and Corollary \ref{corollary: Optimality of distribution iff two equations hold}, the set of optimal {\color{black}positions regarding the radiating elements} is characterised more precisely. 
First, we complete our analysis of the two-dimensional distribution $P^*(a_x,a_z)$. 
Subsequently, we characterise the set of optimal positions under the constraint of a {\color{black} continuous} one-dimensional deployment.
{\color{black} The characterisation requires the following definition in the two-dimensional scenario.}
{\color{black}\begin{definition}[see \cite{Dytso2018}]
A subset $S_1 \subset S_0$ is \emph{dense} in $S_0$ if every element $s \in S_0$ is either an element of $S_1$ or an accumulation point of $S_1$.
A subset $S_1 \subset S_0$ is \emph{nowhere dense} if $S_2 \cap S_1$ is not \emph{dense} in $S_0$, where $S_2$ is any non-empty open set $S_2 \subset S_0$.
\end{definition}}

\begin{proposition}\label{proposition: Nowhere dense set}
    The support $\mathcal{E}(P^*) \subset \mathcal{X}_a\times\mathcal{Z}_a$ of the optimal two-dimensional distribution of power $P^*$ is a nowhere dense set of $\mathcal{X}_a\times\mathcal{Z}_a$.
\end{proposition}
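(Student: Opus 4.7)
The plan is to argue by contradiction: if $\mathcal{E}(P^*)$ failed to be nowhere dense in $\mathcal{X}_a \times \mathcal{Z}_a$, then, since $\mathcal{E}(P^*)$ is closed (being the support of $P^*$), its interior would be non-empty, so there would exist a non-empty open set $O \subseteq \mathcal{E}(P^*)$. By the optimality condition \eqref{eq: optimality condition 2} of Corollary \ref{corollary: Optimality of distribution iff two equations hold}, this would force $\bar{f}(a_x,a_z) = m$ throughout $O$. My key idea is to upgrade this local constancy to global constancy on $\mathbb{R}^2$ via real-analyticity, and then derive a contradiction from the far-field decay of $\bar{f}$.

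To do so, I first establish that $\bar{f}$ is real-analytic in $(a_x,a_z)$ on all of $\mathbb{R}^2$. Since every critical receiver lies at height $y = L_y > 0$ by Key Observation \ref{lemma: Restriction to necessary Locations}, the denominator of each kernel $f_{xz}(a_x,a_z) = 1/[(x-a_x)^2 + L_y^2 + (z-a_z)^2]$ is bounded below by $L_y^2$ uniformly in $(x,z) \in \mathcal{R_{\mathrm{crit}}}$, so each $f_{xz}$ is real-analytic on $\mathbb{R}^2$ and uniformly bounded by $1/L_y^2$. Interpreting the weighted sum in \eqref{eq: bar f} as an integral of these kernels against a finite positive measure on $\mathcal{R_{\mathrm{crit}}}$, the uniform boundedness justifies interchanging derivatives of arbitrary order with integration (equivalently, holomorphic extension of the kernels to a complex neighbourhood of $\mathbb{R}^2$ followed by Morera's theorem), so $\bar{f}$ is real-analytic on $\mathbb{R}^2$.

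Since $\bar{f} - m$ is then real-analytic on the connected open set $\mathbb{R}^2$ and vanishes on the non-empty open subset $O$, the identity theorem for real-analytic functions forces $\bar{f} \equiv m$ on all of $\mathbb{R}^2$. However, as $\|(a_x,a_z)\| \to \infty$, every $f_{xz}(a_x,a_z) \to 0$ uniformly in $(x,z) \in \mathcal{R_{\mathrm{crit}}}$ (which is a bounded subset of $\mathcal{X}\times\{L_y\}\times\mathcal{Z}$), so $\bar{f}(a_x,a_z) \to 0$. Because unit transmit power is guaranteed to yield strictly positive received power at every bounded receiver position, we have $m > 0$, contradicting $\bar{f} \equiv m$ on $\mathbb{R}^2$. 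Hence $\mathcal{E}(P^*)$ has empty interior and, being closed, is nowhere dense in $\mathcal{X}_a \times \mathcal{Z}_a$. The main technical obstacle I foresee is the rigorous justification of the analyticity of $\bar{f}$ when $\mathcal{R_{\mathrm{crit}}}$ and its associated weighting are possibly uncountable; once analyticity is secured, the identity theorem and the far-field decay argument are routine.
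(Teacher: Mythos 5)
Your proof is correct, and it shares the paper's overall skeleton --- argue by contradiction, establish that $\bar{f}$ is real-analytic on $\mathbb{R}^2$, and invoke the identity theorem --- but it substitutes a genuinely different argument for the crucial non-constancy step. The paper shows directly that $\bar{f}$ is not constant on $\mathcal{X}_a\times\mathcal{Z}_a$ by computing $\partial\bar{f}/\partial a_x$ and observing that it vanishes only where $x=a_x$; you instead let the identity theorem propagate $\bar{f}\equiv m$ to all of $\mathbb{R}^2$ and then contradict this with the far-field decay $\bar{f}(a_x,a_z)\to 0$ as $\|(a_x,a_z)\|\to\infty$ (uniform because $\mathcal{R}_{\mathrm{crit}}$ is bounded) combined with $m>0$ (which holds since $f_{xyz}\geq 1/D_{\max}^2>0$ over the bounded room and $P^*$ has unit mass). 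Your route avoids the explicit derivative computation --- which in the paper is slightly delicate anyway, since the derivative is a weighted sum over critical positions and could in principle vanish at isolated points even when individual terms do not --- at the cost of needing the global analytic continuation and the positivity of $m$; both ingredients are immediate here, so the trade is favourable. You also use the standard characterisation that a closed set failing to be nowhere dense has non-empty interior, whereas the paper works with the dense-intersection definition imported from Dytso et al.; these agree because $\mathcal{E}(P^*)$ is closed. Two minor points to tighten: (i) a non-empty relatively open subset $O$ of the closed rectangle $\mathcal{X}_a\times\mathcal{Z}_a$ need not be open in $\mathbb{R}^2$, but $O\cap\mathrm{int}(\mathcal{X}_a\times\mathcal{Z}_a)$ is non-empty and open, which is what the identity theorem on $\mathbb{R}^2$ actually requires; (ii) your care about a possibly uncountable $\mathcal{R}_{\mathrm{crit}}$ is more scrupulous than the paper, which simply writes finite sums, and your uniform-bound/Morera justification of analyticity under the integral is sound.
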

\hspace*{-3mm}\textit{Proof:}
    The proof is provided in Appendix \ref{app: Nowhere dense set}.

\begin{proposition}\label{proposition: Measure zero}
    The support $\mathcal{E}(P^*) \subset \mathcal{X}_a\times\mathcal{Z}_a$ of the optimal two-dimensional distribution of power $P^*$ is of Lebesgue measure zero.
\end{proposition}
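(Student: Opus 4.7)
The plan is to exploit the real analyticity of $\bar{f}$. By Corollary \ref{corollary: Optimality of distribution iff two equations hold}, $\bar{f}(a_x,a_z)-m=0$ on $\mathcal{E}(P^*)$, so it suffices to prove that $\bar{f}-m$ is a non-trivial real analytic function on $\mathcal{X}_a\times\mathcal{Z}_a$; the classical fact that the zero set of a non-identically-zero real analytic function on an open subset of $\mathbb{R}^2$ has Lebesgue measure zero then yields the claim. Note that this is a strict strengthening of Proposition \ref{proposition: Nowhere dense set}, since nowhere-dense sets can still have positive (even full) Lebesgue measure.

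First, I would verify that $\bar{f}$ is real analytic on $\mathcal{X}_a\times\mathcal{Z}_a$. Each kernel $f_{xz}(a_x,a_z)=1/[(x-a_x)^2+L_y^2+(z-a_z)^2]$ is real analytic on all of $\mathbb{R}^2$, since its denominator is bounded below by $L_y^2>0$, and it extends holomorphically to a common complex strip around $\mathcal{X}_a\times\mathcal{Z}_a$ with uniform (in $(x,z)$) bounds on all partial derivatives over compact subsets. Interpreting the weighted combination in \eqref{eq: bar f} as the expectation of $f_{xz}$ against a finite positive probability measure on the critical receiver positions, real analyticity of $\bar{f}$ follows from dominated convergence, which legitimates the interchange of differentiation and integration.

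Second, I would rule out $\bar{f}\equiv m$. Since $\mathcal{E}(P^*)$ is non-empty (as the support of a probability measure) and $f_{xz}>0$ on $\mathcal{X}_a\times\mathcal{Z}_a$, Corollary \ref{corollary: Optimality of distribution iff two equations hold} forces $m>0$. If $\bar{f}$ were identically equal to $m$ on $\mathcal{X}_a\times\mathcal{Z}_a$, real analyticity would propagate this identity to all of $\mathbb{R}^2$; but $f_{xz}(a_x,a_z)\to 0$ as $|(a_x,a_z)|\to\infty$ for each fixed critical $(x,z)$, and since every $f_{xz}$ is dominated by $1/L_y^2$, dominated convergence gives $\bar{f}(a_x,a_z)\to 0$, contradicting $\bar{f}\equiv m>0$. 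Hence $\bar{f}-m$ is non-trivial, and its zero set, which contains $\mathcal{E}(P^*)$, has Lebesgue measure zero. The main obstacle is the rigorous analyticity step when $\mathcal{R}^{\mathrm{crit}}$ is potentially uncountable: it requires a clean measure-theoretic reinterpretation of \eqref{eq: bar f} together with careful uniform derivative estimates on a complex neighbourhood of $\mathcal{X}_a\times\mathcal{Z}_a$, which is what makes the interchange of differentiation and integration admissible.
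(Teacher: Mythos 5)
Your proof is correct, and it differs from the paper's in two substantive ways. First, for non-constancy of $\Bar{f}$: the paper computes $\partial \Bar{f}/\partial a_x$ explicitly and observes it is non-zero whenever $x \neq a_x$, whereas you argue by decay at infinity --- if $\Bar{f}\equiv m$ on $\mathcal{X}_a\times\mathcal{Z}_a$ then analytic continuation forces $\Bar{f}\equiv m$ on all of $\mathbb{R}^2$, contradicting $\Bar{f}\to 0$ at infinity and $m>0$. Both are valid; the paper's derivative computation is elementary and self-contained, while your argument is shorter but needs the continuation step and a dominated-convergence interchange (harmless here since $f_{xz}\leq 1/L_y^2$ uniformly). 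Second, and more importantly, your concluding step is cleaner than the paper's: you invoke the classical fact that the zero set of a non-identically-zero real-analytic function on a connected open set has Lebesgue measure zero, which directly covers $\mathcal{E}(P^*)\subseteq\{\Bar{f}=m\}$. The paper instead assumes a set of positive Lebesgue measure contains a non-empty open subset on which the Identity Theorem (as stated for open sets) can be applied --- a step that is not literally justified, since positive-measure sets (e.g., fat Cantor-type sets) can have empty interior. Your route closes that gap. Your explicit remark that measure-zero is not implied by nowhere-density is also apt and explains why the paper states Propositions \ref{proposition: Nowhere dense set} and \ref{proposition: Measure zero} separately. The only caveat is the one you already flag yourself: the weighted combination in \eqref{eq: bar f} must be read as integration against a finite positive measure on $\mathcal{R}_{\mathrm{crit}}$ for the analyticity and limit interchanges to be rigorous when $\mathcal{R}_{\mathrm{crit}}$ is not finite --- the paper's sum notation quietly elides the same issue.
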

\hspace*{-3mm}\textit{Proof:}
    The proof is provided in Appendix \ref{app: Measure zero}.

Thus, $\mathcal{E}(P^*)$ is a set of pairs $(a_x,a_z)$, which define the optimal {\color{black} positions of radiating elements}, that is nowhere dense and has Lebesgue measure zero, i.e., the set $\mathcal{E}(P^*)$ does not occupy any measurable surface in $\mathcal{X}_a \times \mathcal{Z}_a$.
Intuitively, the set $\mathcal{E}(P^*)$ could consist of isolated points scattered throughout $\mathcal{X}_a \times \mathcal{Z}_a$ since both properties hold in this case. 
On the other hand, Propositions \ref{proposition: Nowhere dense set} and \ref{proposition: Measure zero} do not preclude the possibility of $\mathcal{E}(P^*)$ containing more complex structures, such as continuous one-dimensional structures, as long as $\mathcal{E}(P^*)$ is nowhere dense and of measure zero. 
This is further investigated in Section \ref{Section: System Discretisation}.
{\color{black} We note that based on Propositions \ref{proposition: Nowhere dense set} and \ref{proposition: Measure zero}, a continuous planar transmit antenna architecture, which is utilised in HMIMO, is unnecessary for WPT.}
Moreover, observe that, by virtue of the max-min problem, the optimal distribution must be symmetrical w.r.t. the $x$-axis and the $z$-axis, i.e., $P^*(a_x,a_z) = P^*(-a_x,a_z) =P^*(-a_x,-a_z) =P^*(a_x,-a_z)$. Violating this symmetry condition would lead to a decrease in performance at the worst possible receiver positions, i.e., a decrease of the objective value.

The results presented in this section, which allowed drawing conclusions regarding the optimal deployment of transmit {\color{black}radiating elements} $\mathcal{E}(P^*)$, pertained to the analysis of a two-dimensional distribution of power. 
In the following, we restrict the {\color{black}continuous transmit antenna architecture} to lie on the one-dimensional set $\mathcal{L}_a$, which results in a problem analogous to \cite{smith1971information,Morsi2020} (see Section \ref{Subsection: Transmit Antenna Model}). Consequently, the optimal distribution of power $\Tilde{P}^*(a_l)$ along $\mathcal{L}_a$ is one-dimensional. This problem is investigated in the following, whereby the {\color{black}optimal set of positions of the radiating elements} $\Tilde{\mathcal{E}}(\Tilde{P}^*) \subset \mathcal{L}_a$ is shown to be finite.

\begin{proposition}\label{proposition: 1D result}
    The support $\Tilde{\mathcal{E}}(\Tilde{P}^*) \subset \mathcal{L}_a$ of the optimal one-dimensional distribution is a finite set of points.
\end{proposition}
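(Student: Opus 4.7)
The plan is to adapt the classical argument of Smith \cite{smith1971information} and Morsi et al.\ \cite{Morsi2020}, where finiteness of the support of an optimal input distribution is established by combining the KKT-type optimality conditions with the real-analyticity of the averaged kernel. First I would specialise Corollary \ref{corollary: Optimality of distribution iff two equations hold} to the one-dimensional setting, which yields
\begin{align*}
\tilde{\bar{f}}(a_l) \leq m \quad \forall a_l \in \mathcal{L}_a, \qquad \tilde{\bar{f}}(a_l) = m \quad \forall a_l \in \tilde{\mathcal{E}}(\tilde{P}^*),
\end{align*}
where $\tilde{\bar{f}}$ is the weighted average of $f_{xz}(a_l) = 1/D_{xz}^2(a_l)$ over the critical receiver positions, as defined in \eqref{eq: bar f}.

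The second step is to show that $\tilde{\bar{f}}$ admits an analytic extension to an open complex strip containing $\mathcal{L}_a$. Parameterising $\mathcal{L}_a$ linearly by the arc-length coordinate $a_l$, the squared distance $D_{xz}^2(a_l)$ is a quadratic polynomial in $a_l$ whose real minimum is bounded below by $L_y^2 > 0$ (as ensured by Remark~1 and the restriction of receivers to lie beyond the reactive near-field). Consequently, for every critical $(x,z)$, the function $f_{xz}(a_l)$ extends to a rational function on $\mathbb{C}$ whose poles have imaginary parts bounded away from zero uniformly in $(x,z)$, because $\mathcal{R}_{\mathrm{crit}}$ is bounded. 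The weighted average \eqref{eq: bar f} preserves analyticity by a standard Morera/Fubini argument (or dominated convergence applied to power series), and thus $\tilde{\bar{f}}$ is analytic on some open strip $\mathcal{S}\subset\mathbb{C}$ containing $\mathbb{R}$.

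The third step is a proof by contradiction. Assume $\tilde{\mathcal{E}}(\tilde{P}^*)$ is infinite. Since $\mathcal{L}_a$ is closed and bounded, the Bolzano-Weierstrass theorem yields an accumulation point of $\tilde{\mathcal{E}}(\tilde{P}^*)$ in $\mathcal{L}_a$, at which $\tilde{\bar{f}}-m$ vanishes non-isolatedly. The identity theorem for analytic functions then forces $\tilde{\bar{f}}(a_l)\equiv m$ on the connected component of $\mathcal{S}$ containing that accumulation point, and in particular $\tilde{\bar{f}}(a_l)=m$ for every $a_l\in\mathbb{R}$. However, because $\mathcal{R}_{\mathrm{crit}}$ is bounded, $f_{xz}(a_l)=[(x-a_{l,x})^2+L_y^2+(z-a_{l,z})^2]^{-1}\to 0$ uniformly in $(x,z)$ as $|a_l|\to\infty$ on the real line, whence $\tilde{\bar{f}}(a_l)\to 0$; this contradicts $\tilde{\bar{f}}\equiv m>0$. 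Hence $\tilde{\mathcal{E}}(\tilde{P}^*)$ has no accumulation point, and being a bounded subset of $\mathcal{L}_a$, it is finite.

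The main obstacle I anticipate is the rigorous justification of analyticity in the second step when the ``sum'' in \eqref{eq: bar f} must be interpreted as an integral against a general positive measure on $\mathcal{R}_{\mathrm{crit}}$ (rather than a discrete sum over finitely many critical positions). The uniform bound on the location of the complex poles coming from the boundedness of $\mathcal{R}_{\mathrm{crit}}$ and the strict separation of the pole set from $\mathbb{R}$ make the interchange of integration and analytic continuation standard, but writing this out carefully, together with the uniform decay $f_{xz}\to 0$ as $|a_l|\to\infty$ needed for the asymptotic contradiction, is the one place where the argument requires some measure-theoretic care rather than a direct rational-function computation.
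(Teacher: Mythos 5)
Your proposal is correct and follows the same skeleton as the paper's proof in Appendix G: establish real-analyticity of the averaged kernel $\Bar{f}$, invoke Bolzano--Weierstrass to extract an accumulation point of an assumed-infinite support, apply the identity theorem to conclude $\Bar{f}\equiv m$, and derive a contradiction. The one genuine difference is the final contradiction step. The paper stays on the bounded set $\mathcal{L}_a$ and contradicts $\Bar{f}\equiv m$ with the explicit non-constancy computation from Appendix E, namely that $\partial\Bar{f}/\partial a_x$ is non-zero away from $x=a_x$. You instead extend $\Bar{f}$ analytically to a complex strip of half-width $L_y$ about all of $\mathbb{R}$ (legitimate, since the poles of each $f_{xz}$ sit at imaginary part $\pm\sqrt{(\cdot)}\geq L_y>0$), propagate $\Bar{f}\equiv m$ to the whole real line, and contradict this with the uniform decay $\Bar{f}(a_l)\to 0$ as $|a_l|\to\infty$, which follows from boundedness of $\mathcal{R}_{\mathrm{crit}}$ and $m>0$. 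Your variant is self-contained and avoids relying on the derivative computation, at the cost of having to justify the extension beyond the physical aperture; the paper's version is shorter because the non-constancy fact is already needed for Propositions \ref{proposition: Nowhere dense set} and \ref{proposition: Measure zero}. Your worry about $\mathcal{R}_{\mathrm{crit}}$ being a general measure is not an issue in this paper's setting: the receiver positions are centred on a lattice of cells of area $S_{RX}$, so the sum in \eqref{eq: bar f} is over an (at most countable, in fact finite) discrete set and no Morera/Fubini interchange is needed beyond uniform convergence of the weighted sum on compact subsets of the strip.
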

\hspace*{-3mm}\textit{Proof:}
    The proof is provided in Appendix \ref{app: 1D result}.

We note that a continuous {\color{black}transmit antenna architecture} is unnecessary for one-dimensional transmit antenna architectures since Proposition \ref{proposition: 1D result} guarantees that a finite number of radiating elements is sufficient.

\section{SYSTEM DISCRETISATION}\label{Section: System Discretisation}
In the following, to shed light on our analytical results in Section \ref{section: Methods}, the optimal {\color{black} transmit antenna architecture} is studied via numerical analysis.
In order to solve Problem \eqref{Problem: General Problem} numerically, the {\color{black} continuous sets} $\mathcal{X}_a\times\mathcal{Z}_a$ and $\mathcal{L}_a$ have to be discretised.
For a one-dimensional transmit antenna {\color{black}architecture}, we have proven in Proposition \ref{proposition: 1D result} that a finite number of transmit antennas placed in $\Tilde{\mathcal{E}}(\Tilde{P}^*) \subset \mathcal{L}_a$ is optimal.
For a two-dimensional transmit antenna {\color{black}architecture}, Proposition \ref{proposition: Measure zero} guarantees that the optimal deployment of transmit antennas is on a set $\mathcal{E}(P^*) \subset \mathcal{X}_a\times\mathcal{Z}_a$ with Lebesgue measure zero. 
We {\color{black}obtain approximations of} the optimal solution to Problem \eqref{Problem: General Problem} through discretising the problem, and subsequently, solving it numerically. 
{\color{black}We verify} numerically {\color{black}that} the solution of the optimisation problem remains unchanged for finer levels of discretisation (LoD), i.e., for an increasing granularity of the sampling grid.
Further, since Propositions \ref{proposition: Nowhere dense set} and \ref{proposition: Measure zero} ensure that the optimal set of transmit antennas $\mathcal{E}(P^*) \subset \mathcal{X}_a\times\mathcal{Z}_a$ is nowhere dense and has Lebesgue measure zero, which are properties satisfied by a set of isolated points scattered within $\mathcal{X}_a\times\mathcal{Z}_a$, we conjecture that such a set {\color{black}of isolated points} is optimal. Thereby, to support this conjecture by simulation results, it is necessary to investigate whether the number of transmit antennas increases monotonically for finer LoD or if the number of transmit antennas remains constant. 
Intuitively, if the latter is true, there is an optimal, finite number of transmit antennas, which depends on the geometrical properties of the environment, i.e., $L_x$, $L_y$, and $L_z$, at least from a practical point of view. 
The three-dimensional environments, which are investigated in this paper, are listed in Table \ref{tab: Considered Environments}. 
The difference between the environments is the ratio of height to width since the width and depth of the room are chosen to be equal. The environments are denoted by their height-to-width ratio. 
\begin{table}[h]
\renewcommand{\arraystretch}{1.3}
\caption{Definition of the four environments analysed in Sections \ref{Section: System Discretisation} and \ref{Section: Simulation Results}.}
\label{tab: Considered Environments}
\centering
\begin{tabular}{|c|c|c|c|}
\hline
Height $L_y$ in [m] & Width $L_x$ in [m] & Depth $L_z$ in [m] & $\frac{\mathrm{Height}}{\mathrm{Width}}$\\ \hhline{|=|=|=|=|}
2 & 2 & 2 & 1/1\\ \hline
2 & 6 & 6 & 1/3\\ \hline
2 & 8 & 8 & 1/4\\ \hline
2 & 10 & 10 & 1/5\\ \hline
\end{tabular}
\end{table}
{\color{black} We note that the dimensions of the environments listed in Table 1 are realistic, e.g., for an office or a room in a residential building.
In environments spanning a significantly larger area, such as, a factory building, mmWave operating frequencies become unaffordable due to the large path loss, which would result in a very low received power when receivers are located far away from a transmitter, and thus, operation at a lower frequency would be preferred.
However, NLoS components become increasingly important for decreasing operating frequency, and thus, the wireless channel may not be accurately described by a deterministic LoS channel.
In this case, it is preferable to consider alternative wireless systems and models, such as those in \cite{Rosabal21,Jia21,Zhou23,Tabassum15}, for which optimal transmit antenna deployment and power allocation differ significantly in methodology and solutions.}

\subsection{DISCRETISED OPTIMISATION PROBLEM}
In the following, the set of transmit antenna positions, i.e., $\mathcal{X}_a\times\mathcal{Z}_a$ for the two-dimensional problem and $\mathcal{L}_a$ for the one-dimensional problem, is discretised. 
In this paper, we only investigate systems where the {\color{black}discretised transmit antenna architecture, which we also refer to as a (discrete) antenna array,} covers the entire ceiling{\color{black}\footnote{{\color{black} 
In Section II-\ref{Subsection: Transmit Antenna Model}, the deployment of excessively large continuous surfaces in typical indoor environments was deemed unsuitable due to obstructions in the ceiling.
This problem is substantially less severe when deploying discrete transmit antennas within $\mathcal{X}_a\times\mathcal{Z}_a$ or $\mathcal{L}_a$ as considered in this section.
}}}, i.e., $L_{a_x} = L_x$ and $L_{a_z} = L_z$. 
The one-dimensional antenna array is also placed on the ceiling along the $x$-axis, i.e., $\mathcal{L}_a = \mathcal{Z}_a$.
Moreover, we {\color{black}set} $L_x = L_z$.
The LoD is defined as the number $N+1$ of equally-sized intervals into which the sets $\mathcal{X}_a$, $\mathcal{Z}_a$ and $\mathcal{L}_a$ are partitioned, respectively.
Consequently, we have a grid of antenna positions with $(N+1)^2$ elements in $\mathcal{X}_a \times \mathcal{Z}_a$ and $N+1$ elements in $\mathcal{L}_a$, respectively. 
The position of transmit antenna $i$, which corresponds to one element of the grid, is given as {\color{black}$(a_{x},a_{z})_i$}, with $i=1,\dots,(N+1)^2$, in two dimensions and {\color{black}$(a_{l})_i$}, with $i=1,\dots,N+1$, in one dimension.
{\color{black}In the following, the coordinates of position $(a_{x},a_{z})_i$ are denoted by $a_{xi}$ and $a_{zi}$.
Similarly, the coordinate of position $(a_{l})_i$ is given by $a_{li}$.}
By utilising the definitions in Section II-\ref{subsection: Transmit Signal Model}, Problem \eqref{Problem: General Problem} is discretised as follows
\begin{subequations}\label{Problem: Epigraph General Problem Discrete}
    \begin{alignat}{2}
    &\underset{o \in \mathbb{R}, p(a_{xi},a_{zi})}{\text{maximise}}
    &\qquad& o \label{Objective: Epigraph General Problem Discrete}\\
    &\text{subject to} 
    && o \leq \sum_{(a_{xi},a_{zi})} f_{xz}(a_{xi},a_{zi}) p(a_{xi},a_{zi}), \quad \nonumber \\ &&&\forall (x,L_y,z) \in \mathcal{X}\times\mathcal{Y_{\mathrm{crit}}}\times\mathcal{Z}, \label{Constraint: Epigraph General Problem Discrete} \\
    &&& \sum_{(a_{xi},a_{zi})} p(a_{xi},a_{zi}) = 1, \label{Constraint: Epigraph General Problem Discrete, sum finite} \\
    &&& p(a_{xi},a_{zi}) \geq 0,
    \label{Constraint: Epigraph General Problem Discrete, non-neg}
    \end{alignat}
    \end{subequations}
where the problem has been rewritten in epigraph form \cite{boyd2004convex}. 
After discretisation, the convex optimisation problem in \eqref{Problem: Epigraph General Problem Discrete} can be solved efficiently, for example, by applying the interior-point method \cite{boyd2004convex}. The problem is solved using standard numerical tools for convex optimisation, such as CVXPY \cite{diamond2016cvxpy}, \cite{agrawal2018rewriting}.
For all results shown, any value below $10^{-6}$ in the optimal distribution of power was deemed to a {\color{black} representation and computation error in computer arithmetics} and was set to zero. 
In the worst case, values below $10^{-6}$ accounted for less than $10^{-5}\%$ of the total power distributed among the transmit antennas.

\subsection{LEVEL OF DISCRETISATION}\label{Subsection: Fundamental Analysis --> Level of Discretisation}
There is an inherent trade-off between the computational complexity of solving a problem and the accuracy of its solution. The necessary LoD is determined based on a desired accuracy and a criterion to assess whether the proposed solution is sufficiently accurate. To this end, Problem \eqref{Problem: Epigraph General Problem Discrete} was solved repeatedly for increasing granularity of the sampling grid.
Subsequently, the differences between the objective values of consecutive iterations were evaluated. 

In Fig. \ref{fig: Objective value difference vs. n}, the absolute value of the relative difference between the objective values, i.e., the absolute value of the difference between the current and previous objective value divided by the current objective value, was calculated for every $N$. 
As the LoD increases, the accuracy increases for all investigated environments since the objective value stabilises. 
{\color{black} From Fig. \ref{fig: Objective value difference vs. n}, at LoD $N+1=81$, the relative difference lies below $0.05\%$ and beyond that, the improvements were deemed negligible. 
Consequently, unless specified otherwise, all simulations in the following are performed at LoD $N+1=81$.}
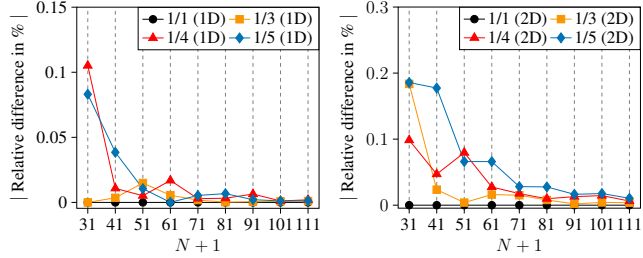
\begin{figure}[t]
    \centering
    \scalebox{0.47}{
    \begin{minipage}{0.5\textwidth}
        \centering
\begin{tikzpicture}

\definecolor{black}{RGB}{0,0,0}
\definecolor{orange}{RGB}{250,149,0}
\definecolor{sky-blue}{RGB}{86,180,233}
\definecolor{bluish-green}{RGB}{0,158,115}
\definecolor{yellow}{RGB}{240, 228, 66}
\definecolor{blue}{RGB}{0,114,178}
\definecolor{vermillion}{RGB}{255, 0, 0}
\definecolor{reddish-purple}{RGB}{204, 121, 167}

\begin{axis}[
name=ax1,
xmajorgrids=true,
grid style={help lines,dashed},
legend cell align={left},
legend columns=2, 
legend style={at={(1,1)}, draw opacity=1, text opacity=1, font=\Large},
tick align=outside,
tick pos=left,
xlabel={$N+1$},
xmin=27, xmax=115,
xtick style={color=black},
ylabel={$\vert$ Relative difference in \% $\vert$},
ymin=-0.004530827928549, ymax=0.15,
ytick={0,0.05,0.1,0.15},
yticklabels={0,0.05,0.1,0.15},
ytick style={color=black},
xtick={31,41,51,61,71,81,91,101,111},
xticklabels={31,41,51,61,71,81,91,101,111},
label style={font=\Large},
tick label style={font=\Large} 
]

\addplot [black, mark=*, thick, mark size=3pt]
table{%
x  y
31 1.00098040967112e-08
41 9.12148134801782e-10
51 3.80995235360615e-10
61 4.13336032067946e-11
71 4.79888462301403e-08
81 4.7987613882583e-08
91 6.94444501903035e-11
101 2.26478835685384e-09
111 2.33264518811893e-09
};
\addlegendentry{1/1 (1D)}

\addplot [orange, mark=square*, thick, mark size=3pt]
table{%
x  y
31 1.56725643574163e-07
41 0.00356341542218086
51 0.0148954767263176
61 0.00549884555144375
71 0.00178649181888924
81 0.000180216414247258
91 0.000541436571566134
101 0.000860405506952766
111 0.000984962168770753
};
\addlegendentry{1/3 (1D)}

\addplot [red, mark=triangle*, thick, mark size=4pt]
table{%
x  y
31 0.105092153601949
41 0.0107931752144408
51 0.00514989672097954
61 0.0168683544670145
71 0.00292387987358422
81 0.00319331045205429
91 0.00644662420428155
101 0.00113757730164643
111 0.00201142315092717
};
\addlegendentry{1/4 (1D)}

\addplot [blue, mark=diamond*, thick, mark size=4pt]
table{%
x  y
31 0.0830752978476457
41 0.0384968117405293
51 0.0105431050199001
61 6.21156481361851e-05
71 0.00537533646726152
81 0.00674493174241952
91 0.00210352862385355
101 0.00112420767468535
111 0.00128375973257322
};
\addlegendentry{1/5 (1D)}

\end{axis}

\end{tikzpicture}%
    \end{minipage}%
    }
    \hfill
    \scalebox{0.47}{
    \begin{minipage}{0.5\textwidth}
        \centering
\begin{tikzpicture}

\definecolor{black}{RGB}{0,0,0}
\definecolor{orange}{RGB}{250,149,0}
\definecolor{sky-blue}{RGB}{86,180,233}
\definecolor{bluish-green}{RGB}{0,158,115}
\definecolor{yellow}{RGB}{240, 228, 66}
\definecolor{blue}{RGB}{0,114,178}
\definecolor{vermillion}{RGB}{213, 94, 0}
\definecolor{reddish-purple}{RGB}{204, 121, 167}

\begin{axis}[
name=ax1,
xmajorgrids=true,
grid style={help lines,dashed},
legend cell align={left},
legend columns=2, 
legend style={at={(1,1)}, draw opacity=1, text opacity=1, font=\Large},
tick align=outside,
tick pos=left,
xlabel={$N+1$},
xmin=27, xmax=115,
xtick style={color=black},
ylabel={$\vert$ Relative difference in \% $\vert$},
ymin=-0.004530827928549, ymax=0.3,
ytick style={color=black},
xtick={31,41,51,61,71,81,91,101,111},
xticklabels={31,41,51,61,71,81,91,101,111},
label style={font=\Large},
tick label style={font=\Large}
]



\addplot [black, mark=*, thick, mark size=3pt]
table{%
x  y
31 1.92449722824506e-08
41 6.63363808328654e-09
51 2.74333000760407e-08
61 2.58820631593437e-08
71 4.413136522885e-11
81 7.11244396711663e-09
91 2.79548273418584e-08
101 4.670571707166e-08
111 8.19257106599025e-08
};
\addlegendentry{1/1 (2D)}

\addplot [orange, mark=square*,thick, mark size=3pt]
table{%
x  y
31 0.183694665097456
41 0.0235044037345977
51 0.00390359234628823
61 0.0160326992138748
71 0.0150854421137048
81 0.00788773319545522
91 0.00198795321828538
101 0.00415638096633586
111 0.00378371078727513
};
\addlegendentry{1/3 (2D)}
\addplot [red, mark=triangle*, thick, mark size=4pt]
table{%
x  y
31 0.098477766431726
41 0.0469765648228537
51 0.0795287975037584
61 0.0275416581536403
71 0.017501197955927
81 0.00980427260733885
91 0.0129068734782467
101 0.0145231938039059
111 0.00620101264323658
};
\addlegendentry{1/4 (2D)}
\addplot [blue, mark=diamond*, thick, mark size=4pt]
table{%
x  y
31 0.185739050754796
41 0.17734476074025
51 0.0660816611204873
61 0.0659818850371807
71 0.028140687084488
81 0.0276514806887507
91 0.0164350057492046
101 0.0176245703772393
111 0.010460989092187
};
\addlegendentry{1/5 (2D)}

\end{axis}

\end{tikzpicture}%
    \end{minipage}
    }
    \caption{Absolute value of relative difference between objective values in percent for increasing $N$ for the height-to-width ratios defined in Table \ref{tab: Considered Environments}. 
    \label{fig: Objective value difference vs. n}
    }
\end{figure}

\subsection{FINITENESS OF OPTIMAL POWER DISTRIBUTION}\label{Subsection: Fundamental Analysis --> Sparsity of optimal distribution}
In this section, the optimal number of transmit antennas\footnote{The optimal number of transmit antennas is defined as the number of non-zero values in the solution $p^*(a_{ix},a_{iz})$ of Problem \eqref{Problem: Epigraph General Problem Discrete}.} $N_t^*$ is investigated for increasing LoD.
{\color{black}To this end, the optimal number of transmit antennas for each considered environment is displayed in Fig. \ref{fig: number antennas}. }
\begin{figure}[t]
    \centering
    \scalebox{0.47}{
    \begin{minipage}{0.5\textwidth}
        \centering
\begin{tikzpicture}

\definecolor{black}{RGB}{0,0,0}
\definecolor{orange}{RGB}{250,149,0}
\definecolor{sky-blue}{RGB}{86,180,233}
\definecolor{bluish-green}{RGB}{0,158,115}
\definecolor{yellow}{RGB}{240, 228, 66}
\definecolor{blue}{RGB}{0,114,178}
\definecolor{vermillion}{RGB}{213, 94, 0}
\definecolor{reddish-purple}{RGB}{204, 121, 167}

\begin{axis}[
name=ax1,
xmajorgrids=true,
grid style={help lines,dashed},
legend cell align={left},
legend columns=2,
legend style={at={(1,1)}, draw opacity=1, text opacity=1, font=\Large},
tick align=outside,
tick pos=left,
xlabel={$N+1$},
xmin=17, xmax=151,
xtick style={color=black},
ylabel={$N_t^*$},
ymin=-1.24530827928549, ymax=10,
ytick style={color=black},
xtick={21,41,61,81,101,121,141},
xticklabels={21,41,61,81,101,121,141},
label style={font=\Large},
tick label style={font=\Large}  
]

\addplot [black, mark=*,thick, mark size=3pt]
table{%
x  y
21 1
41 1
61 1
81 1
101 1
121 1
141 1
};
\addlegendentry{1/1 (1D)}

\addplot [orange, mark=square*,thick, mark size=3pt]
table{%
x  y
21 2
41 2
61 2
81 4
101 2
121 2
141 2
};
\addlegendentry{1/3 (1D)}

\addplot [red, mark=triangle*,thick, mark size=4pt]
table{%
x  y
21 4
41 2
61 4
81 4
101 2
121 4
141 4
};
\addlegendentry{1/4 (1D)}

\addplot [blue, mark=diamond*,thick, mark size=4pt]
table{%
x  y
21 4
41 4
61 4
81 4
101 4
121 4
141 4
};
\addlegendentry{1/5 (1D)}

\end{axis}

\end{tikzpicture}%
    \end{minipage}%
    }
    \hfill
    \scalebox{0.47}{
    \begin{minipage}{0.5\textwidth}
        \centering
\begin{tikzpicture}

\definecolor{black}{RGB}{0,0,0}
\definecolor{orange}{RGB}{250,149,0}
\definecolor{sky-blue}{RGB}{86,180,233}
\definecolor{bluish-green}{RGB}{0,158,115}
\definecolor{yellow}{RGB}{240, 228, 66}
\definecolor{blue}{RGB}{0,114,178}
\definecolor{vermillion}{RGB}{213, 94, 0}
\definecolor{reddish-purple}{RGB}{204, 121, 167}

\begin{axis}[
name=ax1,
xmajorgrids=true,
grid style={help lines,dashed},
legend cell align={left},
legend columns=2,
legend style={at={(1,1)}, draw opacity=1, text opacity=1, font=\Large},
tick align=outside,
tick pos=left,
xlabel={$N+1$},
xmin=17, xmax=151,
xtick style={color=black},
ylabel={$N_t^*$},
ymin=-1.24530827928549, ymax=60,
ytick style={color=black},
xtick={21,41,61,81,101,121,141},
xticklabels={21,41,61,81,101,121,141},
label style={font=\Large},
tick label style={font=\Large}  
]
\addplot [black, mark=*,thick, mark size=3pt]
table{%
x  y
21 1
41 1
61 1
81 1
101 1
121 1
141 1
};
\addlegendentry{1/1 (2D)}

\addplot [orange, mark=square*,thick, mark size=3pt]
table{%
x  y
21 12
41 12
61 12
81 12
101 12
121 12
141 12
};
\addlegendentry{1/3 (2D)}
\addplot [red, mark=triangle*,thick, mark size=4pt]
table{%
x  y
21 29
41 29
61 25
81 13
101 21
121 21
141 24
};
\addlegendentry{1/4 (2D)}
\addplot [blue, mark=diamond*,thick, mark size=4pt]
table{%
x  y
21 40
41 42
61 40
81 40
101 36
121 44
141 40
};
\addlegendentry{1/5 (2D)}

\end{axis}

\end{tikzpicture}%
    \end{minipage}
    }
    \caption{Optimal number of transmit antennas, i.e., number of positions where non-zero powers are allocated by the optimal distribution, for increasing $N$ for the height-to-width ratios  given in Table \ref{tab: Considered Environments}.}. 
    \label{fig: number antennas}
    \vspace*{-5mm}
\end{figure}
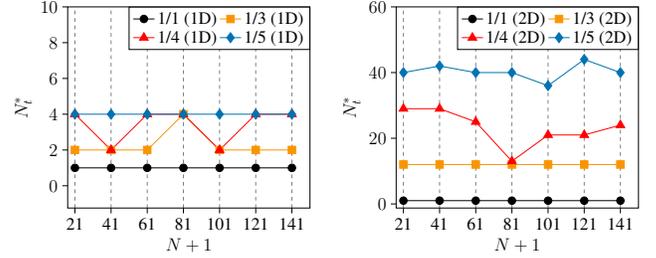
In Fig. \ref{fig: number antennas}, one can observe that the optimal number of transmit antennas $N_t^*$ does not increase monotonically for different LoDs, which suggests that a finite number of antennas is indeed optimal, even for a two-dimensional array. 
The optimal number of antennas to be deployed depends on the geometry of the environment and the dimensionality of the transmit antenna {\color{black}architecture}.
{\color{black} We note that in Section \ref{section: Methods}, the optimal solution of the problem is characterised in the context of a continuous architecture and a slight mismatch between the optimal positions regarding the continuous architecture and the optimal positions regarding the discretised architecture may occur depending on how well the optimal solution can be embedded in the grid at a certain granularity (see Section V-\ref{Subsection: Fundamental Analysis --> Heatmaps of optimal distributions}).}
Consequently, slight fluctuations regarding the optimal number of transmit antennas are expected, which are visible in Fig. \ref{fig: number antennas}.
However, note that these fluctuations are negligible compared to the difference in the number of grid cells among different LoDs, especially for the two-dimensional array with $(N+1)^2$ cells. 

In general, the optimal number of transmit antennas is larger when employing a two-dimensional antenna array compared to the deployment over a one-dimensional transmit antenna array. 
Moreover, for a decreasing height-to-width ratio, the optimal number of transmit antennas increases, especially when employing a two-dimensional antenna array. 

\section{PERFORMANCE AND SENSITIVITY ANALYSIS}\label{Section: Simulation Results}
{\color{black}We assume a LoD of $N+1=81$, a transmit power of $10$ W, $S_{RX}=\lambda^2/4$, which is on the scale of the antenna aperture proposed in \cite{Collado13}, and a wavelength of $\lambda = 12.5$ mm for the considered system\footnote{{\color{black}A wavelength of $\lambda = 12.5$ mm corresponds to an operating frequency of $24$ GHz, which is at the low end of the mmWave frequency band \cite{Wagih20}.}
The electromagnetic field (EMF) radiation exposure limit for the proposed system operating at {\color{black}$24$} GHz is around $9.9$ W/m$^2$ \cite{Alhasnawi20}. 
Considering the worst-scenario, i.e., when only a single transmit antenna is deployed which radiates a power of $10$ W, the minimum distance $d_\mathrm{min}$ from the single antenna to be below the exposure limit is around $d_\mathrm{min} =\sqrt{10\;\text{W} / (4 \pi \, 9.9\;\text{W/m}^2)} \approx 0.28$ m. 
The exposure to EMF radiation decreases for an increasing distance between the transmit antenna array and the receiver \cite{Alhasnawi20}.
When the transmit power is spread across multiple antennas, which is optimal for the majority of considered environments as shown in Fig. \ref{fig: MRT heatmaps}, the required distance from the ceiling to be below the exposure limit reduces. 
In our system, the transmit antenna array is positioned on the ceiling of the environment and thus, from a practical point of view, the EMF exposure to humans in the environment will typically be below the required limit since they will have a certain distance to the ceiling.}.

Next, we evaluate the boundaries of the near-field region in the context of the discretised antenna array.
The room is contained within $d_\mathrm{Fraunhofer}$ since the antenna {\color{black}architecture} covers the entire ceiling, which has a side length of at least the height of the room.
Reactive near-field effects are typically limited to a short distance around the individual radiating elements in the context of discrete antenna {\color{black}architectures} \cite{LiuHanzo23,bjornson20}.
Specifically, reactive near-field effects are negligible beyond a distance of $d_\mathrm{Fresnel} = \sqrt[3]{\frac{L_\mathrm{element}^4}{8 \lambda}}$ from an antenna element with aperture $L_\mathrm{element}$, which is typically on the order of the wavelength.
Thus, $d_\mathrm{Fresnel}$ is typically limited to a few wavelengths from the transmit antenna \cite{LiuHanzo23, Liu23}.}

\subsection{HEATMAPS OF OPTIMAL DISTRIBUTIONS}\label{Subsection: Fundamental Analysis --> Heatmaps of optimal distributions}
The solution $p^*(a_{xi},a_{zi})$ of Problem \eqref{Problem: Epigraph General Problem Discrete} is visualised in Fig. \ref{fig: MRT heatmaps} using heatmaps for the height-to-width ratios considered in Table \ref{tab: Considered Environments}. The value zero is depicted using white colour, while the non-zeros are represented by a colour corresponding to the percentage of power allocated to the position. 
Thus, transmit antennas are represented by a colour different from white.

\begin{figure}[t]
    \centering
    \scalebox{0.47}{
    \begin{minipage}{\textwidth}
        \centering

\begin{tikzpicture}[spy using outlines={rectangle, magnification=4,size=2cm, connect spies}]

\definecolor{darkslategray38}{RGB}{38,38,38}
\definecolor{lightgray204}{RGB}{204,204,204}

\begin{groupplot}[group style={group size=2 by 4}]
\nextgroupplot[
tick align=outside,
title= {\Large 1/1 (1 non-zero)},
x grid style={lightgray204},
xmajorticks=false,
xmin=0, xmax=81,
xtick style={color=darkslategray38},
xtick={0.5,8.5,16.5,24.5,32.5,40.5,48.5,56.5,64.5,72.5,80.5},
xticklabels={0,8,16,24,32,40,48,56,64,72,80},
y dir=reverse,
y grid style={lightgray204},
ymajorticks=false,
ymin=0, ymax=81,
ytick style={color=darkslategray38},
ytick={0.5,9.5,18.5,27.5,36.5,45.5,54.5,63.5,72.5},
yticklabel style={rotate=90.0},
yticklabels={0,9,18,27,36,45,54,63,72}
]
\addplot graphics [includegraphics cmd=\pgfimage,xmin=0, xmax=81, ymin=81, ymax=0] {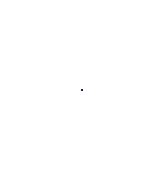};
\coordinate (centre-fig1) at (40.5,40.5);
\coordinate (centre-fig1-viewer) at (55,40.5);
\spy[size=1cm] on (centre-fig1) in node [fill=white] at (centre-fig1-viewer);

\nextgroupplot[
tick align=outside,
title={\Large 1/3 (12 non-zeros)},
x grid style={lightgray204},
xmajorticks=false,
xmin=0, xmax=81,
xtick style={color=darkslategray38},
xtick={0.5,8.5,16.5,24.5,32.5,40.5,48.5,56.5,64.5,72.5,80.5},
xticklabels={0,8,16,24,32,40,48,56,64,72,80},
y dir=reverse,
y grid style={lightgray204},
ymajorticks=false,
ymin=0, ymax=81,
ytick style={color=darkslategray38},
ytick={0.5,9.5,18.5,27.5,36.5,45.5,54.5,63.5,72.5},
yticklabel style={rotate=90.0},
yticklabels={0,9,18,27,36,45,54,63,72}
]
\addplot graphics [includegraphics cmd=\pgfimage,xmin=0, xmax=81, ymin=81, ymax=0] {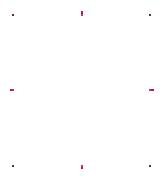};
\coordinate (spypoint3-fig2) at (6,40.5);
\coordinate (spyviewer3-fig2) at (16,40.5);
\spy[size=1cm] on (spypoint3-fig2) in node [fill=white] at (spyviewer3-fig2);

\coordinate (top-left-fig2) at (6,6);
\coordinate (viewer-top-left-fig2) at (15,15);
\spy[size=1cm] on (top-left-fig2) in node [fill=white] at (viewer-top-left-fig2);

\coordinate (top-fig2) at (40.5,6);
\coordinate (viewer-top-fig2) at (40.5,16);
\spy[size=1cm] on (top-fig2) in node [fill=white] at (viewer-top-fig2);

\nextgroupplot[
tick align=outside,
title={\Large 1/4 (13 non-zeros)},
x grid style={lightgray204},
xmajorticks=false,
xmin=0, xmax=81,
xtick style={color=darkslategray38},
xtick={0.5,8.5,16.5,24.5,32.5,40.5,48.5,56.5,64.5,72.5,80.5},
xticklabels={0,8,16,24,32,40,48,56,64,72,80},
y dir=reverse,
y grid style={lightgray204},
ymajorticks=false,
ymin=0, ymax=81,
ytick style={color=darkslategray38},
ytick={0.5,9.5,18.5,27.5,36.5,45.5,54.5,63.5,72.5},
yticklabel style={rotate=90.0},
yticklabels={0,9,18,27,36,45,54,63,72}
]
\addplot graphics [includegraphics cmd=\pgfimage,xmin=0, xmax=81, ymin=81, ymax=0] {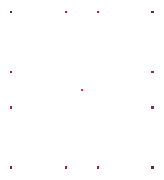};
\coordinate (centre-fig3) at (40.5,40.5);
\coordinate (centre-fig3-viewer) at (40.5,55);
\spy[size=1cm] on (centre-fig3) in node [fill=white] at (centre-fig3-viewer);

\coordinate (spypoint2-fig3) at (5,48);
\coordinate (spyviewer2-fig3) at (15,48);
\spy[size=1cm] on (spypoint2-fig3) in node [fill=white] at (spyviewer2-fig3);

\coordinate (spypoint3-fig3) at (5,33);
\coordinate (spyviewer3-fig3) at (15,33);
\spy[size=1cm] on (spypoint3-fig3) in node [fill=white] at (spyviewer3-fig3);

\coordinate (top-left-fig3) at (5,5);
\coordinate (viewer-top-left-fig3) at (15,15);
\spy[size=1cm] on (top-left-fig3) in node [fill=white] at (viewer-top-left-fig3);

\coordinate (top-fig3) at (33,5);
\coordinate (viewer-top-fig3) at (33,15);
\spy[size=1cm] on (top-fig3) in node [fill=white] at (viewer-top-fig3);

\nextgroupplot[
tick align=outside,
title={\Large 1/5 (40 non-zeros)},
x grid style={lightgray204},
xmajorticks=false,
xmin=0, xmax=81,
xtick style={color=darkslategray38},
xtick={0.5,8.5,16.5,24.5,32.5,40.5,48.5,56.5,64.5,72.5,80.5},
xticklabels={0,8,16,24,32,40,48,56,64,72,80},
y dir=reverse,
y grid style={lightgray204},
ymajorticks=false,
ymin=0, ymax=81,
ytick style={color=darkslategray38},
ytick={0.5,9.5,18.5,27.5,36.5,45.5,54.5,63.5,72.5},
yticklabel style={rotate=90.0},
yticklabels={0,9,18,27,36,45,54,63,72}
]
\addplot graphics [includegraphics cmd=\pgfimage,xmin=0, xmax=81, ymin=81, ymax=0] 
{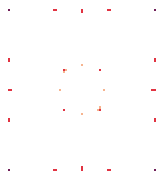};
\coordinate (spypoint1) at (49,49);
\coordinate (spyviewer1) at (65,65);
\spy[size=1cm] on (spypoint1) in node [fill=white] at (spyviewer1);

\coordinate (centre) at (40.5,51);
\coordinate (centre-viewer) at (40.5,65);
\spy[size=1cm] on (centre) in node [fill=white] at (centre-viewer);

\coordinate (left) at (32,49);
\coordinate (left-viewer) at (15,65);
\spy[size=1cm] on (left) in node [fill=white] at (left-viewer);

\coordinate (spypoint2) at (5,41);
\coordinate (spyviewer2) at (15,41);
\spy[size=1cm] on (spypoint2) in node [fill=white] at (spyviewer2);

\coordinate (spypoint3) at (5,27);
\coordinate (spyviewer3) at (15,27);
\spy[size=1cm] on (spypoint3) in node [fill=white] at (spyviewer3);

\coordinate (top-left) at (4.5,4);
\coordinate (viewer-top-left) at (15,15);
\spy[size=1cm] on (top-left) in node [fill=white] at (viewer-top-left);

\coordinate (top) at (27,4);
\coordinate (viewer-top) at (27,15);
\spy[size=1cm] on (top) in node [fill=white] at (viewer-top);

\nextgroupplot[
tick align=outside,
title={\Large 1/1 (1D) (1 non-zero)},
x grid style={lightgray204},
xmajorticks=false,
xmin=0, xmax=81,
xtick style={color=darkslategray38},
xtick={0.5,8.5,16.5,24.5,32.5,40.5,48.5,56.5,64.5,72.5,80.5},
xticklabels={0,8,16,24,32,40,48,56,64,72,80},
y dir=reverse,
y grid style={lightgray204},
ymajorticks=false,
ymin=0, ymax=81,
ytick style={color=darkslategray38},
ytick={0.5,9.5,18.5,27.5,36.5,45.5,54.5,63.5,72.5},
yticklabel style={rotate=90.0},
yticklabels={0,9,18,27,36,45,54,63,72}
]
\addplot graphics [includegraphics cmd=\pgfimage,xmin=0, xmax=81, ymin=81, ymax=0] {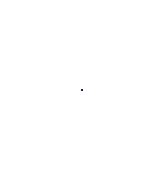};
\coordinate (centre-fig5) at (40.5,40.5);
\coordinate (centre-fig5-viewer) at (55,40.5);
\spy[size=1cm] on (centre-fig5) in node [fill=white] at (centre-fig5-viewer);

\nextgroupplot[
tick align=outside,
title={\Large 1/3 (1D) (4 non-zeros)},
x grid style={lightgray204},
xmajorticks=false,
xmin=0, xmax=81,
xtick style={color=darkslategray38},
xtick={0.5,8.5,16.5,24.5,32.5,40.5,48.5,56.5,64.5,72.5,80.5},
xticklabels={0,8,16,24,32,40,48,56,64,72,80},
y dir=reverse,
y grid style={lightgray204},
ymajorticks=false,
ymin=0, ymax=81,
ytick style={color=darkslategray38},
ytick={0.5,9.5,18.5,27.5,36.5,45.5,54.5,63.5,72.5},
yticklabel style={rotate=90.0},
yticklabels={0,9,18,27,36,45,54,63,72}
]
\addplot graphics [includegraphics cmd=\pgfimage,xmin=0, xmax=81, ymin=81, ymax=0] {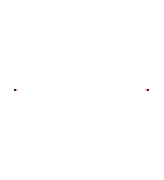};
\coordinate (centre-fig6) at (8,40.5);
\coordinate (centre-fig6-viewer) at (20,40.5);
\spy[size=1cm] on (centre-fig6) in node [fill=white] at (centre-fig6-viewer);

\nextgroupplot[
tick align=outside,
title={\Large 1/4 (1D) (4 non-zeros)},
x grid style={lightgray204},
xmajorticks=false,
xmin=0, xmax=81,
xtick style={color=darkslategray38},
xtick={0.5,8.5,16.5,24.5,32.5,40.5,48.5,56.5,64.5,72.5,80.5},
xticklabels={0,8,16,24,32,40,48,56,64,72,80},
y dir=reverse,
y grid style={lightgray204},
ymajorticks=false,
ymin=0, ymax=81,
ytick style={color=darkslategray38},
ytick={0.5,9.5,18.5,27.5,36.5,45.5,54.5,63.5,72.5},
yticklabel style={rotate=90.0},
yticklabels={0,9,18,27,36,45,54,63,72}
]
\addplot graphics [includegraphics cmd=\pgfimage,xmin=0, xmax=81, ymin=81, ymax=0] {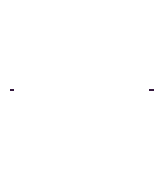};
\coordinate (centre-fig7) at (6,40.5);
\coordinate (centre-fig7-viewer) at (20,40.5);
\spy[size=1cm] on (centre-fig7) in node [fill=white] at (centre-fig7-viewer);

\nextgroupplot[
tick align=outside,
title={\Large 1/5 (1D) (4 non-zeros)},
x grid style={lightgray204},
xmajorticks=false,
xmin=0, xmax=81,
xtick style={color=darkslategray38},
xtick={0.5,8.5,16.5,24.5,32.5,40.5,48.5,56.5,64.5,72.5,80.5},
xticklabels={0,8,16,24,32,40,48,56,64,72,80},
y dir=reverse,
y grid style={lightgray204},
ymajorticks=false,
ymin=0, ymax=81,
ytick style={color=darkslategray38},
ytick={0.5,9.5,18.5,27.5,36.5,45.5,54.5,63.5,72.5},
yticklabel style={rotate=90.0},
yticklabels={0,9,18,27,36,45,54,63,72}
]
\addplot graphics [includegraphics cmd=\pgfimage,xmin=0, xmax=81, ymin=81, ymax=0] {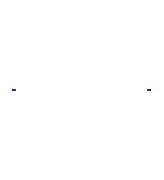};
\coordinate (centre-fig8) at (7,40.5);
\coordinate (centre-fig8-viewer) at (20,40.5);
\spy[size=1cm] on (centre-fig8) in node [fill=white] at (centre-fig8-viewer);

\end{groupplot}

\end{tikzpicture}%
    \vspace*{-40mm}
    \end{minipage}
    }
    \scalebox{0.47}{
    \begin{minipage}{\textwidth}
        \centering
        \includegraphics[scale=0.55,angle=-90]{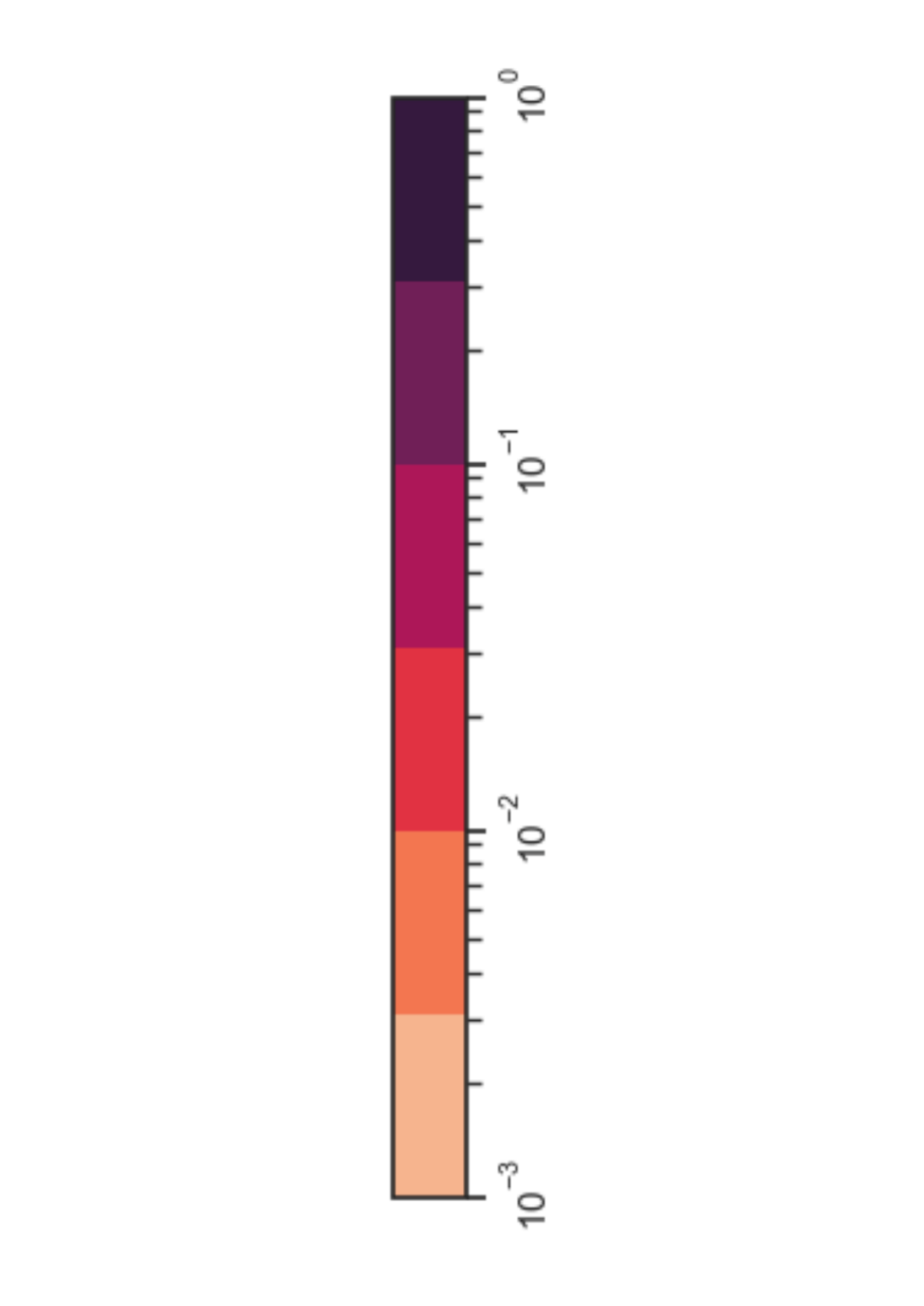}
        \vspace*{-40mm}
    \end{minipage}
    }
    \caption{Heatmaps of optimal power distributions for the height-to-width ratios listed in Table \ref{tab: Considered Environments} with $N+1=81$. Each heading indicates the height-to-width ratio from Table \ref{tab: Considered Environments}, the dimensionality of the transmit antenna array, and the number of non-zeros of $p^*(a_{ix},a_{iz})$ in brackets, i.e, the optimal number of transmit antennas. 
    {\color{black} The relative amount of power allocated to an antenna is colour-coded according to the colour bar at the bottom. The antennas are partly magnified to aid visualising the solution. The colours of the unmagnified antennas follow from the symmetry of the optimal solution discussed in Section \ref{section: Methods}.}} 
    \label{fig: MRT heatmaps}
    \vspace*{-5mm}
\end{figure}

This depiction supports our conjecture that the optimal transmit antenna positions are isolated points.
{\color{black} Except for a height-to-width ratio of one, the room is contained within the near-field of the optimal antenna deployment since the largest distance between individual transmit antennas defines the aperture of the array generated by the antennas \cite{LiuHanzo23}.}
{\color{black} From a practical point of view, the proposed optimal deployment may be realised via a distributed antenna system.}

For a height-to-width ratio equal to one, it is optimal to place a single antenna in the centre of the ceiling. 
{\color{black}
In this case, the proposed system does not have to be optimised for operating in the Fresnel region since the critical receiver positions, which are in the corners of the room at $y=L_y$, are in the far-field in this scenario. 
}


\subsection{MINIMUM RECEIVED POWER AND COMPARISON TO ALTERNATIVE DEPLOYMENTS}\label{Subsection: Performance Analysis}
{\color{black} In this section, the performance of the proposed optimal two-dimensional distribution of power is measured in terms of the minimum received power in the environment.
This value corresponds to the guaranteed level of received power, which is available at any single receiver position $(x,y,z)$ beyond the reactive near-field region of the room.}
Thus, we display the performance of the worst-case scenario, which applies to a subset of receiver positions at the floor $y=L_y$ of the room since the antennas are deployed on the ceiling.

The proposed optimal method is denoted by M-OPT. Moreover, the performance is compared to other power allocation methods, which are listed below:
\begin{figure}[t]
    \centering
    \scalebox{0.47}{
    \begin{minipage}[t]{0.5\textwidth}
        \centering
\begin{tikzpicture}

\definecolor{black}{RGB}{0,0,0}
\definecolor{orange}{RGB}{250,149,0}
\definecolor{sky-blue}{RGB}{86,180,233}
\definecolor{bluish-green}{RGB}{0,158,115}
\definecolor{yellow}{RGB}{240, 228, 66}
\definecolor{blue}{RGB}{0,114,178}
\definecolor{vermillion}{RGB}{213, 94, 0}
\definecolor{reddish-purple}{RGB}{204, 121, 167}

\begin{axis}[
xmajorgrids=true,
ymajorgrids=true,
grid style={help lines,dashed},
legend cell align={left},
legend columns=2,
legend style={at={(1.0,1.0)},
  anchor=north east, draw opacity=1, text opacity=1, font=\large},
tick align=outside,
tick pos=left,
xmin=-0.45, xmax=8.45,
xtick style={color=black},
xtick={0,2,4,6,8,12,16,20,24,28,32},
xticklabels={$\frac{1}{1}$,$\frac{1}{2}$,$\frac{1}{3}$,$\frac{1}{4}$,$\frac{1}{5}$,$\frac{1}{7}$,$\frac{1}{9}$,$\frac{1}{11}$,$\frac{1}{13}$,$\frac{1}{15}$,$\frac{1}{17}$},
xlabel = {$\frac{\mathrm{Height}}{\mathrm{Width}}$},
minor xtick={},
ymin=-0.115862963482696, ymax=1.05,
ytick style={color=black},
ylabel = {Minimum Received Power $\gamma_{xyz}$ [µW]},
ymin=0.90042078428093e-7, ymax=0.34490105275476e-05,
ymode=log,
log basis y={10},
ytick style={color=black},
enlarge y limits=0.01,
yticklabels={0.01, 0.1, 1, 10},
label style={font=\Large},
tick label style={font=\Large}  
]

\addplot [blue, mark=square*, thick, mark size=3pt]
table{%
x  y
0 1.64910780655923e-06
1 1.1724503174397e-06
2 9.48877740506483e-07
3 7.95623132632254e-07
4 6.72080721601913e-07
5 5.81015885007686e-07
6 5.07765435157197e-07
7 4.48715774595826e-07
8 4.00367886474475e-07
9 3.59853919575093e-07
10 3.25702396599195e-07
11 2.96605136651589e-07
12 2.71509535806819e-07
13 2.49643566090357e-07
14 2.30484936403532e-07
15 2.13628231790075e-07
16 1.98706533259948e-07
17 1.85421074948639e-07
18 1.73510247069322e-07
19 1.62778834741349e-07
20 1.53067047683671e-07
21 1.44245907497596e-07
22 1.36202154546718e-07
23 1.2885711809396e-07
24 1.22125370539916e-07
25 1.15945953703343e-07
26 1.10241082856691e-07
27 1.04997761105945e-07
28 1.00130986863299e-07
29 9.56325473679241e-08
30 9.14434802887377e-08
31 8.75503970352052e-08
32 8.39116264322527e-08
33 8.05065242579001e-08
34 7.73202183090065e-08
35 7.43312388975852e-08
36 7.15237199285336e-08
37 6.88780468536366e-08
};
\addlegendentry{M-OPT}
\addplot [black, mark=*, thick, mark size=3pt]
table{%
x  y
0 1.64910780667867e-06
1 1.16407609883201e-06
2 8.24553903339337e-07
3 5.99675566064973e-07
4 4.49756674548729e-07
5 3.47180590879721e-07
6 2.74851301113112e-07
7 2.22351614383642e-07
8 1.83234200742075e-07
9 1.53405377365458e-07
10 1.30192721579895e-07
11 1.11803919096859e-07
12 9.70063415693338e-08
13 8.4932590901906e-08
14 7.49594457581216e-08
15 6.66306184516636e-08
16 5.96063062654943e-08
17 5.36295221684122e-08
18 4.85031707846669e-08
19 4.40741507352875e-08
20 4.02221416263091e-08
21 3.68515710989648e-08
22 3.38857768495618e-08
23 3.12627072356147e-08
24 2.89317159066434e-08
25 2.68511447491779e-08
26 2.49864819193739e-08
27 2.33089442640095e-08
28 2.17943762997182e-08
29 2.04223876988071e-08
30 1.91756721706823e-08
31 1.80394655242881e-08
32 1.70011114090585e-08
33 1.60497110139044e-08
34 1.51758387117669e-08
35 1.43713098621235e-08
36 1.36289901378403e-08
37 1.29426381165102e-08
};
\addlegendentry{M-FF}
\addplot [orange, mark=diamond*, thick, mark size=4pt]
table{%
x  y
0 1.58012481692979e-06
1 1.15622318903056e-06
2 8.70747557184224e-07
3 6.77469108257657e-07
4 5.42387491859245e-07
5 4.44710828669664e-07
6 3.71878585494047e-07
7 3.16111736237135e-07
8 2.72432153717257e-07
9 2.37548971728327e-07
10 2.0922077184568e-07
11 1.85879512388425e-07
12 1.66401826388194e-07
13 1.49965742847168e-07
14 1.35958416331537e-07
15 1.23915191071802e-07
16 1.13478428280768e-07
17 1.04369104318901e-07
18 9.63668463565722e-08
19 8.92956566994651e-08
20 8.3013544238976e-08
21 7.7404885378158e-08
22 7.2374721699085e-08
23 6.784445173313e-08
24 6.3748539603618e-08
25 6.00319745212839e-08
26 5.66482910342579e-08
27 5.35580125041026e-08
28 5.07274171637244e-08
29 4.81275523126037e-08
30 4.57334410302923e-08
31 4.35234394626236e-08
32 4.14787127674466e-08
33 3.95828052378158e-08
34 3.7821285674367e-08
35 3.61814532645097e-08
36 3.46520924062145e-08
37 3.32232673484934e-08
};
\addlegendentry{M-UNI}
\addplot [red, mark=triangle*, mark size=4pt]
table{%
x  y
0 1.64910780664503e-06
1 1.17245031745648e-06
2 9.48877740404362e-07
3 7.95623128472639e-07
4 6.72080721566586e-07
5 5.81015879318377e-07
6 5.07205435040222e-07
7 4.468715774590978e-07
8 3.96367882104983e-07
9 3.522853913553283e-07
10 3.25702399846939e-07
};
\addlegendentry{M-S75}
\end{axis}

\end{tikzpicture}%
    \end{minipage}%
    }
    \scalebox{0.47}{
    \begin{minipage}[t]{0.5\textwidth}
        \centering
\begin{tikzpicture}

\definecolor{black}{RGB}{0,0,0}
\definecolor{orange}{RGB}{250,149,0}
\definecolor{sky-blue}{RGB}{86,180,233}
\definecolor{bluish-green}{RGB}{0,158,115}
\definecolor{yellow}{RGB}{240, 228, 66}
\definecolor{blue}{RGB}{0,114,178}
\definecolor{vermillion}{RGB}{213, 94, 0}
\definecolor{reddish-purple}{RGB}{204, 121, 167}

\begin{axis}[
xmajorgrids=true,
grid style={help lines,dashed},
legend cell align={left},
legend columns=2,
legend style={at={(0.0,0.0)},
  anchor=south west, draw opacity=1, text opacity=1, font=\large},
tick align=outside,
tick pos=left,
xmin=-0.45, xmax=8.45,
xtick style={color=black},
xtick={0,2,4,6,8,12,16,20,24,28,32},
xticklabels={$\frac{1}{1}$,$\frac{1}{2}$,$\frac{1}{3}$,$\frac{1}{4}$,$\frac{1}{5}$,$\frac{1}{7}$,$\frac{1}{9}$,$\frac{1}{11}$,$\frac{1}{13}$,$\frac{1}{15}$,$\frac{1}{17}$},
xlabel = {$\frac{\mathrm{Height}}{\mathrm{Width}}$},
ymin=0.315862963482696, ymax=1.05,
ytick style={color=black},
ylabel = {Loss},
label style={font=\Large},
tick label style={font=\Large}  
]
\addplot [black, mark=*, thick, mark size=3pt]
table{%
0 1.0000000000724305
1 0.9928575066396182
2 0.8689780233428334
3 0.7537181128469643
4 0.6692003803899164
5 0.5975406178010577
6 0.5412958072422202
7 0.4955288558418109
8 0.45766458033281027
9 0.42629903141417935
10 0.3997290868574991
11 0.3769453231964462
12 0.35728521019001924
13 0.3402154208579333
14 0.3252249232760397
15 0.3118998734078326
16 0.299971547425254
17 0.28923099590090956
18 0.27954067038639285
19 0.27076094263311307
20 0.2627746613982679
21 0.2554774117219159
22 0.24879031438477606
23 0.24261529124699538
24 0.23690176560968795
25 0.23158328420738739
26 0.22665308859361633
27 0.22199467891977617
28 0.217658658747389
29 0.21355059821042618
30 0.20969971954407293
31 0.20604664439196366
32 0.20260733979199644
33 0.1993591347017994
34 0.19627257971669773
35 0.19334145475396364
36 0.19055203157020287
37 0.1879065784779416
};
\addlegendentry{M-FF}
\addplot [orange, mark=diamond*, thick, mark size=4pt]
table{%
x  y
0 0.998
1 0.9861596451740767
2 0.9176604319113275
3 0.8514949860951204
4 0.8070273025633851
5 0.7654021863167841
6 0.7323826313205389
7 0.7044809969559629
8 0.6804545592210611
9 0.6601261200901161
10 0.6423679224661757
11 0.6266901325002037
12 0.6128765455464167
13 0.6007194385009251
14 0.5898798353290295
15 0.5800506329779922
16 0.5710855421764879
17 0.5628761689997259
18 0.5553957070793121
19 0.5485704381736919
20 0.5423345226500493
21 0.5366175493016876
22 0.5313772160209123
23 0.5265091501088744
24 0.5219925992591546
25 0.5177582537712393
26 0.513858260154232
27 0.510087186050202
28 0.5066105783315449
29 0.5032549444431719
30 0.5001279575743011
31 0.497124409899846
32 0.49431425096896486
33 0.4916720179225915
34 0.4891513048141699
35 0.4867597231140075
36 0.48448392282782266
37 0.48234914992714073
};
\addlegendentry{M-UNI}
\addplot [red, mark=triangle*, thick, mark size=4pt]
table{%
x  y
0 1.0000000000520262
1 1.000000000014316
2 0.9999999998923774
3 0.9999999947718777
4 0.99999999999474368
5 0.9999999902079981
6 0.9989999997696265
7 0.9959999999891966
8 0.9899999890863058
9 0.9789999832659607
10 1.0000000099715067
11 0.999999929205942
12 0.9999999765531312
13 0.9999983750999217
14 0.9999979896756646
15 0.9999986868332243
16 0.9999964002248931
17 0.999906322293918
18 0.9997818417307854
19 0.9999710763559183
20 0.9913675554819059
21 0.9873575922526907
22 0.9162336365712442
23 0.9045231663137198
24 0.772263801928379
25 0.7341146566914393
26 0.6798558194383562
27 0.5928825479174331
28 0.4530046910505164
29 0.48597885034546817
30 0.40569625590245995
31 0.4100869159199555
32 0.3992449218837706
33 0.37984412037807824
34 0.37261864241774967
35 0.37281199442051444
36 0.3718519942938619
37 0.36505518044613305
};
\addlegendentry{M-S75}
\end{axis}

\end{tikzpicture}%
    \end{minipage}
    }
    \caption{Minimum received power at the worst location in the environment and loss of the benchmark schemes and suboptimal schemes compared to proposed optimal deployment and power allocation. The height is fixed at $2$ m.}. 
    \label{fig: Performance Analysis}
    \vspace*{-5mm}
\end{figure}
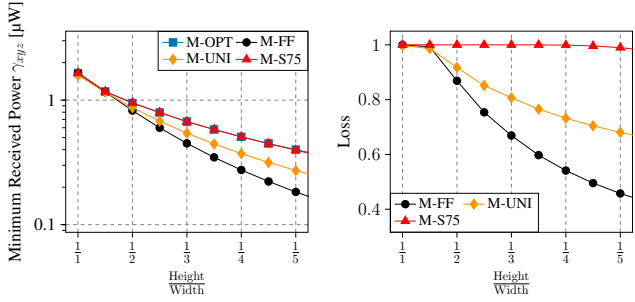
\begin{itemize}
\item {\color{black}The achievable performance gain of modelling the system in the Fresnel region is highlighted by comparing the proposed approach to the optimal power distribution designed under the far-field approximation.
By virtue of the max-min problem, under the far-field assumption, the ideal deployment location lies in the centre of the ceiling since the path losses corresponding to different transmit antennas are assumed to be identical. As a consequence of the constant path loss, a single transmit antenna is deployed to which all the available power is allocated.}
This benchmark scheme is referred to as M-FF. 
\item A uniform power distribution across the antenna array, where an antenna is located at every grid cell and the power is split equally among the antennas. This benchmark scheme is referred to as M-UNI.
\item {\color{black}We derive a suboptimal scheme from the proposed optimal approach by removing optimally deployed transmit antennas operating with low transmit power.
Thereby, small non-zero values are set to zero in the distribution, and subsequently, the allocated powers are re-normalised for a fair comparison at equal total transmit power.}
More specifically, the x-th percentile of the allocated power levels was calculated and transmit antennas operating at a power level lower than the one corresponding to the x-th percentile were removed, followed by the re-normalisation of power. {\color{black}We investigated the 75th, percentile of the non-zero values. The scheme is referred to as M-S75.} 


\end{itemize}

{\color{black}We present the performance} results in Fig. \ref{fig: Performance Analysis}.
{\color{black}Hereby, we investigate the performance by evaluating the minimum received power\footnote{
{\color{black}Typical EH circuits are designed for receive powers in the µW range \cite{Clerckx19}.}
The constantly available and guaranteed received power is in the range of, e.g., the continuous standby power required by the nanowatt sensors presented in \cite{Olsson19}.
Moreover, receivers capable of receiving energy at multiple positions, e.g., through multiple receiving antennas can increase the amount of received power.
}, i.e., the power at the critical receiver positions, for M-OPT, M-FF, M-UNI, and M-S75 for environments with different height-to-width ratios.
Moreover, we determine the performance loss incurred by utilising M-FF, M-UNI, and M-S75 in comparison to the proposed optimal scheme M-OPT.}
{\color{black}To this end, we calculate} the loss compared to the proposed optimal solution by dividing the objective value of a benchmark scheme{\color{black}, i.e., M-FF or M-UNI,} or the suboptimal scheme{\color{black}, i.e., M-S75,} by the objective value obtained through the optimal deployment and power allocation, i.e., the solution of Problem \eqref{Problem: Epigraph General Problem Discrete}{\color{black}, i.e., M-OPT}. 

{\color{black}For all schemes, we observe that the minimum received power decreases for a decreasing height-to-width ratio.
This effect occurs due to the increasing environment size for decreasing height-to-width ratio while keeping the total transmit power constant.}
Note that the received power can be increased by adding more receive antennas {\color{black}or increasing the receiver's antenna size $S_{RX}$}.

{\color{black}We observe that} the proposed optimal power distribution outperforms or matches the performance of the other schemes.
{\color{black} For a height-to-width ratio of one, we see from Fig. \ref{fig: MRT heatmaps} that the proposed optimal solution coincides with M-FF and we find no performance loss between M-OPT and the other schemes.
On the other hand, for decreasing height-to-width ratio the performance gap between M-OPT and the alternative methods grows.
We observe the largest performance gap between M-OPT and M-FF which indicates the relevance of accurately modelling the wireless channel by taking into account the spherical EM wavefronts when operating in the near-field region.
Moreover, a uniform distribution of power, i.e., in the case of M-UNI, is also found to be highly suboptimal, especially for small height-to-width ratios.}
{\color{black}On the other hand, the loss in performance compared to the optimal distribution is negligible for M-S75 with a slight loss observable for a height-to-width ratio of $1/5$.
This is because antennas radiating a relatively low amount of power contribute little to the minimum received power. Therefore, removing these transmit antennas comes at a small loss in received power for the investigated environments.}

\subsection{SENSITIVITY ANALYSIS}\label{Subsection: Sensitivity Analysis}
{\color{black}
While the assumption of a strong LoS channel, which was made throughout this paper, is valid for very high carrier frequencies, a low-power small-scale fading component may still be present. This requires including an NLoS component in the wireless channel model \cite{Tse2005a}. 
Assuming isotropic scattering, the NLoS component can be modelled as spatially correlated Rayleigh fading \cite{Bjornson2021}, \cite{Pizzo2020}, \cite{Bjornson2017}.
The superposition of the NLoS and LoS components is modelled by a Rician fading channel.
}

In this section, we assess the robustness of the proposed optimal two-dimensional power distribution in the presence of scattering objects in the environment using a Rician fading channel model.
The Rician fading channel between transmit antenna $i$ and the receiver position {\color{black}$(x,L_y,z)$} is defined as 
\begin{align}\label{eq: Rician Fading Channel}
    \Tilde{g}_{xz}(a_{xi},a_{zi}) =
    \sqrt{\frac{\kappa}{\kappa+1}}\, g_{xz}(a_{xi},a_{zi})
    + \sqrt{\frac{1}{\kappa+1}}  \,h(a_{xi},a_{zi}),
\end{align}
where constant $\kappa$ denotes the Rician $K$-factor and the LoS channel is given in \eqref{eq: LOS Channel}. The NLoS channel vector $\boldsymbol{h} = [h(a_{x1},a_{z1}),\dots,h(a_{x(N+1)^2},a_{z(N+1)^2})]^T \in \mathbb{C}^{{(N+1)^2} \times 1}$ is a realisation of the circularly invariant complex Gaussian distribution $\boldsymbol{h} \sim \mathcal{N}_\mathbb{C}\left(\boldsymbol{0},\,\frac{{c}}{ \Tilde{D}^2 } A \boldsymbol{R} \right)$, where $\Tilde{D}$ is the average distance between transmitter and receiver \cite{Bjornson2021}. $A$ is the area of a transmit antenna and the entries of the spatial correlation matrix $\boldsymbol{R} \in \mathbb{C}^{{(N+1)^2} \times {(N+1)^2}}$ are given by
\begin{align}\label{eq: Spatial Correlation Matrix}
    R_{ij} = \mathrm{sinc}\left( \frac{2}{\lambda} \sqrt{(a_{xi} - a_{xj})^2 + (a_{zi} - a_{zj})^2} \right), \nonumber \\ \forall i,j=1,\dots,(N+1)^2,
\end{align}
with $\mathrm{sinc}(x) = \mathrm{sin}(\pi x)/(\pi x)$ \cite{Bjornson2021}. 
{\color{black}The propagation environment is no longer deterministic in case of small-scale fading. 
For the simulations}, the coefficient of variation (CoV) is provided to indicate the variability of the simulation values in relation to the mean. This dimensionless measure is comparable across the simulation results of different environments.
{\color{black}For the simulations, we consider $\kappa=10$, and $A=(\lambda/2)^2$.}
\vspace*{-3mm}
\subsubsection{Discreteness of the Optimal Solution}\label{Subsection: Sensitivity Analysis --> Sparsity of optimal solution}
We investigate the sparsity of the transmit antenna deployment at LoD $N+1=81$ by calculating the number of antennas relative to the number of available grid cells $(N+1)^2$.
{\color{black}The optimal distribution of transmit power for an individual channel realisation is determined by solving \eqref{Problem: Epigraph General Problem Discrete}, whereby we utilise the squared magnitude of the Rician channel \eqref{eq: Rician Fading Channel}, after normalisation by the reference channel gain $c$, instead of the normalised squared magnitude of the LoS channel, i.e., $f_{xz}(a_{xi},a_{zi})$, in the constraints \eqref{Constraint: Epigraph General Problem Discrete}.
This yields the optimal number of transmit antennas for an individual channel realisation.
Subsequently, we average the optimal number of transmit antennas over the different channel realisations.}
The results are given in Table \ref{tab: Fading sparsity} {\color{black} and we list the percentage of non-zeros of the proposed deployment developed under LoS conditions for comparison}.
\begin{table}[h]
\renewcommand{\arraystretch}{1.3}
\caption{{\color{black} Average number of transmit antennas relative to number of grid cells under fading conditions compared with the percentage of non-zeros in the proposed distribution developed under LoS conditions} at LoD $N+1=81$.}
\label{tab: Fading sparsity}
\centering
\begin{tabular}{|c|c|c|c|}
\hline
$\frac{\mathrm{Height}}{\mathrm{Width}}$ & LoS [\%] & Rician Mean [\%] & Rician CoV [\%] \\ \hhline{|=|=|=|=|}
1/1 & 0.152 & 0.155 & 77.10 \\ \hline
1/3 & 0.183 & 0.65 & 16.81 \\ \hline
1/4 & 0.198 & 1.14 & 10.87 \\ \hline
1/5 & 0.61 & 1.751 & 10.45 \\ \hline
\end{tabular}
\end{table}
{\color{black}Table \ref{tab: Fading sparsity} reveals that} the distribution of power{\color{black}, when optimised under fading conditions,} is less sparse across all environments compared to the distribution of power under perfect LoS conditions, i.e., a larger number of transmit antennas {\color{black}is required}.
Moreover, the CoV is similar across different environments, except for the environment with a height-to-width ratio of one. This system is especially sensitive to channel fluctuations since under LoS conditions, the optimal deployment requires only a single transmit antenna.

\subsubsection{Performance Comparison}\label{Subsection: Sensitivity Analysis --> Performance comparison}
{\color{black} Next, the performance of the proposed optimal deployment and power allocation is investigated in the small-scale fading environment.
For each channel realisation, we determine the minimum received power which is achieved when utilising the proposed deployment and power allocation, which was optimised for LoS conditions.
As a benchmark, we optimise the deployment and power allocation for each channel realisation and then determine the received power at the critical receiver positions.
We average the minimum received power of the optimal LoS solution and the benchmark scheme across the channel realisations, respectively.
The loss in performance due to small-scale fading is measured by the ratio of the averaged objective value of the proposed scheme developed under LoS conditions divided by the averaged objective value of the benchmark scheme}.
Simulations are performed for environments with different height-to-width ratios and the results are displayed in Table \ref{tab: Fading performance}.  
\begin{table}[h]
\renewcommand{\arraystretch}{1.3}
\caption{{\color{black}Performance loss due to small-scale fading.}}
\label{tab: Fading performance}
\centering
\begin{tabular}{|c|c|c|}
\hline
$\frac{\mathrm{Height}}{\mathrm{Width}}$ & \makecell{Average Relative \\Performance [\%]} & CoV [\%] \\ \hhline{|=|=|=|}
1/1 & 98.752 & 0.529 \\ \hline
1/3 & 98.77 & 0.153 \\ \hline
1/4 & 98.79 & 0.126 \\ \hline
1/5 & 98.8 & 0.131 \\ \hline
\end{tabular}
\end{table}

The performance loss caused by small-scale fading was found to be below {\color{black}2}\% {\color{black}for all considered} environments. 

\section{CONCLUSION}\label{Section: Conclusion}
In this paper, we formulated an optimisation problem for {\color{black}determining the optimal distribution of transmit power across a continuous aperture antenna}.
The proposed solution maximises the received powers at the worst possible receiver positions in an indoor space.
This system is especially relevant for environments with a multitude of mobile low-power devices, which require a reliable supply of power, such as wearables.
In this scenario, focusing energy beams to every mobile device may not be possible in a reliable manner.
Mathematically, the formulated problem bears a resemblance to that of finding the optimal amplitude-constrained input distribution for achieving the information capacity. 
Consequently, analogous theoretical results are obtained, which guarantee that the optimal transmit antenna positions are spread out within their area of deployment such that they do not form any regions of concentration and do not occupy any measurable area. 
{\color{black} Specifically, a transmit antenna architecture with a spatially continuous aperture, which is utilised in HMIMO systems, was proven to be unnecessary for the considered WPT problem.}
Furthermore, the optimal number of transmit antennas is shown to be finite analytically for a one-dimensional transmit antenna array.
{\color{black}Having excluded a continuous aperture antenna as a suitable transmit antenna architecture, the problem was discretised and the resulting solution yielded the optimal transmit antenna deployment including the optimal allocation of transmit power.}
In the environments investigated through simulation, deploying a finite number of antennas {\color{black}was} found to be optimal in all cases. 
{\color{black}Based on our results, the optimal deployment may be achieved by a finite, and surprisingly small, number of transmit antennas radiating a fixed amount of power.}
Moreover, our proposed method yields a considerable gain compared to benchmark schemes.
{\color{black}Finally}, the robustness of the approach has been validated by determining the optimal solution in a fading environment with a strong LoS component. The loss in performance under fading conditions was found to be around {\color{black}2\% in all considered environments}.

Possible extensions of the proposed approach offer several opportunities for future work. These include the extension of the method to more geometrically challenging environments and more flexible {\color{black}transmit antenna architecture} placements, such as positioning {\color{black}transmitters} on multiple walls as well as the ceiling of a room. 
{\color{black}Furthermore, a small-scale multipath fading component for modelling the wireless channel may be directly incorporated into the analysis of the problem to improve the robustness of the solution.}
{\color{black}Moreover, an investigation of the impact of the reactive near-field region when the deployment of large continuous antenna arrays is affordable presents another interesting topic for future work.  }
{\color{black}
While we characterise the solution to our problem through tools from topology theory and functional analysis, determining the optimal distribution of power and the transmit antenna deployment through different mathematical techniques may provide additional insights.
Furthermore, a possible extension to our problem is the joint optimisation of the amplitudes and the phases of the transmit signals of the individual radiating elements. } 

\appendices
    
\section{}\label{app: omega set}
By definition, any $P \in \Omega$ is non-negative and bounded by one (see \eqref{eq: Finite Power}).
Observe that given any two distributions $P_1,P_2 \in \Omega$, their convex combination $\lambda P_1 + (1- \lambda) P_2 \in \Omega$ with coefficient $\lambda \in [0,1]$ is a distribution as well. It follows that $\Omega$ is a convex set \cite{boyd2004convex}.

In order to further characterise the set $\Omega$, the weak topology on a Hilbert space is defined.
A linear functional in $\Omega$ is a linear map from the Hilbert space to the underlying field.
The weak topology on a Hilbert space is defined to be the weakest topology such that all continuous linear functionals on the Hilbert space are continuous \cite{Rudin91}.

Observe that $\mathcal{L}^2(\mathcal{X}_a \times \mathcal{Z}_a)$ is an infinite-dimensional vector space equipped with the following inner product \cite{Vetterli2014}
\begin{align}\label{eq: Inner Product L2 Hilbert Space}
    \iint_{\mathcal{X}_a \times \mathcal{Z}_a} w(a_x,a_z)\,v(a_x,a_z) \,\mathrm{d}a_x\,\mathrm{d}a_z, \nonumber \\ \forall w(a_x,a_z), v(a_x,a_z) \in \mathcal{L}^2(\mathcal{X}_a \times \mathcal{Z}_a),
\end{align}
where $w(a_x,a_z)$ and $v(a_x,a_z)$ are arbitrary elements of $\mathcal{L}^2(\mathcal{X}_a \times \mathcal{Z}_a)$.
This inner product space is complete since completeness is ensured by Lebesgue measurability and Lebesgue integration \cite{Vetterli2014}. Thus, $\mathcal{L}^2(\mathcal{X}_a \times \mathcal{Z}_a)$ is a Hilbert space, which is separable since it has a countable orthonormal basis \cite{Vetterli2014}.
Note that the set $\Omega$ is a subset of $\mathcal{L}^2(\mathcal{X}_a \times \mathcal{Z}_a)$ with bounded elements in $\Omega$ (see \eqref{eq: Finite Power}).
In a separable Hilbert space, every bounded sequence has a weakly convergent subsequence in the weak topology \cite{Rudin91} and therefore, $\Omega$ is sequentially compact \cite{Willard04}. 
The concepts of sequential compactness and compactness are equivalent for a separable Hilbert space \cite{Rudin91}. 
Thus, $\Omega$ is a compact space.

\section{}\label{app: Lagrangian}
    First, Problem \eqref{Problem: General Problem} is rewritten in its epigraph form \cite{boyd2004convex} which yields the following constrained optimisation problem
    \begin{subequations}\label{Problem: Epigraph General Problem}
    \begin{alignat}{2}
    &\underset{o \in \mathbb{R}, P \in \Omega}{\text{maximise}}
    &\qquad& o \label{Objective: Epigraph General Problem}\\
    &\text{subject to} 
    && o \leq \Phi_{P}\{ f_{xz} \}, \quad \forall (x,L_y,z) \in \mathcal{X}\times\mathcal{Y_{\mathrm{crit}}}\times\mathcal{Z},
    \label{Constraint: Epigraph General Problem}
    \end{alignat}
    \end{subequations}
    where $o$ is bounded by $m = \underset{\substack{(x,L_y,z) \\ \in \mathcal{X}\times\mathcal{Y_{\mathrm{crit}}}\times\mathcal{Z}}}{\min} \,\Phi_{P^*}\{ f_{xz} \}$. 
    Observe that the objective function and the feasible set of Problem \eqref{Problem: Epigraph General Problem} are both affine.
    Then, if the linear programming problem in \eqref{Problem: Epigraph General Problem} admits a solution, i.e., if $o^*=m$ is finite, duality holds.
    Furthermore, there exist some $\lambda_{xz} \geq 0, \forall (x,L_y,z) \in \mathcal{X}\times\mathcal{Y_{\mathrm{crit}}}\times\mathcal{Z}$ such that the augmented problem
    \begin{align}\label{Problem: Lagrangian General Problem, proof}
        \underset{o \in \mathbb{R}, P \in \Omega, \Lambda}{\text{maximise}}\quad \!
        o - \sum_{\mathcal{X}\times\mathcal{Y_{\mathrm{crit}}}\times\mathcal{Z}} \left( o - \Phi_{P}\{ f_{xz} \} \right) \lambda_{xz},
    \end{align}
    and Problem \eqref{Problem: Epigraph General Problem} are equivalent if the solution $(o^*,P^*)$ is feasible, i.e., the problem constraints are satisfied in $(o^*,P^*)$, and the following complementary slackness conditions hold:
    \begin{align}\label{eq: comp slack}
       \lambda_{xz} \left( m - \Phi_{P^*}\{ f_{xz} \} \right) &\overset{!}{=} 0, \quad \forall (x,L_y,z) \in \mathcal{X}\times\mathcal{Y_{\mathrm{crit}}}\times\mathcal{Z}.
    \end{align}
    Since $f_{xz}$ is bounded, the optimal objective value $o^*=m$ is finite and it is attained in $P^*$.
    The conditions in \eqref{eq: comp slack} imply
    \begin{align}\label{eq: Complementary Slackness, proof}
        [&(\lambda_{xz} > 0 \implies m = \Phi_{P^*}\{ f_{xz} \})  \,\,\, \mathrm{or} \,\,\, \nonumber \\ 
       &(m < \Phi_{P^*}\{ f_{xz} \} \implies \lambda_{xz} = 0)], \, \nonumber \\ &\forall (x,L_y,z) \in \mathcal{X}\times\mathcal{Y_{\mathrm{crit}}}\times\mathcal{Z}
    \end{align}
    From \eqref{eq: Complementary Slackness, proof}, if $\lambda_{xz} > 0$, then $o^* =m= \Phi_{P^*}\{ f_{xz} \}$ and position $(x,L_y,z)$ is critical. If $o^* = m < \Phi_{P^*}\{ f_{xz} \}$, then $\lambda_{xz} = 0$ and position $(x,L_y,z)$ is non-critical. Thus, the dual variables $\lambda_{xz}$ are only non-zero for critical receiver locations and it is sufficient to sum over $\mathcal{R}_\mathrm{crit}$ in \eqref{Problem: Lagrangian General Problem, proof}.
    Thus, under the assumption that $\mathcal{R}_\mathrm{crit}$ is known, the modified augmented problem in \eqref{Problem: Lagrangian General Problem} along with \eqref{eq: Complementary Slackness, proof} is equivalent to the augmented problem in \eqref{Problem: Lagrangian General Problem, proof} together with \eqref{eq: Complementary Slackness, proof} and thus, to Problem \eqref{Problem: General Problem}.

\section{}\label{app: Necessary and Sufficient Constraints}
The continuity of $\Phi_P\{ f_{xz} \}$ in $P$ is established in the following lemma.
\begin{lemmaappendix}\label{lemma: continuous J}
    $\Phi_P\{ f_{xz} \}$ is weakly continuous in the weak topology on a Hilbert space if $f_{xz}$ is a continuous and bounded function.
\end{lemmaappendix}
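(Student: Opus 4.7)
\textit{Proof sketch:} The plan is to recognise that $\Phi_P\{f_{xz}\}$ is nothing more than an $L^2$-inner product between the fixed function $f_{xz}$ and the density $p$ associated with $P$ via \eqref{eq: distribution}, and then to invoke the defining property of the weak topology.

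First, I would verify that $f_{xz}(a_x,a_z) \in \mathcal{L}^2(\mathcal{X}_a\times\mathcal{Z}_a)$. Since $f_{xz}$ is assumed continuous and bounded, say $|f_{xz}(a_x,a_z)|\le M$, and since $\mathcal{X}_a\times\mathcal{Z}_a$ is bounded and hence has finite Lebesgue measure, one has
\begin{equation*}
    \iint_{\mathcal{X}_a\times\mathcal{Z}_a} |f_{xz}(a_x,a_z)|^2\,\mathrm{d}a_x\,\mathrm{d}a_z \;\le\; M^2\,|\mathcal{X}_a\times\mathcal{Z}_a| \;<\;\infty,
\end{equation*}
so $f_{xz}\in\mathcal{L}^2(\mathcal{X}_a\times\mathcal{Z}_a)$.

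Next, using \eqref{eq: distribution} I would rewrite
\begin{equation*}
    \Phi_P\{f_{xz}\} \;=\; \iint_{\mathcal{X}_a\times\mathcal{Z}_a} f_{xz}(a_x,a_z)\,p(a_x,a_z)\,\mathrm{d}a_x\,\mathrm{d}a_z \;=\; \langle f_{xz},\,p\rangle,
\end{equation*}
i.e.\ the inner product of the $\mathcal{L}^2$-Hilbert space defined in \eqref{eq: Inner Product L2 Hilbert Space}. The Cauchy–Schwarz inequality then gives $|\Phi_P\{f_{xz}\}|\le\|f_{xz}\|_{\mathcal{L}^2}\|p\|_{\mathcal{L}^2}$, so $p\mapsto \Phi_P\{f_{xz}\}$ is a bounded, and therefore norm-continuous, linear functional on $\mathcal{L}^2(\mathcal{X}_a\times\mathcal{Z}_a)$.

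Finally, I would invoke the definition of the weak topology: it is the coarsest topology on the Hilbert space for which every norm-continuous linear functional remains continuous. Equivalently, weak convergence $P_n\rightharpoonup P$ means $\langle h,p_n\rangle\to\langle h,p\rangle$ for every $h\in\mathcal{L}^2(\mathcal{X}_a\times\mathcal{Z}_a)$; applying this to $h=f_{xz}$ yields $\Phi_{P_n}\{f_{xz}\}\to\Phi_P\{f_{xz}\}$. Hence $\Phi_P\{f_{xz}\}$ is weakly continuous in $P$. The only subtle point — and the one I would flag as the main obstacle — is the passage from the distribution $P$ to its density $p$ via \eqref{eq: distribution}, which implicitly requires that every $P\in\Omega$ be absolutely continuous with respect to Lebesgue measure; this, however, is already built into the formulation in Section II-\ref{subsection: Transmit Signal Model} and the embedding $\Omega\subset\mathcal{L}^2(\mathcal{X}_a\times\mathcal{Z}_a)$ established in Lemma~\ref{lemma: Omega set}.
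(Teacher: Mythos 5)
Your proof is correct, and it takes a genuinely different (and more self-contained) route than the paper. The paper's own proof is a one-line citation of Theorem~1 in \cite{Dytso2018}, which concerns the weak topology on probability measures, where continuity of $P\mapsto\int f\,\mathrm{d}P$ for continuous bounded $f$ is essentially definitional; the paper then only checks that $f_{xz}$ is continuous and bounded (which follows from $y=L_y>0$). You instead work inside the Hilbert-space framework that the paper itself sets up in Lemma~\ref{lemma: Omega set} and Appendix~\ref{app: omega set}: you identify $\Phi_P\{f_{xz}\}$ with the inner product $\langle f_{xz},p\rangle$ on $\mathcal{L}^2(\mathcal{X}_a\times\mathcal{Z}_a)$, check $f_{xz}\in\mathcal{L}^2$ via boundedness on a finite-measure domain, get norm-continuity from Cauchy--Schwarz, and conclude weak continuity directly from the paper's own definition of the weak topology (the coarsest topology making all norm-continuous linear functionals continuous). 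This is arguably the more faithful argument for the lemma \emph{as stated}, since the statement refers to ``the weak topology on a Hilbert space'' rather than weak convergence of measures; it also makes transparent that continuity of $f_{xz}$ is not actually needed for this step --- boundedness and the finite measure of $\mathcal{X}_a\times\mathcal{Z}_a$ suffice. Your closing caveat about absolute continuity of $P$ with respect to Lebesgue measure is well placed: it is indeed assumed implicitly through \eqref{eq: distribution} and the embedding $\Omega\subset\mathcal{L}^2(\mathcal{X}_a\times\mathcal{Z}_a)$, and it is the point where the Hilbert-space weak topology and the measure-theoretic weak topology used in \cite{Dytso2018} are being tacitly identified.
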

\begin{proof}
    {\color{black}
    Based on Theorem 1 in \cite{Dytso2018}, the linear functional $\Phi_P\{ f_{xz} \}$ is weakly continuous on $\Omega$ if $f_{xz}$ is a continuous and bounded function.
    From Key Observation 1, $f_{xz}$ is a continuous function and bounded since $y=L_y$.}
\end{proof}

    {\color{black}Recall the mathematical equivalence between the WPT problem considered in this paper, where we optimise the transmit power distribution across a continuous aperture antenna and the problem of determining the amplitude-constrained capacity-achieving input distribution of a channel studied in, e.g., \cite{smith1971information,Morsi2020,Dytso2018}.}
    For the proof of Proposition \ref{proposition: Necessary and Sufficient Constraints} we apply Theorem 9 in \cite{Dytso2018}.
    The convexity and compactness of $\Omega$ is established in Lemma \ref{lemma: Omega set}.
    The continuity of $J(P)$ is shown in Lemma \ref{lemma: continuous J}.
    The concavity of $J(P)$ follows from the linearity of $J(P)$ in $P$.
    Formulating the optimality condition for $P^*$ requires defining the Gâteaux derivative of $J(P)$.
    \begin{definitionappendix}\label{definition : Gâteaux derivative}
    The Gâteaux derivative of the function $J: \Omega \rightarrow \mathbb{R}$ at $P$ in the direction of $Q$ with $P,Q \in \Omega$ is defined as follows \cite{Dytso2018}
    \begin{align}\label{eq: Gateaux derivative 1}
        \Delta_{Q} J(P) = \underset{\theta \rightarrow 0}{\mathrm{lim}} \, \frac{ J( (1-\theta)P + \theta Q) - J(P) }{\theta}.
    \end{align}
\end{definitionappendix}
    Moreover, the Gâteaux derivative of $J(P)$ in \eqref{eq: Gateaux derivative 1} can be expressed as \cite{Dytso2018}
    \begin{align}\label{eq: Gateaux derivative 2}
        \Delta_{Q} J(P) = J(Q) - J(P),
    \end{align}
    since it is a linear functional of $P$.
    Theorem 9 in \cite{Dytso2018} establishes that if $\Omega$ is a convex set and $J(P)$ is continuous, concave, and Gâteaux differentiable, then $J(P)$ attains the maximum value at $P^* \in \Omega$ if and only if $\Delta_{P}J(P^*) \leq 0, \forall P \in \Omega$.
    Applying \eqref{eq: Gateaux derivative 2} and cancelling the terms that are independent of the distributions $P$ and $P^*$, we obtain
    \begin{align}\label{eq: J_P}
        \Delta_{P}J(P^*) = \sum_{\mathcal{R}^\mathrm{crit}} \,
        \lambda_{xz} \Phi_{P}\{ f_{xz} \}
        - \sum_{\mathcal{R}^\mathrm{crit}} \,
        \lambda_{xz} \Phi_{P^*}\{ f_{xz} \}.
    \end{align}
    We can further enforce in \eqref{eq: J_P} the constraints on $\lambda_{xz}$ and $\Phi_{P}\{ f_{xz} \}$ for the optimal solution $P^*$ in \eqref{eq: Complementary Slackness}. Then, $\lambda_{xz}$ is only non-zero for critical receiver locations w.r.t. the optimal distribution $P^*$.
    Furthermore, the objective value for the critical receiver positions is the constant $\Phi_{P^*}\{ f_{xz} \} = m$.
    By expanding the expectation operator in \eqref{eq: J_P}, the optimality condition $\Delta_{P}J(P^*) \leq 0$ can be expressed as 
    \begin{align}\label{eq: conditions}
        &\iint_{\mathcal{X}_a \times \mathcal{Z}_a} \,
        \frac{
        \sum_{\mathcal{R}^\mathrm{crit}} \,
        f_{xz}(a_x,a_z) \lambda_{xz}
        }
        {
        \sum_{\mathcal{R}^\mathrm{crit}} \,
        \lambda_{xz}
        }
        \, \,\mathrm{d}P(a_x,a_z) \nonumber \\ 
        = 
        &\iint_{\mathcal{X}_a \times \mathcal{Z}_a} \,
        \Bar{f}(a_x,a_z) \, \,\mathrm{d}P(a_x,a_z) \leq
        m,
    \end{align}
    which is satisfied with equality for the optimal distribution of power $P^*$.
    
\section{}\label{app: Optimality of distribution iff two equations hold}
    If the conditions \eqref{eq: optimality condition 1} and \eqref{eq: optimality condition 2} are satisfied, then conditions \eqref{eq: nec und suf} and \eqref{eq: bar f} in Proposition \ref{proposition: Necessary and Sufficient Constraints} are satisfied.
    In the following, we show that the converse holds as well, i.e., if Proposition \ref{proposition: Necessary and Sufficient Constraints} holds, then Corollary \ref{corollary: Optimality of distribution iff two equations hold} holds as well.
    We proceed by contradiction and assume that Proposition \ref{proposition: Necessary and Sufficient Constraints} holds, however, \eqref{eq: optimality condition 1} does not hold. 
    In this case, $\exists (a_{x1}, a_{z1}) \in \mathcal{X}_a \times \mathcal{Z}_a$, such that $\Bar{f}(a_{x1}, a_{z1}) > m$.
    Suppose, $(a_{x1}, a_{z1})$ is the only transmit antenna, i.e., $(a_{x1}, a_{z1})$ is allocated all the power.
    Then \eqref{eq: nec und suf} yields
    \begin{align}\label{eq: violation 1}
        \iint_{\mathcal{X}_a \times \mathcal{Z}_a} \, \Bar{f}(a_x,a_z) \, \,\mathrm{d}P(a_x,a_z) = \Bar{f}(a_{x1}, a_{z1}) > m,
    \end{align}
    which contradicts \eqref{eq: nec und suf} in Proposition 1. Consequently, if \eqref{eq: nec und suf} holds, then \eqref{eq: optimality condition 1} holds as well. \\
    Next, suppose Proposition \ref{proposition: Necessary and Sufficient Constraints} holds, however, \eqref{eq: optimality condition 2} does not.
    Define a subset $\mathcal{E}_0(P^*) \subset \mathcal{E}(P^*)$ such that $P^*(\mathcal{E}_0(P^*))=0$.
    Observe that $\Bar{f}(a_{x}, a_{z})$ is continuous on $\mathbb{R}^2$ since the function's numerator is a weighted sum of continuous functions and the denominator is always a positive constant.
    Assume there exists a point of increase $(a_{x1}, a_{z1}) \in \mathcal{E}(P^*)\setminus\mathcal{E}_0(P^*)$ such that $\Bar{f}(a_{x1}, a_{z1}) > m$.
    For any $\delta >0$ such that $\Bar{f}(a_{x1}, a_{z1}) - \delta >m$, there exists an open ball $\mathcal{B}$ centred at $(a_{x1}, a_{z1})$ such that $\Bar{f}(a_{x}, a_{z}) > \Bar{f}(a_{x1}, a_{z1})-\delta > m, \forall (a_{x}, a_{z}) \in \mathcal{B}$, which follows from the continuity of $\Bar{f}(a_{x}, a_{z})$.
    Since the open ball $\mathcal{B}$ contains the point of increase $(a_{x1}, a_{z1}) \in \mathcal{E}(P^*)\setminus\mathcal{E}_0(P^*)$, it follows that $P^*(\mathcal{B})>0$. 
    This contradicts Proposition \ref{proposition: Necessary and Sufficient Constraints}, i.e., $m = \iint_{\mathcal{X}_a \times \mathcal{Z}_a} \, \Bar{f}(a_x,a_z)
    \, \,\mathrm{d}P^*(a_x,a_z)$, since
    \begin{align}\label{eq: violation 2}
        m = &\underbrace{\iint_{(\mathcal{E}(P^*)\setminus\mathcal{E}_0(P^*)) \cap \mathcal{B}} \, \Bar{f}(a_x,a_z) \, \,\mathrm{d}P^*(a_x,a_z)}_{>\, m P^*(\mathcal{B})} \nonumber \\ 
        &+ 
        \underbrace{\iint_{(\mathcal{E}(P^*)\setminus\mathcal{E}_0(P^*))  \setminus \mathcal{B} } \, \Bar{f}(a_x,a_z) \, \,\mathrm{d}P^*(a_x,a_z)}_{= \,m(1-P^*(\mathcal{B}))} \nonumber \\
        &+
        \underbrace{\iint_{\mathcal{E}_0(P^*)} \, \Bar{f}(a_x,a_z) \, \,\mathrm{d}P^*(a_x,a_z)}_{=0} 
        > m.
    \end{align}
    From this contradiction we have shown that $\Bar{f}(a_x,a_z)=m, \forall (a_x,a_z) \in \mathcal{E}(P^*)\setminus\mathcal{E}_0(P^*)$. 
    Furthermore, since $P^*(\mathcal{E}_0(P^*)) = 0$, $\mathcal{E}_0(P^*)$ has no interior points.
    Any open set of $\mathbb{R}^2$ containing a point in $\mathcal{E}_0(P^*)$ intersects points in $\mathcal{E}(P^*)\setminus\mathcal{E}_0(P^*)$ and therefore, every point in $\mathcal{E}_0(P^*)$ is in the closure of $\mathcal{E}(P^*)\setminus\mathcal{E}_0(P^*)$.
    Moreover, since $\Bar{f}(a_x,a_z)$ is continuous and bounded, it must also be constant on the closure of $\mathcal{E}(P^*)\setminus\mathcal{E}_0(P^*)$. This implies $\Bar{f}(a_x,a_z) = m, \forall (a_{x}, a_{z}) \in \mathcal{E}(P^*)$.
    Consequently, if \eqref{eq: nec und suf} holds, then \eqref{eq: optimality condition 2} holds as well.
    Thus, \eqref{eq: optimality condition 1} and \eqref{eq: optimality condition 2} are necessary and sufficient conditions for $P^*(a_x,a_z)$.

\section{}\label{app: Nowhere dense set}
The proof is similar to Theorem 12 in \cite{Dytso2018}.\\
The characterisation of the support $\mathcal{E}$ requires the Identity Theorem for real-analytic functions, which is stated, e.g., in \cite{Krantz2002}. The Identity Theorem states that if two real-analytic functions $f_1$ and $f_2$, which are defined on an open set $U \subseteq \mathbb{R}^n$, take the same value on an open subset $W \subseteq U$, i.e., $f_1(w) = f_2(w), \forall w \in W$, then $f_1(u) = f_2(u), \forall u \in U$.

Next, we show that $\Bar{f}(a_x,a_z)$ is not constant in $\mathcal{X}_a \times \mathcal{Z}_a$.
Thereby, it suffices to show that the partial derivative of $\Bar{f}(a_x,a_z)$ w.r.t. $a_x$, with $a_z = 0$, is not zero $\forall (a_x,a_z) \in \mathcal{X}_a \times \mathcal{Z}_a$. Note that
    \begin{align}
        \frac{\partial \Bar{f}(a_x,a_z)}{\partial a_x} =
        \frac{\sum_{\mathcal{R}_\mathrm{crit}} \,
        \dfrac{2\left(x-a_x\right)}{\left(\left(z-a_z\right)^2+\left(x-a_x\right)^2+L_y^2\right)^2} \lambda_{xz} }
        {\sum_{\mathcal{R}_\mathrm{crit}} \,
        \lambda_{xz} },
    \end{align}
    is only zero for $x=a_x$ and non-zero otherwise. Consequently, $\Bar{f}(a_x,a_z)$ is not constant in $\mathcal{X}_a \times \mathcal{Z}_a$.

    Next, we show that $\Bar{f}(a_x,a_z)$ is real-analytic in $\mathbb{R}^2$ and thus, in $\mathcal{X}_a \times \mathcal{Z}_a$.
    Let $\Bar{f}(a_x,a_z) = kn(a_x,a_z) $, where $k = \left(\sum_{\mathcal{R}_{\mathrm{crit}}} \, \lambda_{xz} \right)^{-1}$ is a positive constant for all $(a_x,a_z)$ and $n(a_x,a_z) = \sum_{\mathcal{R}_{\mathrm{crit}}} \, f_{xz}(a_x,a_z) \lambda_{xz}$. The function $\Bar{f}(a_x,a_z)$ is real-analytic in $(a_x,a_z)$ if $n(a_x,a_z)$ is an analytic function.
    The function $n(a_x,a_z)$ is real-analytic in $(a_x,a_z)$ if $f_{xz}(a_x,a_z) \lambda_{xz}$ is real-analytic for fixed $(x,z)$. Thereby, it suffices to show that $f_{xz}(a_x,a_z)$ is real-analytic since $\lambda_{xz}$ is some non-negative constant scaling $f_{xz}(a_x,a_z)$.
    
    Note that the composition of real-analytic functions is real-analytic \cite{Krantz2002}. By rewriting $f_{xz}(a_x,a_z)$ as $f(w_{xz}(a_x,a_z)) = 1 / (L_y^2 + w_{xz}(a_x,a_z))$, where $w_{xz}(a_x,a_z) = (x-a_{x})^2 + (z-a_{z})^2 \geq 0$, it is sufficient to show that both $w_{xz}(a_x,a_z)$ and $f(w) = 1/({w + L_y^2}) > 0$ are real-analytic.
    By definition, polynomials are real-analytic functions \cite{Krantz2002}, thus $w_{xz}(a_x,a_z)$ is real-analytic.
    Next, the real function $f(w)$ is shown to be analytic by proving that it's complex extension $f(r_1 + \mathrm{j}r_2) = u(r_1,r_2) + \mathrm{j} v(r_1,r_2) = (r_1+ L_y^2)/((r_1+L_y)^2 + r_2^2) - \mathrm{j} r_2/((r_1+L_y)^2 + r_2^2)$ is holomorphic. 
    Function $f(w)$ is holomorphic and thus, analytic, if it satisfies the Cauchy-Riemann (CR) equations \cite{Rudin87}.
    The CR equations are given by \cite{Rudin87}
    \begin{align}\label{eq: cr equations 1}
        \frac{\partial u(r_1,r_2)}{\partial r_1} =  \frac{\partial v(r_1,r_2)}{\partial r_2},
    \end{align}
    \begin{align}\label{eq: cr equations 2}
        \frac{\partial u(r_1,r_2)}{\partial r_2} =  - \frac{\partial v(r_1,r_2)}{\partial r_1},
    \end{align}
    Clearly, the CR equations are satisfied for the complex extension  $f(r_1 + \mathrm{j}r_2)$ since
    \begin{align}\label{eq: cr equations - solved 1}
        \frac{\partial u(r_1,r_2)}{\partial r_1} =  \frac{\partial v(r_1,r_2)}{\partial r_2} = \frac{r_2^2 - (L_y^2 +r_1)^2}{((L_y^2+r_1)^2+r_2^2)^2},
    \end{align}
    \begin{align}\label{eq: cr equations - solved 2}
        \frac{\partial u(r_1,r_2)}{\partial r_2} =  - \frac{\partial v(r_1,r_2)}{\partial r_1} = \frac{-2r_2(L_y^2+r_1)}{((L_y^2+r_1)^2+r_2^2)^2},
    \end{align}
    which proves that the complex extension $f(r_1 + \mathrm{j}r_2)$ is analytic.
    Consequently, $\Bar{f}(a_x,a_z)$ is real-analytic in $\forall(a_x,a_z) \in \mathbb{R}^2$ and thus, in $\forall(a_x,a_z) \in \mathcal{X}_a \times \mathcal{Z}_a$.
    
    Suppose that $\mathcal{E}(P^*)$ is not a nowhere dense set in $\mathcal{X}_a \times \mathcal{Z}_a$. 
    Consequently, there exists some non-empty open set $\mathcal{E}_2 \subset \mathcal{X}_a \times \mathcal{Z}_a$ such that $\mathcal{E}_2 \cap \mathcal{E}(P^*)$ is dense in $\mathcal{X}_a \times \mathcal{Z}_a$.
    From \eqref{eq: optimality condition 2}, $\Bar{f}(a_x,a_z)=m$ is constant on $\mathcal{E}(P^*)$ and thus, constant on $\mathcal{E}_2 \cap \mathcal{E}(P^*)$.
    Since $\Bar{f}(a_x,a_z)$ is real-analytic, the Identity Theorem of real-analytic functions applies and thus, $\Bar{f}(a_x,a_z)$ must be constant on $\mathcal{X}_a \times \mathcal{Z}_a$. However, $\Bar{f}(a_x,a_z) \neq \text{const.}, \forall (a_x,a_z) \in \mathcal{X}_a \times \mathcal{Z}_a$. From this contradiction, $\mathcal{E}(P^*) \subset \mathcal{X}_a \times \mathcal{Z}_a$ must be a nowhere dense set of $\mathcal{X}_a \times \mathcal{Z}_a$.
    
\section{}\label{app: Measure zero}
Suppose $\mathcal{E}(P^*)$ is of positive Lebesgue measure.
For any open subset $\mathcal{V} \subset \mathcal{E}(P^*)$, the function $\Bar{f}(a_x,a_z)$ is constant on $\mathcal{V}$ since it is constant on $\mathcal{E}(P^*)$.
Since $\Bar{f}(a_x,a_z)$ is real-analytic as shown in Appendix \ref{app: Nowhere dense set}, the Identity Theorem, e.g., \cite{Krantz2002} of real-analytic functions applies and thus, $\Bar{f}(a_x,a_z)$ must be constant on $\mathcal{X}_a \times \mathcal{Z}_a$.
However, as shown in Appendix \ref{app: Nowhere dense set}, $\Bar{f}(a_x,a_z) \neq \text{const.}, \forall (a_x,a_z) \in \mathcal{X}_a \times \mathcal{Z}_a$. From this contradiction, $\mathcal{E}(P^*) \subset \mathcal{X}_a \times \mathcal{Z}_a$ must have Lebesgue measure zero.
    
\section{}\label{app: 1D result}
The proof is also analogous to \cite{smith1971information,Morsi2020}. \\
    This proof requires the application of the Bolzano-Weierstrass theorem which states that in the set of real numbers any infinite bounded sequence of real numbers has a convergent subsequence and the limit of that subsequence is an accumulation point of the original sequence \cite{Rudin76}.
    In the following $\Bar{f}(a_l)$ is a function defined over a one-dimensional set, i.e., $\Bar{f}: \mathcal{L}_a \rightarrow \mathbb{R}$, with $\mathcal{L}_a \subset \mathcal{X}_a \times \mathcal{Z}_a$ and $\Tilde{\mathcal{E}}(\Tilde{P}^*)$ is the optimal set of transmit antenna positions, which is bounded, i.e., $\Tilde{\mathcal{E}}(\Tilde{P}^*) \subset \mathcal{L}_a$.
    Suppose the set $\Tilde{\mathcal{E}}(\Tilde{P}^*)$ is not finite.
    From the Bolzano-Weierstrass theorem, it follows that $\Tilde{\mathcal{E}}(\Tilde{P}^*)$ has an accumulation point $q$.
    Based on \eqref{eq: optimality condition 2}, define the function $s(a_l) = m - \Bar{f}(a_l)$, which is equal to zero $\forall a_l \in \Tilde{\mathcal{E}}$. Moreover, $s(a_l)$ is analytic since it is the sum of analytic functions \cite{Krantz2002}. 
    By assuming $\Tilde{\mathcal{E}}(\Tilde{P}^*)$ is not finite, $s$ has infinitely many zeros.
    Moreover, the accumulation point $q$ is also a zero of function $s(a_l)$, i.e., $s(q)=0$, since $s(a_l)$ is a continuous function, which is implied by the analycity of $s(a_l)$.
    Then, from the Identity Theorem, e.g. \cite{Krantz2002}, $s(a_l)$ coincides with zero in the entire set $\mathcal{L}_a$ or equivalently
        \begin{align}\label{eq: Incorrect}
            \Bar{f}(a_l) = m, \forall a_l \in \mathcal{L}_a,
        \end{align}
    which implies that $\Bar{f}$ is constant on $\mathcal{L}_a$.
    However, $\Bar{f}$ is not constant on $\mathcal{L}_a$ as shown in Appendix \ref{app: Nowhere dense set}.
    Consequently, $\Tilde{\mathcal{E}}(\Tilde{P}^*)$ must be a finite set.

\bibliographystyle{ieeetr}
\bibliography{refs}

\newpage

\begin{IEEEbiography}[{\includegraphics[width=1in,height=1.25in,clip,keepaspectratio]{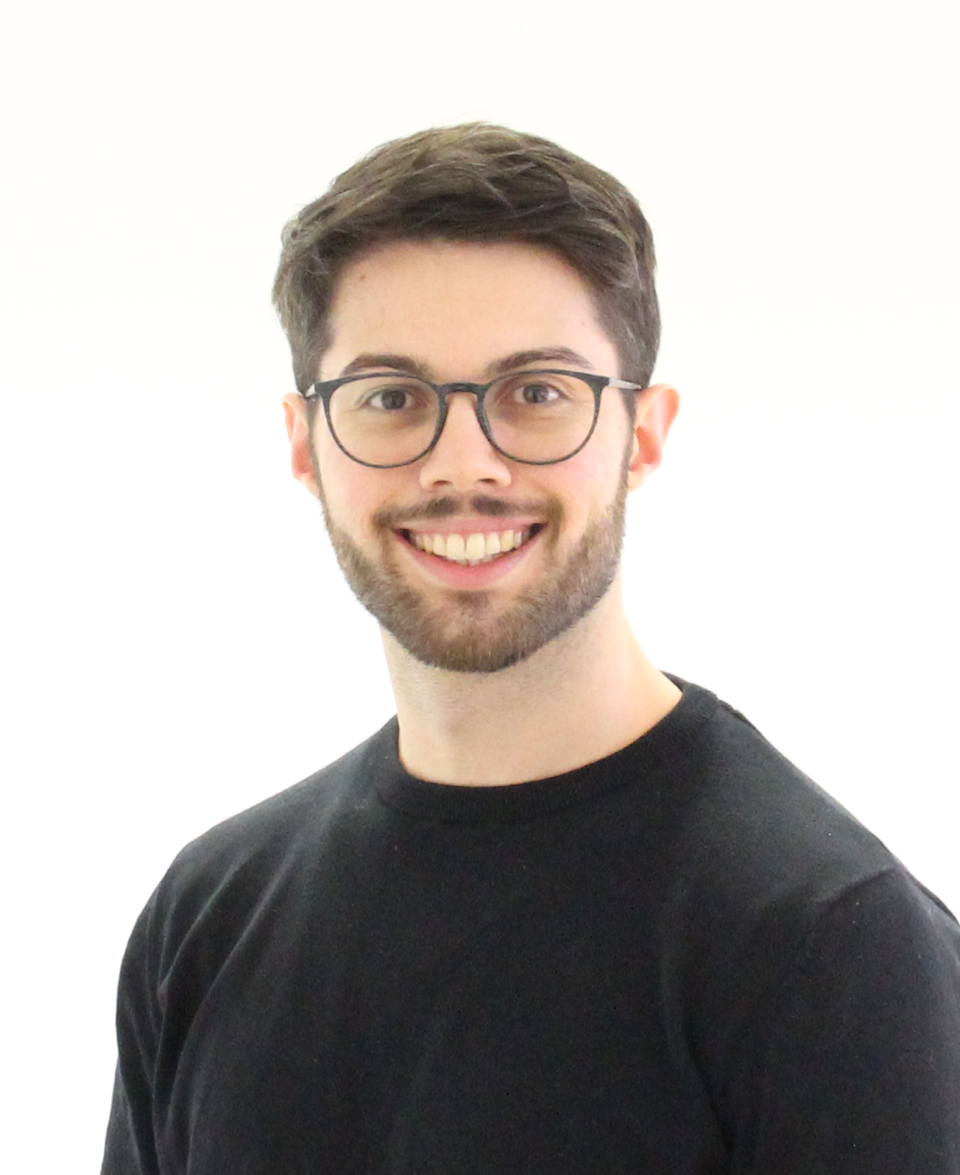}}]{KENNETH M. MAYER } $\,$ (Graduate Student Member, IEEE) received the B.Sc. degree in Medical Engineering (2019) and the M.Sc. degree in Advanced Signal Processing and Communications Engineering (2021) from Friedrich-Alexander University (FAU) Erlangen-Nuremberg, Germany. He is currently pursuing the Ph.D. degree at the Institute for Digital Communications at FAU focusing on wireless communications and wireless power transfer.
\end{IEEEbiography}
\begin{IEEEbiography}[{\includegraphics[width=1in,height=1.25in,clip,keepaspectratio]{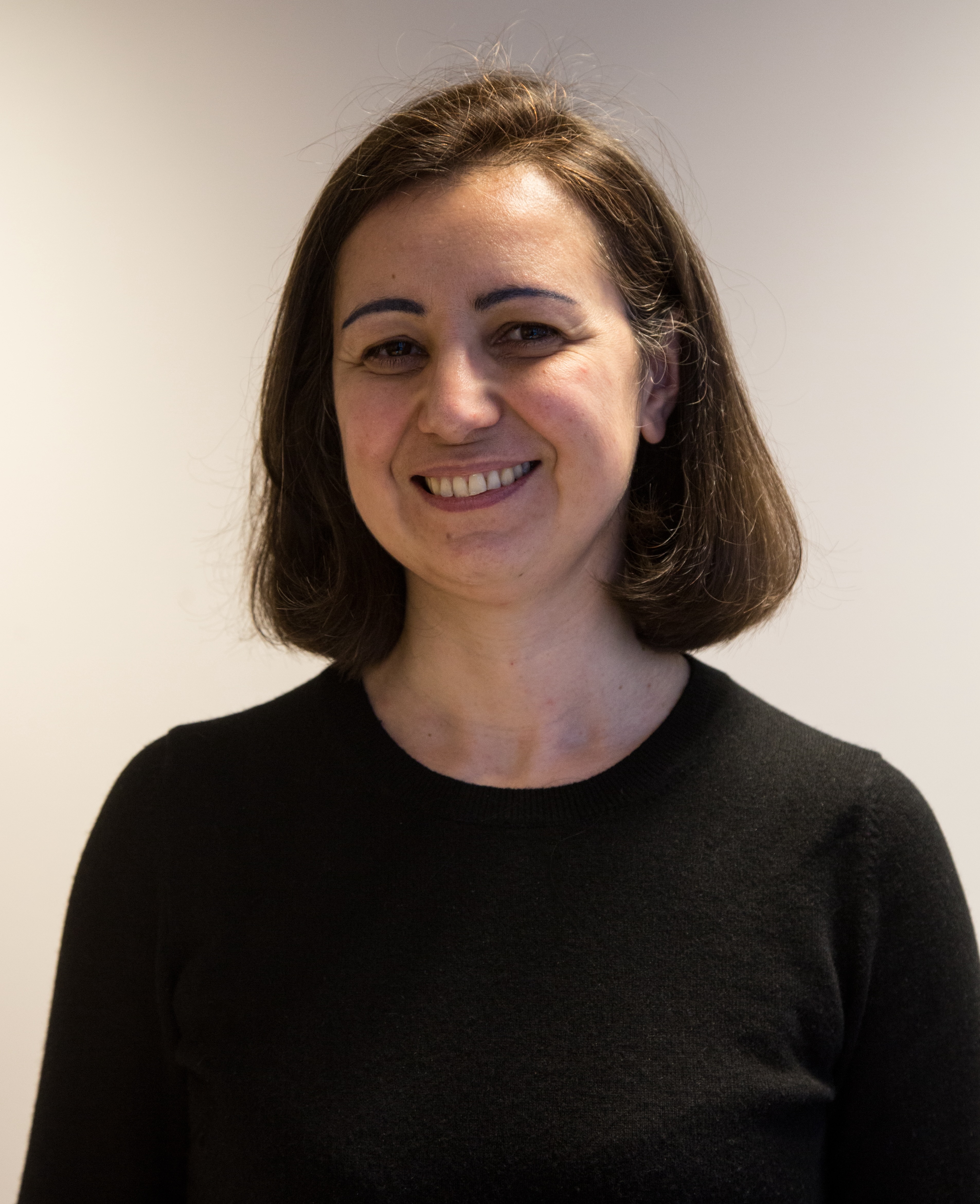}}]{LAURA COTTATELLUCCI} $\,$ (Member, IEEE) received the M.Sc. degree in electrical engineering from La Sapienza University of Rome, Italy, the Ph.D. degree from Technical University of Vienna, Austria (2006), and the Habilitation degree from University of Nice-Sophia Antipolis, France. Since December 2017 and September 2021, she has been a professor for digital communications at Friedrich Alexander Universität (FAU) Erlangen-Nuremberg, Germany, and an Adjunct Professor at EURECOM, France, respectively. She worked for 5 years (1995-2000) at Telecom Italia in the design of telecommunications networks and as Senior Researcher at Forschungszentrum Telekommunikation Wien, Austria (2000-2005). She was a Research Fellow at INRIA, France, in the quarter October-December 2005, and at the University of South Australia in 2006. From December 2006 to November 2017, she was an Assistant Professor at EURECOM, France, where she was also an Adjunct Professor (2018-2019). She was an Elected Member of the IEEE Technical Committee on Signal Processing for Communications and Networking (2017-22), served as an Associate Editor for the IEEE TRANSACTIONS ON COMMUNICATIONS (2015-2020) and the IEEE TRANSACTIONS ON SIGNAL PROCESSING (2016- 2020). Since September 2022 she has been senior member of the editorial board of IEEE \textit{Signal Processing Magazine}.  Her research interests are in the field of communication and information theory and signal processing for wireless communications, satellite, and complex networks.
\end{IEEEbiography}
\newpage
\begin{IEEEbiography}[{\includegraphics[width=1in,height=1.25in,clip,keepaspectratio]{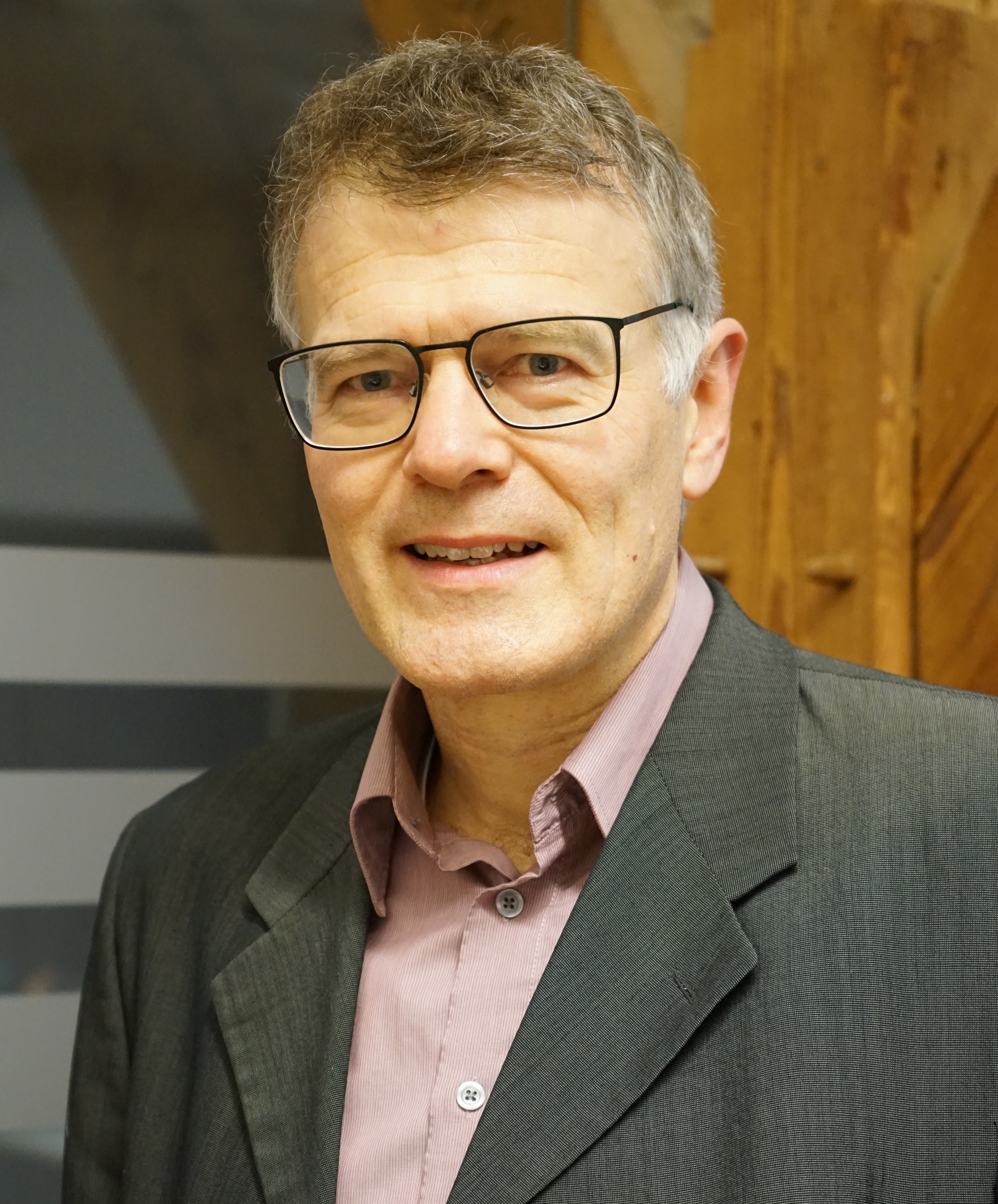}}]{ROBERT SCHOBER} $\,$ (Fellow, IEEE) received the Diplom (Univ.) and the Ph.D. degrees in electrical engineering from Friedrich-Alexander University of Erlangen-Nuremberg (FAU), Germany, in 1997 and 2000, respectively. From 2002 to 2011, he was a Professor and Canada Research Chair at the University of British Columbia (UBC), Vancouver, Canada. Since January 2012 he is an Alexander von Humboldt Professor and the Chair for Digital Communication at FAU. His research interests fall into the broad areas of Communication Theory, Wireless and Molecular Communications, and Statistical Signal Processing.
 
Robert received several awards for his work including the 2002 Heinz Maier Leibnitz Award of the German Science Foundation (DFG), the 2004 Innovations Award of the Vodafone Foundation for Research in Mobile Communications, a 2006 UBC Killam Research Prize, a 2007 Wilhelm Friedrich Bessel Research Award of the Alexander von Humboldt Foundation, the 2008 Charles McDowell Award for Excellence in Research from UBC, a 2011 Alexander von Humboldt Professorship, a 2012 NSERC E.W.R. Stacie Fellowship, a 2017 Wireless Communications Recognition Award by the IEEE Wireless Communications Technical Committee, and the 2022 IEEE Vehicular Technology Society Stuart F. Meyer Memorial Award. Furthermore, he received numerous Best Paper Awards for his work including the 2022 ComSoc Stephen O. Rice Prize and the 2023 ComSoc Leonard G. Abraham Prize. Since 2017, he has been listed as a Highly Cited Researcher by the Web of Science. Robert is a Fellow of the Canadian Academy of Engineering, a Fellow of the Engineering Institute of Canada, and a Member of the German National Academy of Science and Engineering.
 
He served as Editor-in-Chief of the IEEE TRANSACTIONS ON COMMUNICATIONS, VP Publications of the IEEE Communication Society (ComSoc), ComSoc Member at Large, and ComSoc Treasurer. Currently, he serves as Senior Editor of the PROCEEDINGS OF THE IEEE and as ComSoc President.
\end{IEEEbiography}

\end{document}